\definecolor{bluegray}{rgb}{0.4, 0.6, 0.8}
\definecolor{turquoise}{rgb}{0.2, 0.7, 0.6}
\def\on{\operatorname}
\def\sf{\mathsf}
\title{Twisted simplicial distributions}
\author{Cihan Okay\footnote{cihan.okay@bilkent.edu.tr}} 
\author{Walker H. Stern\footnote{walker@walkerstern.com }}
\affil{{\small{Department of Mathematics, Bilkent University, Ankara, Turkey}}}
\date{\today}
\begin{document}
  \maketitle


\begin{abstract}
	We introduce a theory of twisted simplicial distributions on simplicial principal bundles, which allow us to capture Bell's non-locality, 
{and}	
	the more general notion of quantum contextuality. We leverage the classical theory of simplicial principal bundles, as well as structures on categories of such bundles, to provide powerful computational tools for analyzing twisted distributions in terms of both direct constructions in simplicial sets and techniques from homological algebra. We use these techniques to analyze our key examples: quantum distributions and operator-theoretic polytopes used in the classical simulation of quantum computation.   
\end{abstract}

  \tableofcontents

\section{Introduction}
 Bell's notion of non-locality and its generalization quantum contextuality are fundamental features of quantum theory \cite{bell64,KS67}.  
In \cite{okay2022simplicial}, the first author introduced a new, simplicial approach for contextuality based on simplicial distributions. 
This approach subsumes two earlier prominent approaches: the sheaf-theoretic approach of Abramsky--Brandenburger \cite{abramsky2011sheaf} and the cohomological approach of Okay et al. \cite{Coho}.   
The theory of simplicial distributions combines the sheaf-theoretic and cohomolog{ical approaches by modeling measurements and their outcomes as simplicial sets. 
One novel feature of the theory is that contextuality proofs can be done using arguments that rely on topological intuition, such as gluing and extending distributions \cite{kharoof2023topological}. In this framework, contextuality can be characterized using homotopy \cite{HomVert}, and categorical methods can be easily applied  \cite{barbosa2023bundle}.
   
In this work, we introduce simplicial distributions that are 
{``twisted"}
by local algebraic constraints --- a notion that has immediate applications to studying naturally appearing twisted quantum probabilities. Our framework is based on the theory of principal bundles from simplicial homotopy theory. The simplicial theory is analogous to the topological theory, \emph{but is in some ways more computationally tractable}. 
It is also of interest to note that the general simplicial principal bundles which appear in our work are also of interest elsewhere in mathematical physics, as they provide explicit models of principal $\infty$-bundles \cite{NSS}. Some of the many applications of principal $\infty$-bundles to mathematical physics, particularly string theory, are surveyed in \cite{bunk2023inftybundles}.
    
The theory of simplicial distributions introduced in \cite{okay2022simplicial} models contextuality by representing the measurements and outcomes of an experimental setting by simplicial sets.  
A simplicial distribution is a simplicial set map
$$
p:X\to D(Y).
$$  
Here, $D$ denotes the distribution monad \cite{jacobs2010convexity}, which also lifts to the category of simplicial sets \cite{kharoof2022simplicial}. Explicitly, an $n$-simplex of $D(Y)$ is simply a probability distribution on $Y_n$. 
In effect, a simplicial distribution is a collection of distributions $\set{p_x\in D(Y_n):x\in X_n}_{n\geq 0}$ compatible under the simplicial structure maps.
This compatibility condition can be seen as a topological version of the physical {\it non-signaling conditions}.   

Twisted distributions are defined on principle $K$-bundles for a simplicial group $K$. 
We will model principal bundles using twisted products. 
A twisting $\eta$ consists of a collection of functions $\set{\eta_n:X_n\to K_{n-1}}_{n\geq 0}$.
  The twisted product $K\times_\eta X$ has the same set of simplices as the Cartesian product $K\times X$, and the simplicial structure maps are the same, except that the $d_0$-face is twisted by  $\eta$. 
A typical case of interest is when $K$ is the nerve space $NH$ of an Abelian group, in which case 
the twisting is completely determined by a $2$-cocycle $\gamma:X_2 \to H$. In dimension $2$ the twisting at the $d_0$-face is given by 
$$
d_{{0}}(h_1,h_2;x) =  (h_2+\gamma(x),d_0(x)). 
$$  
Projecting a twisted product gives a principal bundle, denoted by $\pi:K\times_\eta X\to X$.
Following
the interpretation in \cite{barbosa2023bundle}, the base space $X$ represents the space of measurements as before {and t}he fiber $x^*(E)$ over a simplex $x:\Delta^n\to X$ represents the space of outcomes for the measurement $x$.
A twisted simplicial distribution on $\pi$ is a simplicial set map $p:X\to D(E)$ making the following diagram commute:
  $$
  \begin{tikzcd}
  	& D(E) \arrow[d,"{D(\pi)}"] \\
  	X \arrow[ru,"p"] \arrow[r,"\delta"'] & D(X)
  \end{tikzcd}
  $$  
where $\delta:X\to D(X)$ sends a simplex to the delta-distribution on that simplex. More explicitly, a twisted simplicial distribution $p$ consists of a family of distributions $\set{p_x\in D(K_n):\, x\in X_n}_{n\geq 0}$ compatible with the face maps. In dimension $2$, the $d_0$-face will be twisted as described above.  
To work out the effect of the twisting 
 for higher dimensions, we provide an explicit formula in Lemma \ref{lem:twist_fxn_norm_cocycle}. 
  Writing $\sDist_\eta(X)$ for the set of $\eta$-twisted distributions, we show in Corollary \ref{cor:twiste dist is a polytope} that this set has the structure of a convex polytope with finitely many vertices under a mild finiteness condition on the measurement space. 
Twisted distributions for higher dimensions and the corresponding vertex enumeration problem are unexplored territory, including quantum mechanical distributions and polytopes used in the classical simulation of quantum computation. Key examples include, for instance, the $\Lambda$ polytopes of \cite{zurel2020hidden} and the toy example \cite{okay2022mermin} studied in Section \ref{sec:mermin polytope}.

Principal bundles over $X$ with fiber $K$ can be assembled into a groupoid $\sf{Bun}_K(X)$ whose morphisms are given by isomorphisms of bundles.  
Moreover, given two bundles $E$ and $F$, there is a tensor product $E\otimes_K F$, which equips the category with the structure of a symmetric monoidal groupoid (Theorem \ref{thm:bun is SMC}).
Moreover, this groupoid is a commutative $2$-group, which means that every principal bundle has an inverse up to isomorphism under the tensor product. For twisted products the inverse of $K\times_\eta X$ is given by $K\times_{\eta^{-1}}X$.
Using the tensor product, we can define a convolution product of twisted distributions that takes two distributions $p$ and $q$ on the bundles and produces a distribution $p\ast q$ on their tensor product.  
 The functor
  $\sDist: \sf{Bun}_K(X) \to \catSet$  
  {which sends a principal $K$-bundle to its set of simplicial distributions} is lax symmetric monoidal with respect to the tensor product of bundles and Cartesian product of sets (Proposition \ref{pro:sDist is lax SMF}). Then, a monoidal version of the Grothendieck construction \cite{moellervasilakopoulou,Omongroth} produces a symmetric monoidal category
  $$
  \int_{\sf{Bun}_K(X)} \sDist
  $$ 
  whose objects are twisted simplicial distributions on bundles with the monoidal structure given by the convolution product (Corollary \ref{cor:Grothendieck construction}). In the rest of the paper we heavily rely on the categorical structures identified on bundles and distributions defined on them. 
 
Our first key result is the identification of twisted distributions as equivariant simplicial distributions (Theorem \ref{thm:twisted as equivariant}).
We consider a measurement space $X$ and an outcome space $Y$ both equipped with a $K$-action. 
Then $D(Y)$ also inherits a $K$-action given by the convolution product. 
The set $\sDist_K(X,Y)$ of $K$-equivariant simplicial distributions consists of simplicial set maps $X\to D(Y)$ that are $K$-equivariant. 

\begin{thm*}
There is a convex bijection
	\[
	\sDist_\eta(X)\cong \sDist_K(K\times_{\eta^{-1}} X, K).
	\] 
\end{thm*}

The main examples of twisted distributions come from quantum distributions. There is a quantum version of the probability monad where the semiring $\RR_{\geq 0}$ is replaced by projectors $\Proj(\hH)$ on a finite-dimensional Hilbert space $\hH$. A projective measurement on a set $X$ is given by a function $\Pi:X\to \Proj(\hH)$ with finite support that sums to the identity operator.
The Born rule of quantum theory provides a way to extract probabilities from a projective measurement. In effect, this allows us to represent quantum states as simplicial distributions $P_\hH(N\ZZ_d)\to D(N\ZZ_d)$ that are equivariant with respect to a natural $N\ZZ_d$ action.
Then, using the theorem above, we can regard these equivariant distributions as twisted simplicial distributions on a suitable quotient of $P_\hH(N\ZZ_d)$.

We introduce the notion of contextuality for twisted distributions in Section \ref{sec:Contextuality} following the bundle perspective of \cite{barbosa2023bundle}.
A deterministic simplicial {distribution} is of the form of a delta distribution peaked at a section of the bundle. Non-contextual distributions are those that can be expressed as a probabilistic mixture of deterministic distributions. Those that cannot are called contextual.
Therefore, this definition of contextuality is useful only in the untwisted case, that is when the principal bundle is trivial. To remedy this, we introduce a relative version of contextuality associated to a trivializing morphism $U\to X$, i.e., a morphism such that the pullback of the bundle along it is trivial.
We discuss natural examples of such morphisms. 
Twisting can be used as an effective tool to simplify the measurement space for studying contextuality. More precisely, we have a ``twist and collapse" result (Theorem \ref{thm:twist collapse iso for twistings}) with respect to a simplicial subset $i:Z\to X$. 
Let $\varphi_0$ denote the zero section of the projection $K\times Z\to Z$. Consider the set $\sDist_\eta(X,0)$ of simplicial distributions that restrict to the delta distribution $\delta^{\varphi_0}$ on $Z$.

\begin{thm*} 
	There is a convex isomorphism 
	\[
	\sDist_\eta(X/Z)\cong \sDist_\eta(X,0). 
	\]
\end{thm*}

 This result has important computational applications, mainly for the computation of the vertices of the polytope of twisted distributions. 
 In the case of $2$-dimensional measurement spaces, such computations in the context of non-signaling polytopes and their twisted versions{, e.g., the Mermin polytope,} are provided in \cite{okay2023rank}. 

The rest of the paper is organized as follows. In Section \ref{sec:Simplicial principal bundles}, we review the classical theory of principal bundles of simplicial sets. 
We describe twisted products in the case of an Abelian simplicial group in Section \ref{sec:The theory for a simplicial Abelian group} and then further specialize to the case where the simplicial group is the nerve of an Abelian group in Section \ref{sec:The case of an Abelian group}. Then, we turn to tensor products of principal bundles and prove our main category-theoretic results. Twisted distributions, the main objects of interest in this paper, are introduced and studied in Section \ref{sec:Twisted distributions}. 
The identification of twisted distributions with equivariant distributions is proved in this section. As {an} application, quantum distributions (Example \ref{ex:quantum}) are realized as twisted distributions. The notion of contextuality and its new 
{relative}
variant are introduced in Section \ref{sec:Contextuality}. Our twisting and collapse theorem is proved in this section. As a concrete example, we provide an in-depth analysis of the Mermin polytope in Section \ref{sec:mermin polytope}.

\paragraph{Acknowledgments.}
This work is supported by the Air Force Office of Scientific Research under
award number FA9550-21-1-0002.
The first author also acknowledges support from the Digital Horizon Europe project FoQaCiA, GA no. 101070558.

\section{Simplicial principal bundles} 
\label{sec:Simplicial principal bundles}

In this section, we will briefly recapitulate the theory of simplicial principal bundles. In most cases the theory is wholly analogous to that of principal bundles over a topological group, however, the details of implementation often differ. We will attempt to make clear the parallels in the general theory while presenting these details here. One important way in which the presentation here differs from the original topological setting is that we only consider \emph{left} actions, rather than the right actions more common in the setting of topological principal bundles. In the Abelian setting where much of our work will take place, this distinction evaporates, and so the reader familiar with, e.g. \cite[\S 18]{May67} may safely read those later sections without reference to these. 

\subsection{Preliminaries}

Before embarking on our study of principal bundles, we briefly recall some of the key definitions and notations from simplicial homotopy theory.

\Def{\label{def:simplex-category} 
	The \emph{simplex category} $\Delta$ consists of
	\begin{itemize}
		\item objects $[n]=\set{0,1,\cdots,n}$ where $n\geq 0$, and
		\item morphism $\theta:[n]\to [m]$ given by order preserving functions, i.e., $\theta(i)\leq \theta(j)$ whenever $i\leq j$. 
	\end{itemize}
}

Given a category $\catC$ a \emph{simplicial object} in this category is a functor
$$
\begin{tikzcd}
	X:&[-3em]\Delta^\op \arrow[r] & \catC.
\end{tikzcd}
$$
The category $\catsC$ of simplicial objects in $\catC$ has morphisms given by natural transformations.
We will be primarily concerned with simplicial sets, simplicial groups, and simplicial Abelian groups, that is, the cases where $\catC=\catSet$, $\catC=\catGrp$, and $\catC=\catAb$, respectively. When working with simplicial \emph{sets}, we will use the notation $\Delta^n$ for the image of $[n]$ under the Yoneda embedding, i.e., the functor $\Delta(-,[n]):\Delta^\op\to \catSet$. The simplicial sets $\Delta^n$ are called the \emph{standard $n$-simplices}. As a presheaf category, the category of simplicial sets is complete and cocomplete, with limits and colimits given pointwise. 

\begin{ex}
	Given a group $G$, we will denote by $NG$ the \emph{nerve}\footnote{Using the usual categorical definition of the nerve, this is the nerve of the one-object groupoid $BG$.} of the group $G$, which is the simplicial set with sets of $n$-simplices given by 
	\[
	NG_n= G^{n}.
	\]
	The face maps are given by
	$$
	d_i(a_1,\cdots,a_n) = \left\lbrace
	\begin{array}{ll}
		(a_2,\cdots,a_n) & i=0\\
		(a_1,\cdots,a_i\cdot a_{i+1},\cdots, a_n) & 0<i<n \\
		(a_1,\cdots,a_{n-1}) & i=n
	\end{array}
	\right.
	$$
	and the degeneracy maps are given by
	$$
	s_j(a_1,\cdots,a_n) = (a_1,\cdots,a_{{j}},1,a_{{j+1}},\cdots, a_n).
	$$
	Note that in the special case where $G$ is Abelian, the nerve $NG$ is a simplicial Abelian group. 
\end{ex}

\begin{defn}
	Let $X$ and $Y$ be simplicial sets, and $f,g:X\to Y$ be maps of simplicial sets. A \emph{simplicial homotopy from $f$ to $g$} from $f$ to $g$ is a morphism of simplicial sets 
	\[
	\begin{tikzcd}
		H: &[-3em] X\times \Delta^1 \arrow[r] & Y 
	\end{tikzcd}
	\]
	such that restricting $H$ to the simplicial subset $X\cong X\times \{0\}$ yields $f$, and restricting $H$ to the simplicial subset $X\cong X\times\{1\}$ yields $g$. If there is a simplicial homotopy from $f$ to $g$, we write $f\sim g$ and say that $f$ and $g$ are \emph{(simplicially) homotopic}. We denote the set of equivalence classes under simplicial homotopy by $[X,Y]$. 
\end{defn}

\begin{rem}
	One must be cautious when working with simplicial homotopy, as it is not, in general, an equivalence relation on $\catsSet(X,Y)$. In the case where $Y$ is a so-called \emph{Kan complex} --- for instance, the nerve of a group --- this difficulty evaporates. We will not define or discuss Kan complexes in this work, however, some of the results from simplicial homotopy theory which we cite will rest upon the fact that many of our constructions, most notably $\overline{W}K$, are Kan complexes. 
\end{rem}


\subsection{Simplicial principal bundles}

Throughout this section, let $K$ denote a simplicial group, which may be equivalently interpreted as a group object in $\catsSet$, or as a simplicial object in $\catGrp$. 

\begin{defn}
	Let $K$ be a simplicial group. A \emph{$K$-principal bundle} is a simplicial set $E$ equipped with a left\footnote{This is, as remarked above, the major difference between this treatment and \cite{May67}. We choose instead to follow the conventions in \cite{goerss2009simplicial}.} $K$-action $m$ which is \emph{free}, i.e., the diagram 
	\[
	\begin{tikzcd}
		K\times E \arrow[r,"m"]\arrow[d,"\operatorname{pr}_2"'] & E\arrow[d,"\pi"] \\
		E\arrow[r,"\pi"'] & X 
	\end{tikzcd}
	\]
	is pullback, where $\pi:E\to X$ is the quotient map. Note that this is equivalent to each $E_k$ being a free $K_k$-set for every $k\geq 0$. 
\end{defn}

\begin{defn}
	A \emph{morphism of principal $K$-bundles over $X$} from $\pi_E:E\to X$ to $\pi_F:F\to X$ is a $K$-equivariant map of simplicial sets $f:E\to F$ such that the diagram 
	\[
	\begin{tikzcd}
		E\arrow[dr,"\pi_E"']\arrow[rr,"f"] & & F\arrow[dl,"\pi_F"] \\
		 & X &
	\end{tikzcd}
	\]
	commutes. It is immediate from the freeness of the action that every such morphism is an isomorphism. We will denote by $\sf{Bun}_K(X)$ the groupoid of principal $K$-bundles over $X$. We will write $\on{Prin}_K(X)=\pi_0(\sf{Bun}_K(X))$ for the set of isomorphism classes of principal $K$-bundles over $X$.
\end{defn}

As in the topological case, there are well-defined notions of classifying spaces and universal bundles associated to simplicial groups. We will make heavy use of one model for the universal bundle, denoted by $WK\to \overline{W}K$, which we now recall. 

\begin{defn}
	Let $WK$ be the simplicial set with 
	\[
	WK_n:= K_n\times \cdots \times K_0 
	\]
	with structure maps 
	\[
	d_i(g_n,\ldots,g_0)= \begin{cases}
		(d_ig_n,d_{i-1}g_{n-1},\ldots, (d_0g_{n-i})g_{n-i-1},g_{n-i-2},\ldots, g_0) & i<n\\
		(d_ng_n,\ldots,d_1g_1) & i=n
	\end{cases}
	\]
	and 
	\[
	s_i(g_n,\ldots,g_0)=(s_ig_n,\ldots,s_0g_{n-i},1,g_{n-i-1},\ldots,g_0)
	\]
	We denote by $\overline{W}K$ the quotient of $WK$ by the obvious left $K$-action. More explicitly, $\overline{W}K$ has $n$-simplices given by 
	\[
	\overline{W}K_n:= K_{n-1}\times \cdots \times K_0. 
	\]
	and structure maps 
	\[
	d_i(g_{n-1},\ldots,g_0) = \begin{cases}
		(g_{n-2},\ldots,g_0) & i=0\\
		(d_{i-1}g_{n-1},\ldots, (d_0g_{n-i})g_{n-i-1},g_{n-i-2},\ldots, g_0) & 0<i<n\\
		(d_{n-1}g_{n-1},\ldots,d_1g_1) & i=n
	\end{cases}
	\]
	and 
	\[
	s_i(g_{n-1},\ldots,g_0)=\begin{cases}
		(1,g_{n-1},\ldots, g_0) & i=0\\
		(s_{i-1}g_{n-1},\ldots s_0g_{n-i},1,g_{n-i-1},\ldots, g_0) & i>0. 
	\end{cases}
	\]
	We denote by $\pi:WK\to \overline{W}K$ the quotient map. 
\end{defn} 

\begin{thm}\label{thm:characterization_universal_bundle}
	For any simplicial set $X$, pullback induces a bijection 
	\[
	\begin{tikzcd}
		{[X,\overline{W}K]}\arrow[r,"\cong"] & \Prin_K(X).
	\end{tikzcd}
	\]
\end{thm}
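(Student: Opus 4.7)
The plan is to identify both sides of the claimed bijection with the set of \emph{twisting functions} on $X$ modulo an appropriate equivalence, following the standard strategy of simplicial bundle theory. First, I would unpack what a simplicial map $f: X \to \overline{W}K$ amounts to. Using the explicit description $\overline{W}K_n = K_{n-1} \times \cdots \times K_0$ and the structure maps given in the definition, a careful bookkeeping of the face and degeneracy identities shows that $f$ is determined by its ``top component'' $\tau = \{\tau_n: X_n \to K_{n-1}\}_{n \geq 1}$, and that the simplicial relations for $f$ translate precisely into the defining axioms of a twisting function in the sense of \cite[\S 18]{May67}. This yields a bijection between $\catsSet(X, \overline{W}K)$ and the set of twisting functions on $X$.

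Next, I would verify that under this identification the pullback $f^*(WK)$ is naturally isomorphic to the twisted Cartesian product $K \times_\tau X$. Concretely, one computes the pullback levelwise and checks that the resulting structure maps agree with those of the twisted product once the component maps of $f$ are re-expressed via $\tau$. This identifies the pullback map of the theorem with the assignment that sends (the class of) a twisting function $\tau$ to (the class of) the bundle $K \times_\tau X$.

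For surjectivity of this assignment, given an arbitrary principal $K$-bundle $\pi: E \to X$, I would construct a \emph{pseudo-section}, namely a choice of lift $\widetilde{x} \in E_n$ for each $x \in X_n$ compatible with all face and degeneracy operators except possibly $d_0$. This is built by induction on simplex dimension, using the freeness of the $K$-action to make coherent choices. The $d_0$-defect of this pseudo-section defines a twisting function $\tau$, and the assignment $(k, x) \mapsto k \cdot \widetilde{x}$ furnishes a bundle isomorphism $K \times_\tau X \xrightarrow{\sim} E$. For injectivity and well-definedness on homotopy classes, I would show that a bundle isomorphism $K \times_\tau X \cong K \times_{\tau'} X$ unpacks to a function $X \to K$ exhibiting $\tau$ and $\tau'$ as ``pseudo-cohomologous'', and that exactly this data repackages into a simplicial homotopy $X \times \Delta^1 \to \overline{W}K$. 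Because $K$ is a simplicial group, $\overline{W}K$ is a Kan complex, so simplicial homotopy is an equivalence relation on maps into it and the pullback map descends to a well-defined bijection $[X, \overline{W}K] \to \Prin_K(X)$.

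I expect the inductive construction of the pseudo-section in the surjectivity step to be the main obstacle: one must leverage freeness of the $K$-action together with the simplicial identities on $E$ to choose lifts that are simultaneously coherent under all $d_i$ and $s_j$ for $i, j \geq 1$, while allowing the $d_0$-defect to carry the full data of the twisting. Once this step is in place, the remaining verifications reduce to careful but routine manipulation of the explicit face and degeneracy formulas for $WK$ and $\overline{W}K$ recorded above, together with the correspondence between pseudo-cochains and simplicial homotopies into $\overline{W}K$.
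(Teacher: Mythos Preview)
Your proposal is correct and follows the classical route of May \cite[\S 21]{May67}: identify maps into $\overline{W}K$ with twisting functions, identify principal bundles with twisted products via pseudo-cross-sections, and match the two equivalence relations. The paper, by contrast, does not prove this theorem directly at all---it simply cites \cite[Ch.~V, Thm.~3.9, Cor.~6.8, Lem.~4.6]{goerss2009simplicial}, invoking the abstract characterization of universal bundles (a principal $K$-bundle $EK\to BK$ with $EK$ contractible and $BK$ Kan classifies all principal $K$-bundles) and then verifying that $WK\to\overline{W}K$ satisfies those hypotheses. In fact, the paper develops the twisting-function correspondence you outline \emph{after} this theorem (Lemmas \ref{lem:classifying_map_of_twisting_function} and \ref{lem:twisting_equiv}, Corollary \ref{cor:twistings_are_PBs}), using the theorem as input rather than deriving it. Your approach is more self-contained and constructive, and would render those later results immediate corollaries rather than separate statements; the paper's approach is shorter on the page but offloads the real work to the cited reference.
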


\begin{proof}
	This theorem is established for any principal $K$-bundle $EK\to BK$ satisfying certain conditions in \cite[Ch. V, Thm. 3.9]{goerss2009simplicial}. It is shown that $\pi:WK\to \overline{W}K$ satisfies these conditions in \cite[Ch. V, Cor. 6.8]{goerss2009simplicial} and \cite[Ch. V, Lem. 4.6]{goerss2009simplicial}.
\end{proof}

By Theorem \ref{thm:characterization_universal_bundle}, there is a bijection 
\[
\operatorname{Prin}_K(X)\cong [X,\overline{W}K]
\]
for any $X\in \catsSet$. However, since the action of $K$ on $E$ is free for any principal bundle $E\to X$, we should be able to identify simplices of $E$ with pairs $(g,x)$ where $g\in K_n$ and $X$ in $X_n$. Fortunately, the fiber of $WK\to \overline{W}K$  over $(g_{n-1},\ldots,g_0)$ can be canonically identified with $K_n$ by construction. We now seek to untangle the specific structure maps in the pull-back
\[
 X_n \cong (WK\times_{\overline{W}K} X)_n
\]
for a given map $\gamma:X\to \overline{W}K$. 

A quick computation of the structure maps shows that, for $(g,x)\in  K_n\times X_n$ 
\[
d_i(g,x)=\begin{cases}
	(d_i(g),d_i(x)) & i>0\\
	(d_0(g)\cdot\gamma_n(x),d_0(x)) & i=0
\end{cases}
\]
and 
\[
s_i(g,x)=(s_i(g),s_i(x)).
\]
However, this implies something curious --- the bundle $WK\times_{\overline{W}K} X$ only depends on the   first component of $\gamma$, and so the homotopy class of $\gamma$ only depends on this first component.  

To understand how this works, we turn it into a definition. 

\begin{defn}
	For $X\in \catsSet$, a collection $\eta:=\{\eta_n:X_n\to K_{n-1}\}_{n\geq 1}$ is called a \emph{twisting function} if the maps 
	\[
	d_i(g,x)=\begin{cases}
		(d_i(g),d_i(x)) & i>0\\
		(d_0(g)\cdot\eta_n(x),d_0(x)) & i=0
	\end{cases}
	\]
	and 
	\[
	s_i(g,x)=(s_i(g),s_i(x))
	\]
	define a simplicial structure on the graded set $ K_n\times X_n$. Direct computation of the simplicial identities then shows that $\eta$ is a twisting function if and only if the identities
	\[
	d_i\eta(x)=\begin{cases}
		\eta(d_{i+1}(x)) & i>0\\
		\eta(d_1(x))\cdot \eta(d_0(x))^{-1} & i=0 
	\end{cases}
	\] 
	and 
	\[
	\eta(s_j(x))=\begin{cases}
		1 & j=0\\
		s_{j-1}(\eta(x)) & \text{else}.
	\end{cases}
	\]
	Given a twisting function $\eta$, we call the associated simplicial set the \emph{$\eta$-twisted product of $X$ and $K$}, and denote it by $K\times_\eta X$.  We denote the set of twisting functions on $X$ by $\on{Twist}_K(X)$. We denote the projection map by $\pi_\eta:K\times_{\eta}X\to X$. 
\end{defn}

\begin{lem} 
	Given a twisting function $\eta$ on $X$, the map $K\times_\eta X\to X$ is a principal $K$-bundle. 
\end{lem}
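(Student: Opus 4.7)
The plan is to verify the three conditions in the definition of a principal $K$-bundle: namely, that $K \times_\eta X$ carries a left $K$-action, that this action is compatible with the simplicial structure, and that the associated square with $\pi_\eta$ is a pullback.

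First, I would define the action $m_n : K_n \times (K_n \times X_n) \to K_n \times X_n$ in dimension $n$ by left multiplication on the first factor: $m_n(k, (g, x)) = (kg, x)$. Compatibility with the degeneracies $s_j$ and with the face maps $d_i$ for $i > 0$ is immediate, since the structure maps on $K \times_\eta X$ in these cases agree with those on the Cartesian product $K \times X$, and $K$ is a simplicial group. The only point requiring attention is compatibility with the twisted face $d_0$: one computes
\[
d_0(m_n(k,(g,x))) = (d_0(kg)\cdot \eta_n(x), d_0(x)) = (d_0(k)\cdot d_0(g)\cdot \eta_n(x), d_0(x)) = m_{n-1}(d_0(k), d_0(g,x)),
\]
using that $d_0$ is a group homomorphism on $K$, and that the twist $\eta_n(x)$ appears on the \emph{right} of $d_0(g)$ while the action is on the \emph{left}. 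So the left action and right twist simply do not interfere.

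Next, I would verify freeness. In each dimension $n$, the action of $K_n$ on $K_n \times X_n$ is left multiplication on the first factor, hence free, and the orbit set is canonically $X_n$ via the projection. The map $\pi_\eta$ is already a morphism of simplicial sets (its structure maps match those of $X$ by construction of $\eta$), and dimension-wise it is the quotient by the $K$-action. To identify the square as a pullback, one exhibits a dimension-wise bijection
\[
K_n \times (K_n \times X_n) \;\longrightarrow\; (K_n \times X_n)\times_{X_n} (K_n \times X_n),\qquad (k, (g,x))\longmapsto ((kg,x),(g,x)),
\]
with inverse $((g,x),(g',x)) \mapsto (g(g')^{-1}, (g',x))$, and checks that this bijection is natural in $[n]$. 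Since limits and colimits in $\catsSet$ are computed pointwise, this suffices.

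The only real obstacle is the bookkeeping around the twisted $d_0$; every other verification is an immediate transcription of the corresponding fact for the untwisted product $K \times X$. The reason the bookkeeping works out is precisely the structural feature noted above: the twist alters $d_0$ by \emph{right} multiplication by $\eta_n(x)$, so it commutes with the \emph{left} action of $K$.
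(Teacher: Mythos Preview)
Your proof is correct and follows the same approach as the paper, which simply notes that the left $K$-action on $K\times_\eta X$ is free and that the projection is the quotient map. You have spelled out in detail what the paper leaves implicit, in particular the key observation that the right-sided twist $\eta_n(x)$ in $d_0$ commutes with the left $K$-action.
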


\begin{proof}
	It is immediate that the left action of $K$ on $K\times_\eta X$ is free, and that the projection induces an isomorphism from the quotient. 
\end{proof}

Our final step is to determine when two twisting functions determine homotopic maps to $\overline{W}K$. 

\begin{lem}\label{lem:classifying_map_of_twisting_function}
	Given a twisting function $\eta$, the map 
	\[
	\begin{tikzcd}
		\theta^\eta: &[-3em] K\times_{\eta} X \arrow[r] & WK
	\end{tikzcd}
	\]
	given by 
	\[
	\theta_n^\eta(g,x)=(g,\eta_n(x),\eta_{n-1}(d_0(x)),\ldots, \eta_1(\underbrace{d_0\cdots d_0}_{\times n-1}(x)))
	\]
	is a $K$-equivariant simplicial map and the diagram 
	\[
	\begin{tikzcd}
		K\times_\eta X \arrow[d]\arrow[r,"\theta"] & WK\arrow[d]\\
		X \arrow[r,"\overline{\theta}"'] & \overline{W}K 
	\end{tikzcd}
	\]
	is pullback, where $\overline{\theta}$ denotes the induced map on quotients. 
\end{lem}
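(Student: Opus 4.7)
The plan is to establish three statements in sequence: that $\theta^\eta$ is $K$-equivariant, that it is a morphism of simplicial sets, and that the resulting square is a pullback.

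For equivariance, the group $K$ acts by left multiplication on the first coordinate of both $K\times_\eta X$ and $WK$, while the remaining components of $\theta^\eta_n(g,x)$ depend only on $x$, so the claim is immediate from the formula. For the simplicial-map property, I would verify compatibility with face and degeneracy maps by direct computation, splitting into the cases $d_0$, $d_i$ for $0<i<n$, $d_n$, and $s_i$. The case $d_0$ reduces to a tautological identification once the twisted $d_0$-face formula on $K\times_\eta X$ is substituted on the left-hand side and the $WK$-formula $d_0(g_n,\ldots,g_0)=((d_0 g_n)g_{n-1},g_{n-2},\ldots,g_0)$ on the right. The interior faces are the most delicate: the $WK$-face formula merges the $i$th and $(i+1)$st entries via the group product, so matching $d_i\theta^\eta_n(g,x)$ with $\theta^\eta_{n-1}(d_i g, d_i x)$ requires the twisting identity $d_0\eta(y)=\eta(d_1 y)\,\eta(d_0 y)^{-1}$ applied to $y=d_0^{i-1}x$, combined with the simplicial identity $d_0^{i-1}d_i=d_1 d_0^{i-1}$, to account for the merged coordinate. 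The remaining coordinates in the interior cases, as well as the $d_n$ case, reduce to $d_j\eta=\eta d_{j+1}$ for $j>0$ together with the simplicial identity $d_0^{k}d_j=d_{j-k}d_0^{k}$ (for $j\geq k$, and $d_0^k d_j=d_0^{k+1}$ once the subscript collapses). The degeneracy cases are analogous, using $\eta s_j=s_{j-1}\eta$ for $j>0$ and $\eta s_0=1$.

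For the pullback claim, the induced map $\overline\theta$ on quotients is given in components by $\overline\theta_n(x)=(\eta_n(x),\eta_{n-1}(d_0 x),\ldots,\eta_1(d_0^{n-1}x))$, and is simplicial by the same computations. In dimension $n$, a simplex of $WK\times_{\overline{W}K} X$ is a pair $((g_n,g_{n-1},\ldots,g_0),x)$ with $g_k=\eta_{k+1}(d_0^{n-k-1}x)$ for each $k<n$, so projection to $(g_n,x)$ gives a bijection onto $(K\times_\eta X)_n=K_n\times X_n$ whose inverse is precisely $(g,x)\mapsto(\theta^\eta_n(g,x),x)$. Since $\theta^\eta$ has already been shown to be simplicial, this dimension-wise bijection upgrades to an isomorphism of simplicial sets over $X$, so the square is a pullback.

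The main obstacle is the verification of the merged coordinate in the interior face case, which is precisely where the full twisting condition on $\eta$, rather than only the simpler face/degeneracy identities, is essential; once that step is unpacked, the rest of the argument is bookkeeping with simplicial identities.
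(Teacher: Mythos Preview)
Your proposal is correct and follows essentially the same approach as the paper: the paper also notes equivariance is immediate, verifies the simplicial identities by direct computation (treating the cases $d_i$ for $i<n$, $d_n$, and $s_i$, using exactly the twisting identity $d_0\eta(y)=\eta(d_1 y)\eta(d_0 y)^{-1}$ at the merged coordinate), and deduces the pullback from the fact that $\theta^\eta$ is levelwise an equivariant map of free $K$-sets. Your separation of the $d_0$ case and your slightly more explicit levelwise-bijection argument for the pullback are minor presentational variants of the same proof.
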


\begin{proof}
	This follows by direct computation using the simplicial identities. A proof can be found in Appendix \ref{sec:simp_comp}.
\end{proof}

We then aim to classify when two twisted products are equivalent. This is simplified by the fact that, by freeness, two bundles over the same base are isomorphic if and only if there is a bundle map between them. Thus, if there is an isomorphism 
\[
\begin{tikzcd}
	f: &[-3em] K\times_{\eta} X\arrow[r] & K\times_\tau X 
\end{tikzcd}
\]
over $X$, it will have the form 
\[
f(g,x)=(x,\kappa(g,x))
\]
and $\kappa_n(g,x)$ will be $K_n$-equivariant, i.e, 
\[
\kappa_n(g,x)= g\cdot\psi_n(x)
\]
for some function $\psi$.

\begin{lem}\label{lem:twisting_equiv}
	Given functions $\psi_n: X_n\to K_n$, and the maps $f_n:(K\times_\eta X)_n\to (K\times_\tau X)_n$ defined by the $\psi_n$, the following are equivalent.
	\begin{enumerate}
		\item The map $f$ is a map of simplicial sets.
		\item The map $f$ is an isomorphism of principal $K$-bundles over $X$. 
		\item The maps $\psi_i$ satisfy the relations
		\[
		\begin{aligned}
			d_0(\psi_n(x))\eta_n(x)&= \eta_n(x)\psi_{n-1}(d_0(x))\\
			\psi_{n-1}(d_i(x)) & =d_i(\psi_n(x)) &  i>0\\
			s_i\psi_n(x) & =\psi_{n-1}(s_i(x)) &i\geq 0.
		\end{aligned}
		\]
	\end{enumerate}
\end{lem}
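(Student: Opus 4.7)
The plan is to establish $(2) \Leftrightarrow (1) \Leftrightarrow (3)$, with the first equivalence essentially formal and the second a direct calculation using the twisted-product structure maps.

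For $(1) \Leftrightarrow (2)$, I would first observe that $(2) \Rightarrow (1)$ is immediate since a bundle morphism is, by definition, a simplicial map. For the converse, note that the specific form $f_n(g,x) = (g\,\psi_n(x), x)$ makes $K$-equivariance visible on the nose, since $f_n(hg,x) = h\cdot f_n(g,x)$, and compatibility with the projection to $X$ is built in. Assuming $f$ is simplicial, it is therefore a bundle morphism, and the remark preceding the lemma --- that freeness forces every bundle morphism to be an isomorphism --- upgrades this to an isomorphism.

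The substantive content is $(1) \Leftrightarrow (3)$, which I would prove by writing out what it means for $f$ to commute with the face and degeneracy operators of $K\times_\eta X$ and $K \times_\tau X$. For $i > 0$, comparing $d_i^\tau f_n(g,x)$ with $f_{n-1} d_i^\eta(g,x)$ produces the relation $d_i(\psi_n(x)) = \psi_{n-1}(d_i(x))$ after cancelling the common $d_i(g)$ factor (e.g.\ by setting $g = 1$); conversely, this relation immediately implies the face equality for all $g$. For $i = 0$, the analogous comparison gives $(d_0(g)\,d_0\psi_n(x)\,\tau_n(x), d_0 x)$ on the target side and $(d_0(g)\,\eta_n(x)\,\psi_{n-1}(d_0 x), d_0 x)$ on the source side, and cancelling $d_0(g)$ produces the twisting-compatibility identity at the $d_0$-face. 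The degeneracy compatibility is analogous: since degeneracies in a twisted product are inherited from the Cartesian one, comparing $s_i^\tau f_n$ with $f_{n+1}s_i^\eta$ reduces to $s_i \psi_n(x) = \psi_{n+1}(s_i x)$.

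The main obstacle is purely notational bookkeeping. One has to track which twisting ($\eta$ or $\tau$) enters on which side of the $i = 0$ identity, since the target twisting $\tau$ comes from applying $d_0^\tau$ in the codomain while the source twisting $\eta$ enters via $d_0^\eta$ in the domain, and one must align the degeneracy dimension indexing correctly. Once the face and degeneracy formulas of the twisted product are written out and the left action of $K_n$ by multiplication in the first coordinate is invoked for cancellation, no deeper machinery is needed.
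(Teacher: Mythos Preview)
Your proposal is correct and follows essentially the same route as the paper: the equivalence $(1)\Leftrightarrow(2)$ via freeness of the $K$-action, and $(1)\Leftrightarrow(3)$ by expanding the simplicial identities $d_i f = f d_i$ and $s_i f = f s_i$ in terms of $\psi$ and comparing $K$-components. Your write-up is in fact more explicit than the paper's, which simply says ``expanding these out\ldots yields the relations of the lemma''; in particular your $d_0$ computation correctly produces $d_0(\psi_n(x))\,\tau_n(x) = \eta_n(x)\,\psi_{n-1}(d_0 x)$, and your remark about the degeneracy index ($\psi_{n+1}$ rather than $\psi_{n-1}$) is well taken.
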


\begin{proof}
	The equivalence of (1) and (2) is immediate, since equivariant maps between free $G$-sets are isomorphisms. We thus need only show the equivalence with (3). 
	
	The map $f$ being simplicial is equivalent to the relations 
	\[
	\begin{aligned}
		d_i f(g,x)& =f(d_i(g,x))\\
		s_i f(g,x) & = f(s_i(g,x)).
	\end{aligned}
	\]
	Expanding these out in terms of the definition of $\psi_n$ and comparing group terms yields the relations of the lemma. 
\end{proof}

\begin{defn}
	We call twisting functions $\eta$ and $\tau$ \emph{equivalent} if there is a collection of functions $\psi_n$ satisfying the conditions of Lemma \ref{lem:twisting_equiv}.
\end{defn}

\begin{cor}\label{cor:twistings_are_PBs}
	The map 
	\[
	\begin{tikzcd}[row sep=0em]
		\theta^{(-)}:&[-3em] \on{Twist}_K(X)_{/\sim}\arrow[r] & {[X,\overline{W}K]}\\
		& \eta \arrow[r,mapsto] & \overline{\theta^{\eta}}
	\end{tikzcd}
	\]
	(see Lemma \ref{lem:classifying_map_of_twisting_function}) is a bijection, where $\sim$ denotes equivalence of twisting functions.
\end{cor}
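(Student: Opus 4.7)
The plan is to factor $\theta^{(-)}$ through Theorem \ref{thm:characterization_universal_bundle} and reduce to showing that the assignment $[\eta]\mapsto [K\times_\eta X]$ defines a bijection $\on{Twist}_K(X)_{/\sim}\to \on{Prin}_K(X)$. Lemma \ref{lem:classifying_map_of_twisting_function} already tells us that the pullback of $WK\to \overline{W}K$ along $\overline{\theta^\eta}$ is canonically identified with $K\times_\eta X$ over $X$, so the factorization holds on the nose and it suffices to analyze this latter assignment.

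Well-definedness and injectivity both come directly from Lemma \ref{lem:twisting_equiv}. The implication $(3)\Rightarrow(2)$ shows that equivalent twistings produce isomorphic principal bundles, giving well-definedness. For injectivity, any isomorphism between $K\times_\eta X$ and $K\times_\tau X$ over $X$ must, by $K$-equivariance and freeness, have the form $(g,x)\mapsto(g\cdot\psi_n(x),x)$; the implication $(1)\Rightarrow(3)$ then recovers an equivalence of twistings from the $\psi_n$.

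Surjectivity is the one point that requires genuine work. Given a principal bundle $E\to X$, I would invoke Theorem \ref{thm:characterization_universal_bundle} to choose a classifying map $\gamma:X\to \overline{W}K$ with $E\cong WK\times_{\overline{W}K}X$ over $X$, and define $\eta_n:X_n\to K_{n-1}$ to be the composite of $\gamma_n$ with the projection $\overline{W}K_n=K_{n-1}\times \cdots\times K_0\to K_{n-1}$ onto the leading factor. The structure-map computation carried out in the paragraph preceding the definition of twisting functions shows that $WK\times_{\overline{W}K}X$, as a graded $K$-set, has precisely the face and degeneracy maps of the candidate twisted product $K\times_\eta X$, so once $\eta$ is known to be a twisting function we obtain $K\times_\eta X \cong E$ over $X$.

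The main obstacle is thus verifying that $\eta$ satisfies the twisting cocycle identities appearing in the definition. The cleanest route is to turn the usual direction of argument around: since $WK\times_{\overline{W}K}X$ is a pullback in $\catsSet$, it is automatically a simplicial set, so its face and degeneracy operators satisfy all simplicial identities. Matching these identities termwise against the explicit formulas $d_0(g,x)=(d_0(g)\cdot\eta_n(x),d_0(x))$, $d_i(g,x)=(d_i g, d_i x)$ for $i>0$, and $s_i(g,x)=(s_i g, s_i x)$ forces exactly the relations required of $\eta$; the only subtlety is that the $d_0 d_0$ relation and the $d_0 s_0$ relation produce the non-trivial $i=0$ cases in the definition, and this must be checked by isolating the $K$-valued component of each simplicial identity. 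With $\eta$ confirmed to be a twisting function, the surjectivity step closes, and the composition through Theorem \ref{thm:characterization_universal_bundle} delivers the asserted bijection.
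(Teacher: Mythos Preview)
Your proposal is correct and matches the paper's implicit argument; the corollary is stated without proof there, but the intended reasoning is precisely the factorization through Theorem \ref{thm:characterization_universal_bundle} together with Lemma \ref{lem:twisting_equiv} for well-definedness and injectivity, and the paragraph preceding the definition of twisting functions for surjectivity. Your observation that the pullback $WK\times_{\overline{W}K}X$ is \emph{automatically} simplicial, so the explicit structure-map formulas force $\eta$ to satisfy the twisting identities by the very definition of a twisting function, is the efficient way to close the surjectivity step and avoids rederiving those identities from the simpliciality of $\gamma$ by hand.
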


\begin{lem}\label{lem:TwisPro}
	The levelwise sum of two twisting functions $\eta$ and $\eta^\prime$ is a twisting function.
\end{lem}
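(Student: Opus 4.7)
The plan is to interpret ``levelwise sum'' via the Abelian group structure of $K$ (the statement only makes sense in the simplicial Abelian case, as indicated by the title of the subsection this lemma appears in). Define $(\eta+\eta')_n(x) := \eta_n(x) + \eta'_n(x)$ as a function $X_n \to K_{n-1}$, and then verify the two families of twisting identities term by term by exploiting the fact that the face maps $d_i:K_{n-1}\to K_{n-2}$ and degeneracy maps $s_j:K_{n-1}\to K_n$ of $K$ are group homomorphisms.

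First I would check the face identities. For $i>0$, both $d_i\eta(x)=\eta(d_{i+1}x)$ and $d_i\eta'(x)=\eta'(d_{i+1}x)$ hold by assumption, and since $d_i$ on $K$ is a homomorphism, adding gives $d_i(\eta+\eta')(x) = (\eta+\eta')(d_{i+1}x)$. For $i=0$, I would add the two relations $d_0\eta(x)=\eta(d_1 x)\eta(d_0 x)^{-1}$ and $d_0\eta'(x)=\eta'(d_1x)\eta'(d_0x)^{-1}$; in additive notation for the Abelian $K$ this becomes $d_0(\eta+\eta')(x)=(\eta+\eta')(d_1 x)-(\eta+\eta')(d_0x)$, which is exactly the required identity.

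Next I would verify the degeneracy conditions. Since $\eta(s_0 x)=0$ and $\eta'(s_0x)=0$, their sum vanishes as well. For $j>0$, applying the homomorphism $s_{j-1}$ and adding yields $(\eta+\eta')(s_j x)=s_{j-1}(\eta+\eta')(x)$, as required.

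I do not anticipate a real obstacle here: the verification is essentially formal, and relies only on (i) the defining identities for $\eta$ and $\eta'$, and (ii) the fact that in a simplicial Abelian group all face and degeneracy maps are homomorphisms, so they distribute over sums and inversion. The only substantive remark worth making in the writeup is to flag explicitly that the additive structure on $\on{Twist}_K(X)$ requires $K$ to be Abelian, which matches the ambient hypothesis of the subsection.
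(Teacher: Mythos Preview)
Your proposal is correct and follows essentially the same approach as the paper: define the levelwise sum and verify the face and degeneracy identities term by term, using that the structure maps of the simplicial Abelian group $K$ are homomorphisms. The paper's proof is exactly this computation written out in full, separating the cases $i=0$, $i\geq 1$ for faces and $j=0$, $j\geq 1$ for degeneracies, just as you outline.
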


\begin{proof} 
	Given $x \in X_n$, we have
	$$
	\begin{aligned}
		d_0((\eta +\eta')_n(x))
		&=d_0(\eta_n(x)+\eta'_n(x))\\
		&=d_0(\eta_n(x))+d_0(\eta'_n(x))\\
		&=
		\eta_{n-1}(d_1(x))-\eta_{n-1}(d_0(x))+
		\eta'_{n-1}(d_1(x))-\eta'_{n-1}(d_0(x))
		\\
		&=(\eta+\eta')_{n-1}(d_1(x))-(\eta+\eta')_{n-1}(d_0(x)).
	\end{aligned}
	$$
	For $1 \leq i \leq n$, we have 
	$$
	\begin{aligned}
		d_i((\eta +\eta')_n(x))
		&=d_i(\eta_n(x)+\eta'_n(x))\\
		&=d_i(\eta_n(x))+d_i(\eta'_n(x))
		\\
		&=
		\eta_{n-1}(d_{i+1}(x))+
		\eta'_{n-1}(d_{i+1}(x))
		\\
		&=(\eta+\eta')_{n-1}(d_{i+1}(x)).
	\end{aligned}
	$$
	For the degeneracy maps we have
	$$
	(\eta+\eta')_n(s_0(x))
	=\eta_n(s_0(x))+\eta'_n(s_0(x))=0,
	$$
	and for $1 \leq j \leq n+1$, 
	$$
	\begin{aligned}
		(\eta + \eta')_n(s_j(x))
		&=\eta_n(s_j(x))+\eta'_n(s_j(x))\\
		&=
		s_{j-1}(\eta_{n-1}(x))+s_{j-1}(\eta'_{n-1}(x))\\
		&=s_{j-1}((\eta+\eta')_{n-1}(x)).
	\end{aligned}
	$$
\end{proof}

\Lem{
	If $K$ is a simplicial Abelian  group, then $\Twist_K(X)$ is an Abelian group. 
}
\Proof{
	By Lemma \ref{lem:TwisPro}
	the sum is well-defined. The associativity follows from the associativity in $K_n$ for every $n\geq 0$. The identity element is the twisting function that send every $x \in X_n$ to the identity elment of $K_{n-1}$. The inverse of $\eta$ is given by $-\eta$.
}

\begin{defn}\label{def:ProSec}
	We will write $\Sec(f)$ for the set of sections of a simplicial set map $f:E\to X$. Note that when $E$ is given by a twisted product $E=K\times_\eta X$, we have a canonical identification of a section $s:X\to E$ with a map of graded sets $\varphi: X\to K$ satisfying 
	\[
	\begin{aligned}
		d_i\varphi(x)&=\varphi(d_i(x))  &  i>0\\
		s_i\varphi(x)& =\varphi(s_i(x))  & \\
		d_0(\varphi(x))\eta(x) &= \varphi(d_0 x). & 
	\end{aligned} 
	\]
	Under this identification, $s(x)=(\varphi(x),x)$. We will typically denote an element of $\Sec(\pi_\eta)$ by the function $\varphi:X\to K$, rather than the section $s$ itself. When necessary, we will denote the associated section by $s_\varphi$.

	We define a product 
	\[
	\begin{tikzcd}
		\kappa:&[-3em] \Sec(\pi_\eta) \times \Sec(\pi_{\eta'}) \arrow[r] & \Sec(\pi_{\eta+\eta'})
	\end{tikzcd}
	\]
	by sending $(\varphi,\varphi^\prime)$ to the pointwise sum $\varphi+\varphi^\prime$.
\end{defn}

\begin{pro}\label{prop:triv_bundle=has_section}
	Let $\eta$ be a twisting function on $X$. The following are equivalent. 
	\begin{enumerate}
		\item The set $\Sec(\pi_\eta)$ is non-empty.
		\item There is an isomorphism $K\times_\eta X\cong K\times X$ of $K$-bundles over $X$. 
		\item There is an equivalence of twisting functions $\eta\sim 0$. 
	\end{enumerate}
\end{pro}

\begin{proof}
	The equivalence of (2) and (3) follows immediately from the classification result {Corollary} \ref{cor:twistings_are_PBs} and that fact that, by definition $K\times X\cong K\times_0 X$. It is clear that $x\mapsto (0,x)$ is a section of the trivial bundle $K\times X$, so (2) implies (1). 
	
	If $\pi_\eta$ admits a section then we can form a diagram 
	\[
	\begin{tikzcd}
		P\arrow[r]\arrow[d] & K\times (K\times_\eta X)\arrow[d,"\on{pr}_2"] \arrow[r,"m"] & (K\times_\eta X)\arrow[d,"\pi_\eta"] \\
		X \arrow[r,"s"]\arrow[rr,bend right=1em,"="'] & (X\times_\eta K) \arrow[r,"\pi_\eta"] & X 
	\end{tikzcd}
	\]
	by iterated pullback. The right-hand square is pullback by the freeness of the $K$-action, and the left is defined to be pullback. Since the bottom composite is the identity, $P$ is identified with $K\times_\eta X$, but by the construction of pullbacks in $\catSet$, the left-hand pullback square identifies $P$ with $K\times X$. Thus, (1) implies (2). 
\end{proof}

\subsection{The theory for a simplicial Abelian group}
\label{sec:The theory for a simplicial Abelian group}

We now specialize to the case of a simplicial Abelian group.
Let $A\in \sf{Ab}_\Delta$. We denote by 
\[
\begin{tikzcd}
	M :&[-3em] \sf{Ab}_\Delta \arrow[r,shift left] & \sf{Ch}^+\arrow[l,shift left] &[-3em] : \Gamma 
\end{tikzcd}
\]
the Dold-Kan correspondence (see, e.g., \cite[Ch. III, \S 2]{goerss2009simplicial}). Explicitly, $M$ is the \emph{normalized chains} functor which sends a simplicial Abelian group $K$ to the chain complex $M(K)$ with $n$-chains
\[
M(K)_n =\bigcap_{i=0}^{n-1} \on{ker}(d_i)\subset K_n 
\]
and differentials given by $(-1)^nd_n$. The functor $\Gamma$ sends a chain complex $C$ to a simplicial Abelian group $\Gamma(C)$ whose group of $n$-simplices is 
\[
\Gamma(C)_n=\bigoplus_{[k]\twoheadrightarrow [n]} C_k.
\]
The classic theorem of Dold and Kan says that $M$ and $\Gamma$ define an equivalence of categories. We will leverage this to connect principal bundles to cohomology groups as follows. 

\begin{pro}\label{prop:WA_DK_shift}
	There is an isomorphism of chain complexes
	\[
	M({\overline{W}A})\cong (MA)[-1]
	\] 
	natural in $A$, or, equivalently, a natural isomorphism 
	\[
	{\overline{W}A} \cong \Gamma(MA[-1]).
	\]
\end{pro}

\begin{proof}
	This is the content of \cite[Eq. 4.58]{JardineEtale}.
\end{proof}

Denote by $\underline{\sf{Ch}^+}(C_\bullet,D_\bullet)$ the chain complex of maps between the chain complexes $C_\bullet$ and $D_\bullet$. Denote by $\underline{\sf{Ab}_\Delta}(K,L)$ the simplicial Abelian group of maps between simplicial Abelian groups $K$ and $L$. 

\begin{pro}
	Let $C_\bullet$ and $D_\bullet$ be chain complexes. There is a natural equivalence of simplicial Abelian groups 
	\[
	\begin{tikzcd}
		\Gamma\left(\underline{\sf{Ch}^+}(C_\bullet, D_\bullet)\right)\arrow[r,"\simeq"] & \underline{\sf{Ab}_\Delta}(\Gamma(C_\bullet),\Gamma(D_\bullet)).
	\end{tikzcd}
	\]
\end{pro}

\begin{proof}
	This is \cite[Thm 6.3.1]{simpenrich}. 
\end{proof}

\begin{cor}\label{cor:Homotopy_is_Homology}
	There is a natural isomorphism of groups 
	\[
	\pi_0\left(\underline{\sf{Ab}_\Delta}(\Gamma(C_\bullet),\Gamma(D_\bullet))\right)\cong H_0(\underline{\sf{Ch}^+}(C_\bullet,D_\bullet)).
	\]
\end{cor}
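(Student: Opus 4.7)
The plan is to chain together the Dold--Kan equivalence with the natural equivalence stated in the preceding proposition, and to use the standard fact that $\pi_0$ of a simplicial Abelian group is computed by $H_0$ of its normalized chain complex. Since all the hard content is packaged in the previous proposition, the proof should be essentially a formal manipulation.

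First I would apply the preceding proposition to obtain a natural equivalence of simplicial Abelian groups
\[
\Gamma\left(\underline{\sf{Ch}^+}(C_\bullet, D_\bullet)\right)\xrightarrow{\simeq} \underline{\sf{Ab}_\Delta}(\Gamma(C_\bullet),\Gamma(D_\bullet)),
\]
and then apply the functor $\pi_0$ to both sides. This reduces the problem to exhibiting a natural isomorphism
\[
\pi_0\bigl(\Gamma(E_\bullet)\bigr) \cong H_0(E_\bullet)
\]
for any non-negatively graded chain complex $E_\bullet$, applied to $E_\bullet = \underline{\sf{Ch}^+}(C_\bullet, D_\bullet)$.

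For the remaining step, I would invoke the defining property of the Moore/normalized chains functor $M$ together with the Dold--Kan equivalence. Concretely, for any simplicial Abelian group $A$ one has
\[
\pi_0(A) = A_0 / \on{im}(d_0 - d_1) = \on{coker}\bigl(d_1|_{\ker d_0}\colon M(A)_1 \to M(A)_0\bigr) = H_0(M(A)),
\]
using that on the normalized subcomplex the alternating-sum differential reduces to $\pm d_n$. Setting $A = \Gamma(E_\bullet)$ and using the natural isomorphism $M\Gamma \cong \on{id}$ provided by Dold--Kan then yields
\[
\pi_0(\Gamma(E_\bullet)) \cong H_0(M\Gamma(E_\bullet)) \cong H_0(E_\bullet),
\]
naturally in $E_\bullet$.

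The only mildly delicate point is bookkeeping the naturality: one has to check that the equivalence from the preceding proposition is natural in both $C_\bullet$ and $D_\bullet$ and that the identification $\pi_0 \cong H_0 \circ M$ is natural in $A$. Both are standard, and neither requires computation beyond unwinding definitions, so I do not anticipate any substantive obstacle; the corollary is essentially a formal consequence of the previous proposition together with Dold--Kan.
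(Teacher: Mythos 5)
Your proof is correct and takes essentially the same approach as the paper's, which simply cites the Dold--Kan fact that simplicial homotopy corresponds to homology (Weibel, Thm.\ 8.4.1); your version unpacks that citation explicitly via $\pi_0(A)\cong H_0(M(A))$ and $M\Gamma\cong\mathrm{id}$, and the intermediate identification $\mathrm{im}(d_0-d_1)=d_1(\ker d_0)$ you use is correct (one checks it by replacing $a\in A_1$ with $a-s_0d_0(a)$).
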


\begin{proof}
	Under the Dold-Kan correspondence, simplicial homotopy corresponds to homology. See, e.g., \cite[Thm 8.4.1]{Weibel} for details.
\end{proof}

If we denote by $\ZZ[-]$ the free simplicial Abelian group functor
\[
\begin{tikzcd}
	{\ZZ[-]}: &[-3em] \catSet_\Delta \arrow[r] & \sf{Ab}_\Delta  
\end{tikzcd}
\]
and note that it is left adjoint to the forgetful functor $F$ from simplicial Abelian groups to simplicial sets, we obtain the following corollary. 

\begin{cor}\label{cor:Homology_homotopy_classes_WA}
	There is a natural isomorphism of groups 
	\[
	[X,{\overline{W}A}]\cong H_{-1}(\underline{\sf{Ch}^+}(M(\ZZ[X]),M(A))).
	\]
\end{cor}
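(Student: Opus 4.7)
The plan is to combine the free--forgetful adjunction $\ZZ[-]\dashv F$ with the enriched Dold--Kan correspondence established in the two preceding propositions. The corollary will follow by chaining together four natural isomorphisms and a degree shift.

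First, I would identify $[X,\overline{W}A]$ with $\pi_0$ of a mapping simplicial Abelian group. The key computational input is that for any simplicial sets $X$ and $Y$, one has $\ZZ[X\times Y]\cong \ZZ[X]\otimes\ZZ[Y]$ levelwise as simplicial Abelian groups, since for sets $A,B$ the natural map $\ZZ[A]\otimes\ZZ[B]\to \ZZ[A\times B]$ is an isomorphism. Combined with the $\ZZ[-]\dashv F$ adjunction, this produces a natural isomorphism of simplicial sets
\[
\Map(X,A)\cong \underline{\sf{Ab}_\Delta}(\ZZ[X],A)
\]
for any simplicial Abelian group $A$, where $\Map(X,A)_n = \catsSet(X\times\Delta^n,A)$. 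Since every simplicial Abelian group is a Kan complex, simplicial homotopy is an equivalence relation, and $[X,A]=\pi_0\Map(X,A)\cong\pi_0\underline{\sf{Ab}_\Delta}(\ZZ[X],A)$.

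Second, I would apply this with $A=\overline{W}A$. Using Proposition \ref{prop:WA_DK_shift} to write $\overline{W}A\cong\Gamma(MA[-1])$ and the Dold--Kan unit $\ZZ[X]\cong\Gamma(M(\ZZ[X]))$, we substitute to get
\[
[X,\overline{W}A]\cong\pi_0\underline{\sf{Ab}_\Delta}\bigl(\Gamma(M(\ZZ[X])),\Gamma(MA[-1])\bigr).
\]
Corollary \ref{cor:Homotopy_is_Homology} then identifies the right-hand side with $H_0(\underline{\sf{Ch}^+}(M(\ZZ[X]),MA[-1]))$. Finally, a degree count shows $\underline{\sf{Ch}^+}(C,D[-1])_n=\prod_k\on{Hom}(C_k,D_{k+n-1})=\underline{\sf{Ch}^+}(C,D)_{n-1}$, so $H_0$ of the shifted hom complex equals $H_{-1}$ of the unshifted one, yielding the claimed isomorphism.

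The main obstacle is the first step: making precise that the free--forgetful adjunction respects the relevant homotopical structure, i.e.\ that the natural bijection between $\catsSet(X,A)$ and $\sf{Ab}_\Delta(\ZZ[X],A)$ upgrades to an isomorphism of simplicial sets $\Map(X,A)\cong\underline{\sf{Ab}_\Delta}(\ZZ[X],A)$. Everything hinges on the on-the-nose compatibility $\ZZ[X\times\Delta^n]\cong\ZZ[X]\otimes\ZZ[\Delta^n]$, which works precisely because the tensor product in $\sf{Ab}_\Delta$ is levelwise and $\ZZ[-]$ takes products of sets to tensor products of free Abelian groups. Once this identification is in place, the remaining steps are essentially bookkeeping with shifts.
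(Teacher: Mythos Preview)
Your proposal is correct and follows essentially the same chain of isomorphisms as the paper: pass from $[X,\overline{W}A]$ to $\pi_0\underline{\sf{Ab}_\Delta}(\ZZ[X],\overline{W}A)$ via the free--forgetful adjunction, replace $\overline{W}A$ by $\Gamma(MA[-1])$ using Proposition~\ref{prop:WA_DK_shift}, invoke Corollary~\ref{cor:Homotopy_is_Homology} to obtain $H_0(\underline{\sf{Ch}^+}(M(\ZZ[X]),MA[-1]))$, and finish with the shift identity. Your treatment of the first step is more explicit than the paper's, which simply writes $\pi_0(\underline{\sf{Set}_\Delta}(X,\overline{W}A))\cong\pi_0(\underline{\sf{Ab}_\Delta}(\ZZ[X],\overline{W}A))$ without elaboration.
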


\begin{proof}
	This amounts to chaining together the isomorphisms of \autoref{cor:Homotopy_is_Homology} and \autoref{prop:WA_DK_shift}, with the isomorphism 
	\[
	H_{-n}(C_\bullet)\cong H_0(C_\bullet[-n]). 
	\]	
	More explicitly, we have 
	$$
	\begin{aligned}
		[X,\barW A] &= \pi_0( \underline{\sf{Set}_\Delta}(X,\barW A) )\\ 
		&\cong \pi_0( \underline{\sf{Ab}_\Delta}(\ZZ[X],\barW A) ) \\
		&\cong \pi_0( \underline{\sf{Ab}_\Delta}(\ZZ[X],\Gamma(MA[-1])) ) \\
		&\cong H_0( \underline{\sf{Ch^+}}(M(\ZZ[X]),MA[-1]) ) \\
		&\cong H_0( \underline{\sf{Ch^+}}(M(\ZZ[X]),MA)[-1] ) \\
		&\cong H_{-1}( \underline{\sf{Ch^+}}(M(\ZZ[X]),MA)). 
	\end{aligned}
	$$	
\end{proof}

\begin{thm}\label{cor:addition_twistings_is_addition_cohomology}
	For a simplicial Abelian group $K$, the isomorphisms 
	\[
	\Twist_K(X)_{/\sim}\cong [X,\overline{W}K] \cong H_{-1}(\underline{\sf{Ch}^+}(M(\ZZ[X]),M(A)))
	\]
	of Corollaries \ref{cor:Homology_homotopy_classes_WA} and \ref{cor:twistings_are_PBs} define an isomorphism of groups, where $\Twist_K(X)_{/\sim}$ is equipped with the levelwise sum of twisting functions.
\end{thm}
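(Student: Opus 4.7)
The plan is to verify that each of the two constituent bijections in the composite
\[
\Twist_K(X)_{/\sim}\xrightarrow{\;\cong\;} [X,\overline{W}K] \xrightarrow{\;\cong\;} H_{-1}(\underline{\sf{Ch}^+}(M(\ZZ[X]),M(K)))
\]
is a group homomorphism, where the middle set is given the natural abelian group structure coming from the fact that $\overline{W}K$ is a simplicial abelian group (its $n$-simplices are the product $K_{n-1}\times\cdots\times K_0$ of abelian groups, with componentwise operations preserved by the face and degeneracy maps). Since $\overline{W}K$ is a Kan complex (being a simplicial abelian group), simplicial homotopy is an equivalence relation on $\catsSet(X,\overline{W}K)$, and pointwise addition of maps descends to give $[X,\overline{W}K]$ an abelian group structure.

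First, I would verify that $\eta\mapsto [\overline{\theta^\eta}]$ is a group homomorphism. By Lemma \ref{lem:classifying_map_of_twisting_function}, the classifying map of $\eta$ is given explicitly by
\[
\overline{\theta^\eta}_n(x)=\bigl(\eta_n(x),\,\eta_{n-1}(d_0 x),\,\ldots,\,\eta_1(d_0^{n-1}x)\bigr)\in\overline{W}K_n.
\]
Since addition in $\overline{W}K_n$ is componentwise and the coordinates are linear in $\eta$, we obtain $\overline{\theta^{\eta+\eta'}} = \overline{\theta^\eta}+\overline{\theta^{\eta'}}$ as maps of simplicial sets, which descends to an equality in $[X,\overline{W}K]$. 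Similarly the zero twisting function is sent to the constant map at $0$, which classifies the trivial bundle.

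Second, I would argue that the isomorphism $[X,\overline{W}K]\cong H_{-1}(\underline{\sf{Ch}^+}(M(\ZZ[X]),M(K)))$ of Corollary \ref{cor:Homology_homotopy_classes_WA} is additive by tracing through each step of its construction: the free-forgetful adjunction $\catsSet(X,\overline{W}K)\cong \underline{\sf{Ab}_\Delta}(\ZZ[X],\overline{W}K)$ is additive because the right-hand side inherits its group structure pointwise from $\overline{W}K$; the isomorphism $\overline{W}K\cong \Gamma(MK[-1])$ of Proposition \ref{prop:WA_DK_shift} is a natural isomorphism of simplicial abelian groups; the enrichment equivalence $\Gamma(\underline{\sf{Ch}^+}(C,D))\simeq \underline{\sf{Ab}_\Delta}(\Gamma C,\Gamma D)$ and the subsequent $\pi_0$-to-$H_0$ identification are additive because $\Gamma$ and $M$ form an additive equivalence; and the shift $H_0(C[-1])\cong H_{-1}(C)$ is tautologically additive. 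Hence the composite is a group isomorphism.

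The main (and only nontrivial) obstacle is really a bookkeeping one: ensuring that the group structure on $[X,\overline{W}K]$ implicit in the Dold--Kan-based chain of isomorphisms — namely, the one inherited from $\underline{\sf{Ab}_\Delta}(\ZZ[X],\overline{W}K)$ — agrees with the pointwise-addition structure used in the first step. This coincidence is immediate once one observes that both structures arise from levelwise addition in the target simplicial abelian group $\overline{W}K$, and that the free-forgetful adjunction is compatible with this addition. Once all three structures are identified with levelwise addition in $\overline{W}K$, both bijections become group isomorphisms by inspection, completing the proof.
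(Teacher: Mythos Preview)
Your proposal is correct and follows essentially the same approach as the paper's proof: verify that the first bijection $\Twist_K(X)_{/\sim}\to [X,\overline{W}K]$ respects levelwise addition (the paper says this is ``immediate from construction,'' which is exactly your componentwise-linearity observation using the formula from Lemma~\ref{lem:classifying_map_of_twisting_function}), and then invoke the second isomorphism. The only difference is that the paper does not re-argue additivity of the second bijection, since Corollary~\ref{cor:Homology_homotopy_classes_WA} already states it as a natural isomorphism \emph{of groups}; your unpacking of why each step in that chain is additive is therefore redundant but harmless.
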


\begin{proof}
	It is immediate from construction that the bijection $[X,\overline{W}K]\cong \Twist_K(X)_{/\sim}$ sends the levelwise sum of maps to the levelwise sum of twisting functions. Thus, the latter must define a group structure on $\Twist_K(X)_{/\sim}$ isomorphic to that on $[X,\overline{W}K]$. Corollary \ref{cor:Homology_homotopy_classes_WA} completes the proof. 
\end{proof}

\subsection{The case of an Abelian group} 
\label{sec:The case of an Abelian group}

We now specialize even further. Let $H\in \sf{Ab}$, and let $A=N(H)$ denote
the nerve of $H$. By explicit construction, this means that $A=\Gamma(H[-1])$, where $H$ is abusively used to denote the chain complex concentrated in degree $0$. 

\begin{pro}\label{prop:bij_2cohom_homotopy_classesWK}
	There is a natural bijection 
	\[
	[X,{\overline{W}A}]\cong H_{-2}(\underline{\Ch^+}(M(\ZZ[X]),H))=H^2(X,H).
	\]
\end{pro}

\begin{proof}
	This is simply an application of Corollary \ref{cor:Homology_homotopy_classes_WA}. Note that the additional degree shift follows because $N(H)$ corresponds under the Dold-Kan correspondence to $H[-1]$. 
\end{proof}

In this case, we can trace through the equivalences to obtain an explicit 2-cocycle representing a twisting function. Given a twisting function $\{\eta_n:X_n\to A_{n-1}\}$, the degree 2 component of the corresponding map to ${\overline{W}A}$ is 
\[
\overline{\theta^\eta_2}(x)=(\eta_2(x))\in A_1= H .
\]
Applying the normalized chain complex functor $M$, we then obtain the map 
\[
\begin{tikzcd}[row sep=0em]
	M(\overline{\theta}): &[-3em] \ker(d_0)\cap\ker(d_1) \arrow[r] & H \\
	& x \arrow[r,mapsto] & \eta_2(x). 
\end{tikzcd}
\]
To turn this into a cocycle on $X_2$ we use \cite[Chapter 2, Thm 2.1]{goerss2009simplicial}, which says that the composite 
\[
\begin{tikzcd}
	M(\ZZ[X_2]) \arrow[r] & \ZZ[X_2] \arrow[r] & \ZZ[X_2]/\ZZ[s(X_1)] 
\end{tikzcd}
\]
is an isomorphism, where $\ZZ[s(X_1)]$ is the subgroup of $\ZZ[X_2]$ generated by the degenerate 2-simplices. The inverse isomorphism is induced by the map 
\[
\begin{tikzcd}[row sep=0em]
	X_2 \arrow[r] & \ker(d_0)\cap \ker(d_1)\\
	x \arrow[r,mapsto] & x-s_0(d_0)x-s_1d_1(x-s_0d_0(x)).
\end{tikzcd}
\]
As a result, the twisting function $\eta$ is sent to the cocycle $\gamma_\eta:X_2\to H$ defined by
\[
\gamma_\eta(x)= \eta_2(x-s_0(d_0)x-s_1d_1(x-s_0d_0(x)))
\]
Notice that this cocycle $\gamma_\eta$ vanishes on degenerate simplices, that is, it is \emph{normalized}. 

On the other hand, we have 

\begin{lem}\label{lem:twist_fxn_norm_cocycle}
	Let $\gamma:X_2\to H$ be a normalized 2-cocycle on $X$. Then the maps $\eta_n:X_n\to N(H)_{n-1}$ defined inductively by 
	\[
	\eta_1(x)=0,\qquad \eta_2(x)=\gamma(x)
	\]
	and 
	\[
	\eta_n(x)=\left(\gamma(d_3\cdots d_n(x)),(\eta_{n-1}(d_1x)-\eta_{n-1}(d_0x))\right)
	\]
	defines a twisting function on $X$. 
\end{lem}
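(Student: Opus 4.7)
The plan is to induct on $n \geq 2$, verifying the twisting function identities
\[
d_i \eta_n(x) = \begin{cases} \eta_{n-1}(d_1 x) - \eta_{n-1}(d_0 x) & i = 0, \\ \eta_{n-1}(d_{i+1} x) & 1 \leq i \leq n-1, \end{cases}
\]
together with $\eta_n(s_0 x) = 0$ and $\eta_n(s_j x) = s_{j-1}\eta_{n-1}(x)$ for $j \geq 1$. The $i = 0$ identity holds tautologically by the recursive definition, since the $d_0$-face on $N(H)_{n-1} = H^{n-1}$ drops the first coordinate. The base case $n = 2$ is immediate: the face relations live in the singleton $N(H)_0$, and the degeneracy relations reduce to normalization of $\gamma$. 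For $n = 3$, the only substantive face identity is $d_1 \eta_3(x) = \gamma(d_2 x)$, which is precisely the 2-cocycle equation $\gamma(d_0 x) - \gamma(d_1 x) + \gamma(d_2 x) - \gamma(d_3 x) = 0$ on $x \in X_3$.

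For the inductive step on face identities, decompose $\eta_n(x) = (\gamma(d_3\cdots d_n x),\, \eta_{n-1}(d_1 x) - \eta_{n-1}(d_0 x)) \in H \times N(H)_{n-2}$. For $i \geq 2$, the face map $d_i$ on $N(H)_{n-1}$ preserves the first coordinate and acts as $d_{i-1}$ on the second. Together with the inductive hypothesis $d_{i-1}\eta_{n-1}(y) = \eta_{n-2}(d_i y)$ and the simplicial identities $d_k d_{i+1} = d_i d_k$ for $k < i+1$, the equation $d_i \eta_n(x) = \eta_{n-1}(d_{i+1} x)$ reduces to the equality $\gamma(d_3 \cdots d_n x) = \gamma(d_3 \cdots d_{n-1} d_{i+1} x)$, which follows by commuting $d_{i+1}$ leftward past the subsequent face maps. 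The critical case $i = 1$ is where the cocycle condition re-enters: rewriting each composition $d_3 \cdots d_{n-1} d_k$ in the form $d_k d_4 d_5 \cdots d_n$ for $k \in \{0,1,2\}$, matching the first coordinates of $d_1 \eta_n(x)$ and $\eta_{n-1}(d_2 x)$ becomes exactly the 3-cocycle equation applied to $y = d_4 \cdots d_n x \in X_3$. The remaining coordinates match via the inductive hypothesis together with the simplicial identities $d_0 d_1 = d_0 d_0$, $d_1 d_2 = d_1 d_1$, and $d_0 d_2 = d_1 d_0$, which collapse the four terms appearing into the expected two.

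The degeneracy identities are verified by the same strategy. For $\eta_n(s_0 x) = 0$, the relations $d_0 s_0 = d_1 s_0 = \mathrm{id}$ make the second coordinate vanish, while $d_k s_0 = s_0 d_{k-1}$ for $k \geq 3$ puts the first coordinate in the form $\gamma(s_0 \cdot)$, which vanishes by normalization of $\gamma$. For $j \geq 1$, commuting $s_j$ past $d_3, \ldots, d_n$ together with the inductive hypothesis on the second factor yields $\eta_n(s_j x) = s_{j-1}\eta_{n-1}(x)$, once more invoking normalization of $\gamma$ to eliminate the appropriate coordinates. The main obstacle throughout is purely bookkeeping: patiently rearranging the long face-composition $d_3 \cdots d_n$ relative to each $d_k$ or $s_j$ via the simplicial identities, and tracking which coordinate of $\eta_n(x)$ lives in which factor of $N(H)_{n-1} = H^{n-1}$. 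Conceptually, the recursive definition is engineered precisely so that the 2-cocycle condition on $\gamma$, applied to the 3-simplex face $d_4 \cdots d_n x$, delivers the twisting condition at level $n$.
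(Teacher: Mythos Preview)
Your proposal is correct and follows essentially the same inductive verification as the paper's proof: check the twisting identities directly, with the 2-cocycle condition entering precisely at the $i=1$ face relation (on the 3-simplex $d_4\cdots d_n x$) and normalization of $\gamma$ handling the degeneracy relations. One minor slip: what you call the ``3-cocycle equation'' in the inductive step is the 2-cocycle condition evaluated on a 3-simplex, as you correctly stated it for $n=3$.
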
 

\begin{proof}
	This follows by direct verifications of the relations between $\eta$ and the face and degeneracy maps. The details can be found in Appendix \ref{sec:simp_comp}. 
\end{proof}

\begin{pro}
	The construction of Lemma \ref{lem:twist_fxn_norm_cocycle} defines a bijection between cohomology classes of normalized 2-cocycles on $X$ with values in $H$ and equivalence classes of twisting functions. Moreover, under this bijection, the group structure on  $H^2(X,H)$, and thus the induced group structure on $\on{Prin}_{N(H)}(X)$ corresponds to the levelwise addition of twisting functions. 
\end{pro}

\begin{proof}
	This is simply a specialization of Theorem \ref{cor:addition_twistings_is_addition_cohomology} to the explicit map derived in Lemma \ref{lem:twist_fxn_norm_cocycle}. 
\end{proof}

When dealing with the nerve of an Abelian group, we can strengthen the correspondence between statements (1) and (3) in Proposition \ref{prop:triv_bundle=has_section} to a statement about cochains and cocycles. To this end, let $\gamma:X_2\to H$ be a normalized 2-cocycle, and let $\eta$ be the corresponding twisting function, as described in Lemma \ref{lem:twist_fxn_norm_cocycle}. Given a normalized 1-cochain $\alpha:X_1\to H$ such that $\partial \alpha=\gamma$, we construct a section of the corresponding (trivial) $N(H)$-bundle $\pi_\eta:N(H)\times_\eta X\to X$ as follows. 

By Lemma \ref{lem:N(H)-bundle_rel_2-cosk} in the appendix, $\pi_\eta$ is relatively 2-coskeletal over $X$. Thus, it suffices for us to construct a map of 2-truncations $\on{tr}_2(X)\to \on{tr}_2(N(H)\times_\eta X)$ over $\on{tr}_2(X)$ by Lemma \ref{lem:rel_cosk_unique_ext}. We define a map $(\varphi_\alpha)_i(x)=(\phi_i(x),x)$ for $i=0,1,2$ by 
\[
\begin{aligned}
	\phi_0(x) & =0\\
	\phi_1(x)&=\alpha(x)\\
	\phi_2(x)&=(\alpha(d_2x),\alpha(d_1x)-\alpha(d_2x)).
\end{aligned}
\]
Denote by $\on{No}(X,H,\gamma)$ the set of normalized $H$-valued 1-cochains $\alpha$ on $X$ such that $\partial \alpha=\gamma$. 

\begin{pro}\label{prop:sections_and_trivializing_cochains}
	For a twisting function $\eta$ corresponding to a normalized 2-cocycle $\gamma$, the map
	\[
	\begin{tikzcd}[row sep=0em]
		\on{No}(X,H,\gamma) \arrow[r] & \Sec(\pi_\eta)\\
		\alpha\arrow[r,mapsto] & \varphi_\alpha
	\end{tikzcd}
	\] 
	is a bijection.
\end{pro}

\begin{proof}
	We first must check that $\varphi_\alpha$ is simplicial on the 2-truncation. We first check the face maps 
	\[
	\begin{aligned}
		d_i(\phi_2(x),x)&= d_i((\alpha(d_2x),\alpha(d_1x)-\alpha(d_2x)),x)\\
		&= \begin{cases}
			(\alpha(d_1(x))-\alpha(d_2(x))+\eta_2(x),d_0x) & i=0\\
			(\alpha(d_1(x)),d_1(x)) & i=1\\
			(\alpha(d_2(x)),d_2(x)) & i=2 
		\end{cases}\\
	&= \begin{cases}
		(\alpha(d_0(x)),d_0(x)) & i=0\\
		(\alpha(d_1(x)),d_1(x)) & i=1\\
		(\alpha(d_2(x)),d_2(x)) & i=2 
	\end{cases}\\
    &=\alpha(d_i(x))
	\end{aligned}
	\] 
	as desired. Here, the third equality follows because 
	\[
	\eta(x)=\partial \alpha(x) =\alpha(d_0(x))-\alpha(d_1(x))+\alpha(d_2(x)).
	\]
	The relations with the face maps when $x\in X_1$ are immediate, since $N(H)_0$ is a singleton.
	
	We then check the relations with the degeneracies. For $x\in X_0$, 
	\[
	s_0(\phi_0(x))=s_0(0)=0=\alpha(s_0(x))
	\]
	since $\alpha$ is normalized. For $x\in X_2$, 
	\[
	\begin{aligned}
		\phi_2(s_0(x))&=(\alpha(d_2s_0(x)),\alpha(d_1s_0(x))-\alpha(d_2s_0(x)))\\
		& =(\alpha(s_0d_1(x)),\alpha(x)-\alpha(s_0d_1(x)))\\
		&=(0,\alpha(x))=s_0(\phi_2(x)) 
	\end{aligned}
	\]
	by the simplicial identities and the normalization of $\alpha$. Finally, 
	\[
	\begin{aligned}
		\phi_2(s_1(x)) &= (\alpha(d_2s_1(x)),\alpha(d_1s_1(x))-\alpha(d_2s_1(x)))\\
		& =(\alpha(x),\alpha(x)-\alpha(x))\\
		&=(\alpha(x),0)
	\end{aligned}
	\]
	as desired. This establishes that the map in question is well-defined. 
	
	On the other hand, given such a section $\varphi:X\to N(H)\times_\eta X$, expressed as $\varphi(x)=(\psi(x),x)$, we define a 1-cochain $\beta:X_1\to H$ by simply setting $\beta(x)=\psi_1(x)$. To see that this is a normalized 1-cochain, we simply note that 
	\[
	\beta(s_0(x))=\psi_1(s_0(x))=s_0(\psi_0(x))=s_0(0)=0. 
	\]    
	To see that $\partial\beta =\gamma$, note that $\psi_2(x)=(\psi_2^1(x),\psi_2^2(x))$ must satisfy the equations 
	\begin{equation}\label{eq:sys_eq_section}
	\begin{aligned}
		\psi_2^2(x)+\eta_2(x)&=\psi_1(d_0(x)) \\
		\psi_2^1(x)+\psi_2^2(x)&=\psi_1(d_1(x))\\
		\psi_2^1(x)&=\psi_1(d_2(x)).
	\end{aligned}
	\end{equation}
	Taking the alternating sum of these equations yields 
	\[
	\gamma(x)=\eta_2(x)=\psi_1(d_0(x))-\psi_1(d_1(x))+\psi_1(d_2(x))=\beta(d_0(x))-\beta(d_1(x))+\beta(d_2(x))
	\]
	as desired. 
	
	Finally, to complete the proof of bijectivity we check that $\varphi_\beta=\varphi$. However, we see that the degree 1 and 2 components agree, and it is easy to see that the equations (\ref{eq:sys_eq_section}) completely determine $\psi_2^2(x)$ and $\psi_2^1(x)$ from the values $\eta_2(x)$ and the function $\psi_1$. Thus, the proposition is proven.
\end{proof}

\subsection{The tensor product of principal bundles}
\label{sec:The tensor product of principal bundles}

We now return to the case of a generic simplicial Abelian group. Let $K$ be a simplicial Abelian group, and $X\in \catSet_\Delta$ a simplicial set. Throughout, we write $K$ multiplicatively, to better accord with the notation for group actions. Let 
\[
\begin{tikzcd}[row sep=0em]
	E \arrow[r,"\pi_E"] & X \\
	F \arrow[r,"\pi_F"] & X
\end{tikzcd}
\]
be two principal $K$-bundles over $X$. Define an equivalence relation on the set $E\times_X F$  by setting $(ke,f)\sim (e,kf)$ for all $k\in K$, $e\in E$, and $f\in F$, and let 
\[
E\otimes_K F:=(E\times_X F)_{/\sim}.
\]
Note that, equivalently, the equivalence relation is $(e,f)\sim (x,y)$ if and only if there exists $k\in K$ such that $(ke,k^{-1}f)=(x,y)$. Also note that since $E$ and $F$ are principal bundles, if such a $k$ exists, it is unique. This construction precisely mirrors that of the tensor product of principal $A$-bundles for a topological Abelian group $A$. See, for instance \cite[Example 2.2.4]{BunkePrincipal} for details.

\begin{pro}
	The canonical map $\pi_{E\otimes F}:E\otimes_K F \to X$ is a principal $K$-bundle.  
\end{pro}

\begin{proof}
	We first note that, given a map $\phi:[n]\to [m]$ in the simplex category and $(k,e,f)\in K\times (E\times_X F)$, we have 
	\[
	\begin{aligned}
		\phi^\ast(k\cdot e,f) &=(\phi^\ast(k\cdot e),\phi^\ast(f))&\\
		&=(\phi^\ast(k)\cdot\phi^\ast( e),\phi^\ast(f))\\
		&\sim(\phi^\ast( e),\phi^\ast(k)\cdot\phi^\ast(f))\\
		&=(\phi^\ast( e),\phi^\ast(k\cdot f))\\
		&= \phi^\ast(e,k\cdot f)
	\end{aligned}
	\] 
	so that the simplicial structure on $E\times_X F$ descends to $E\otimes_K F$. Moreover, since multiplication by elements of $k$ occurs fiberwise, the map $\pi_{E\otimes F}$ to $X$ remains well defined. 
	
	We then show that $E\otimes_K F$ carries a free $K$ action with quotient $X$. The action is defined by 
	\[
	k\cdot [e,f]=[ke,f]
	\]
	and is clearly simplicial. As above, it is immediate that the canonical map $\pi_{E\otimes F}:E\otimes_K F\to X$ is $K$-equivariant.
	
	To see that the action is free and transitive on fibers,  suppose we are given $[e,f],[x,y]\in E\otimes_K F$ such that $\pi_{E\otimes F}(e,f)=\pi_{E\otimes F}(x,y)$. Then choose $k\in K$ and $h\in H$ such that 
	\[
	ke=x \quad \text{and}\quad hf=y.  
	\]
	Then we see that 
	\[
	(h^{-1}ke,f)\sim (x,y)
	\]
	so the action is transitive on fibers.
	On the other hand, suppose that $h,k\in K$ such that $(he,f)\sim (x,y)$ and $(ke,f)\sim (x,y)$. Then, by the definition of the equivalence relation $\sim$, there is a unique $g\in K$ such that $(gke,g^{-1}f)=(ghe,g^{-1}f)$. Since the action of $K$ on $F$ is free, this implies $g=1$, and so $ke=he$. Since the action of $K$ on $E$ is free, this implies $k=h$, and so the action of $K$ on $E\otimes_K F$ is free. 
\end{proof}

\begin{pro}\label{pro:Addition_of_twists_tensor}
	The cohomology class in $H_{-1}(\underline{\sf{C}^+}(M(\ZZ[X]),M(K)))$ corresponding to $E\otimes_K F$ is the sum of the cohomology classes associated to $E$ and $F$, respectively.
\end{pro}

\begin{proof}
	By Theorem \ref{cor:addition_twistings_is_addition_cohomology}, it is sufficient to compute this on bundles of the form $K\times_\eta X$ for a twisting function $\eta$, since every $K$-principal bundle is isomorphic to such a one. We note that when $E=K\times_\eta X$ and $F=K\times_\rho X$, we can identify $E\otimes_{K} F$ with the set of triples of the form $(k,1,x)$ where $k\in K$ and $x\in X$. We can compute the face and degeneracy maps, which are elementwise those of $K$ and $X$, except for $d_0$. In that case, we have 
	\[
	d_0(k,1,x)=(d_0(k)\eta(x),1\rho(x),d_0(x))\sim (d_0(k)\eta(x)\rho(x),1,d_0(x)).
	\]
	So that a twisting function associated to $E\otimes_{K} F$ is $\eta\rho$. Since the group operation on twisting functions is identified with the group operation on cohomology, the proposition follows. 
\end{proof}

In the case of a general simplicial Abelian group $K$, the same reasoning as in the proposition allows us to define a canonical isomorphism 
\[
\begin{tikzcd}[row sep=0em]
	(K\times_\eta X)\otimes_K (K\times_{\xi}X )\arrow[r] & K\times_{\eta\xi} X\\
	{[(k,x),(h,x)]}\arrow[r,mapsto] & (kh,x)
\end{tikzcd}
\]
which will be of use when we discuss monoidal functors. 

Recall that we denote by $\sf{Bun}_K(X)$ the groupoid of principal $K$-bundles over $X$. 

\begin{thm}\label{thm:bun is SMC}
	The product $-\otimes_K-$ equips $\sf{Bun}_K(X)$ with the structure of a symmetric monoidal groupoid. 
\end{thm}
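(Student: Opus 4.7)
The plan is to exhibit the coherence data for the monoidal structure explicitly on representatives and then verify the axioms by direct diagram chasing. I take the trivial bundle $I := K \times X = K \times_0 X$ as the monoidal unit. On representatives, the structural natural isomorphisms are given by the expected formulas:
\[
\begin{aligned}
\alpha_{E,F,G}\bigl([[e,f],g]\bigr) &= [e,[f,g]], \\
\lambda_E\bigl([(k,x),e]\bigr) &= k\cdot e, \\
\rho_E\bigl([e,(k,x)]\bigr) &= k\cdot e, \\
\sigma_{E,F}\bigl([e,f]\bigr) &= [f,e].
\end{aligned}
\]

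First, I would check well-definedness on $\sim$-classes. For the associator, the two layers of identifications interact transparently. For the symmetry, the identification $[ke,f]=[e,kf]$ in $E\otimes_K F$ is sent to $[f,ke]=[kf,e]$ in $F\otimes_K E$, which is precisely the defining relation in the target. The Abelianness of $K$ is indispensable here: it is used both to ensure that the induced $K$-action on $E\otimes_K F$ defined by $k'\cdot[e,f]:=[k'e,f]$ (which is needed to make sense of the unitors and to identify the $K$-equivariant structure on $E\otimes_K F$) is well-defined, and to guarantee that $\sigma$ is itself well-defined.

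Second, I would verify that each morphism is simplicial, $K$-equivariant, and covers the identity on $X$; all of these checks are immediate from the componentwise definitions. Naturality in each argument is similarly clear from direct substitution. Because of the freeness of the $K$-action, every such bundle morphism is automatically an isomorphism, so we obtain natural isomorphisms in $\sf{Bun}_K(X)$.

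Third, I would verify the coherence axioms --- the pentagon for $\alpha$, the triangle linking $\alpha$ with $\lambda$ and $\rho$, the hexagon linking $\alpha$ with $\sigma$, and the involutivity $\sigma_{F,E}\circ\sigma_{E,F} = \on{id}_{E\otimes_K F}$ --- by tracing the effect of each path on a representative. In every case both composites produce literally the same output on representatives; for instance, both sides of the pentagon send $[[[e,f],g],h]$ to $[e,[f,[g,h]]]$, and both sides of the triangle send $[[e,(k,x)],f]$ to $[ke,f]$. The main obstacle is purely combinatorial: keeping track of the two layers of equivalence relations on iterated tensor products and confirming that the defining relations on outer and inner tensors compose correctly. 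Once this bookkeeping is handled, every coherence diagram commutes strictly on representatives, and no higher homotopical data is required.
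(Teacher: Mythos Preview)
Your proposal is correct and follows essentially the same approach as the paper: both exhibit the same structure maps on representatives, verify well-definedness under the equivalence relation, and check the coherence axioms by direct diagram chasing. The only minor difference is that the paper economizes slightly by observing that the associator and braiding are induced from the Cartesian monoidal structure on $\catSet$, so pentagon and hexagon hold automatically once well-definedness is established, whereas you trace each coherence diagram directly on representatives; both arguments are equally valid.
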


\begin{proof}
	This is the content of Theorem \ref{thm:Bun_SMC} in the appendix. 
\end{proof}

As it turns out, $\sf{Bun}_K(X)$ is more than simply a groupoid with a monoidal structure; it is a 2-group. See, e.g., \cite{Baez2group} for further details. 

\begin{cor}
	The category $\sf{Bun}_K(X)$ is a commutative $2$-group. 
\end{cor}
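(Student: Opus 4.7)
\medskip

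The plan is to combine Theorem \ref{thm:bun is SMC} with an explicit construction of monoidal inverses, invoking the classification results already established. Recall that a commutative (or Picard) $2$-group is precisely a symmetric monoidal groupoid in which every object admits a monoidal inverse up to isomorphism. Since Theorem \ref{thm:bun is SMC} already equips $\sf{Bun}_K(X)$ with the structure of a symmetric monoidal groupoid under $-\otimes_K-$, the only remaining task is to produce, for every principal $K$-bundle $E\to X$, a bundle $E^{-1}$ together with an isomorphism $E\otimes_K E^{-1}\cong K\times X$, where $K\times X$ is the trivial bundle serving as the monoidal unit.

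The route I would take goes through the twisting-function description. By Corollary \ref{cor:twistings_are_PBs}, every bundle $E\to X$ is isomorphic, as a principal $K$-bundle, to some twisted product $K\times_\eta X$. Since $K$ is a simplicial Abelian group, the levelwise inverse $-\eta$ is again a twisting function (this is the content of the lemma that $\Twist_K(X)$ is an Abelian group). I would then set $E^{-1}:=K\times_{-\eta}X$ and observe that, by Proposition \ref{pro:Addition_of_twists_tensor}, the tensor product
\[
(K\times_\eta X)\otimes_K(K\times_{-\eta}X)
\]
corresponds to the cohomology class $\eta+(-\eta)=0$, hence is isomorphic to $K\times_0 X=K\times X$, the monoidal unit. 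In fact, the explicit isomorphism
\[
\begin{tikzcd}[row sep=0em]
(K\times_\eta X)\otimes_K(K\times_{-\eta}X) \arrow[r] & K\times X \\
{[(k,x),(h,x)]} \arrow[r,mapsto] & (kh,x)
\end{tikzcd}
\]
constructed after Proposition \ref{pro:Addition_of_twists_tensor} can be quoted verbatim, so there is no real computation to carry out.

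The only mild subtlety — which I would mention but not belabor — is that this produces an inverse for the representative $K\times_\eta X$ rather than for an arbitrary isomorphic bundle $E$. This is harmless because isomorphic objects in a monoidal category have isomorphic monoidal inverses: given any chosen isomorphism $E\cong K\times_\eta X$, we transport $K\times_{-\eta}X$ along it to obtain an inverse for $E$ itself. Thus invertibility passes to the whole groupoid, and together with Theorem \ref{thm:bun is SMC} this yields the claim. I do not anticipate any genuine obstacle here; the statement is a formal consequence of the symmetric monoidal structure together with the additive group structure on twistings already identified in the preceding subsections.
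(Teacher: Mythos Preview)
Your proposal is correct and follows essentially the same approach as the paper: both reduce to showing every object has a $\otimes_K$-inverse, and both ultimately rest on Proposition \ref{pro:Addition_of_twists_tensor} (addition of twistings corresponds to the tensor product). The paper phrases this slightly more abstractly---noting that the induced monoid structure on $\pi_0(\sf{Bun}_K(X))$ agrees with the group structure coming from cohomology---whereas you spell out the explicit inverse $K\times_{-\eta}X$; these are the same argument at different levels of detail.
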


\begin{proof}
	We need only see that every object $E\to X$ has an inverse up to isomorphism under the tensor product. But since the tensor product induces the same monoid structure on $\pi_0(\sf{Bun}_K(X))$ as the bijection with $H^2(X,K)$, it is immediate that this property holds. 
\end{proof}

Since $\sf{Bun}_K(X)$ is a commutative 2-group, it can be equipped with a symmetric monoidal automorphism $\mathcal{I}:\sf{Bun}_K(X)\to \sf{Bun}_K(X)$ which sends each bundle $E\to X$ to a weak inverse $\overline{E}\to X$ under $\otimes_K$. However, if we restrict our attention to bundles given by twisted products --- i.e., restrict to the full monoidal subcategory $\sf{Bun}_K^{\on{t}}(X)$ on the twisted products $K\times_{\eta} X$ --- this duality may be given the computationally convenient form 
\[
\begin{tikzcd}
	K\times_\eta X \arrow[r,mapsto] & K\times_{\eta^{-1}} X. 
\end{tikzcd}
\]

In general, there is no \emph{bundle} isomorphism between a bundle $E\to X$ and its inverse $\overline{E}\to X$. However, as the next lemma shows, there is a \emph{simplicial} isomorphism between $K\times_\eta X$ and $K\times_{\eta^{-1}} X$.  

\begin{lem}
	Let $K$ be a simplicial Abelian group, and $\eta$ a $K$-valued twisting function on $X$. Then the map 
	\[
	\begin{tikzcd}[row sep=0em]
		f:&[-3em ]K\times_\eta X \arrow[r] & K\times_{\eta^{-1}}X \\
		&	(k,x) \arrow[r,mapsto] & (k^{-1},x)
	\end{tikzcd}
	\]
	is an isomorphism of simplicial sets over $X$. 
\end{lem}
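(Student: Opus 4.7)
The plan is to verify that $f$ is a bijection of graded sets commuting with the face and degeneracy operators, and then note compatibility with projection to $X$. Bijectivity is immediate: on the level of $n$-simplices, $f_n(k,x) = (k^{-1},x)$ has inverse $(h,x) \mapsto (h^{-1},x)$, since inversion in the group $K_n$ is an involution. Compatibility with the projection to $X$ is similarly automatic, since $f$ is the identity in the $X$-coordinate.

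The content, then, is showing that $f$ is simplicial. For degeneracies $s_i$ and for face maps $d_i$ with $i > 0$, both twisted products carry the \emph{unmodified} simplicial structure of $K \times X$, so the required identity reduces to checking that the degeneracies $s_i$ and the faces $d_i$ (for $i>0$) of $K$ commute with inversion. This holds because each $s_i$ and each $d_i$ (for $i>0$) is a group homomorphism $K_n \to K_{n\pm 1}$, and inversion commutes with any group homomorphism.

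The only interesting case is $d_0$. On the domain, one computes
\[
f(d_0(k,x)) = f(d_0(k)\,\eta_n(x), d_0(x)) = \bigl((d_0(k)\,\eta_n(x))^{-1}, d_0(x)\bigr),
\]
while on the codomain,
\[
d_0(f(k,x)) = d_0(k^{-1}, x) = \bigl(d_0(k^{-1})\,\eta_n^{-1}(x), d_0(x)\bigr) = \bigl(d_0(k)^{-1}\,\eta_n(x)^{-1}, d_0(x)\bigr).
\]
Since $K$ is \emph{Abelian}, $(d_0(k)\,\eta_n(x))^{-1} = d_0(k)^{-1}\,\eta_n(x)^{-1}$, and the two expressions agree. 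This is the only place commutativity of $K$ is used; indeed, without it the inverse twisting $\eta^{-1}$ would not even define a twisting function in the sense needed here.

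The main (mild) obstacle is bookkeeping: one must take care that the symbol $\eta^{-1}$ on the right-hand side refers to the pointwise inverse twisting function (whose simplicial identities were established earlier using the Abelian hypothesis on $K$), and that the inversion is performed in the group $K_n$ of the appropriate dimension. Neither presents any real difficulty, and the verification of the three required identities above completes the proof.
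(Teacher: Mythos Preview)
Your proof is correct and follows essentially the same approach as the paper: dispose of $s_i$ and $d_i$ for $i>0$ by observing that the structure maps are group homomorphisms on the $K$-component (hence commute with inversion), check $d_0$ by direct computation using that $K$ is Abelian, and note that the same formula gives the inverse. The paper's version is slightly terser but the argument is identical.
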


\begin{proof}
	Since the structure maps $d_i$ for $i>0$ and $s_i$ act as homomorphisms on the $K$ component, it is immediate that they commute with $f$. It thus suffices for us to check that $f$ commutes with $d_0$. We then compute 
	\[
	f(d_0^\eta(k,x))=f(d_0(k)\eta(x),d_0(x))=((d_0(k)\eta(x))^{-1},x)=d_0^{\eta^{-1}}(k^{-1},x)=d_0^{\eta^{-1}}(f(k,x)),
	\]
	where we denote by $d^\eta_i$ the face map in $K\times_\eta X$. Since the same formula provides an inverse to $f$, the lemma is proven. 
\end{proof}

Note that the map $f$ is not $K$-equivariant, but rather equivariant along the inverse map for $K$. Thus, while it is an isomorphism of simplicial sets over $X$, it is not an isomorphism of bundles.

\section{Twisted distributions}
\label{sec:Twisted distributions}

Have now developed the requisite theory of principal bundles, we come to the main theme of this paper: distributions on principal bundles. We term these \emph{twisted distributions} since they can be viewed as variants of the simplicial distributions of \cite{okay2022simplicial} which are twisted according to a twisting function $\eta$ or the corresponding cohomology class. 

Before defining twisted distributions, however, we must establish a few preliminaries. 

\begin{defn}
	Given a semiring $R$, the \emph{$R$-distributions functor} is the functor 
	\[
	\begin{tikzcd}
		D_R: &[-3em] \catSet \arrow[r] & \catSet 
	\end{tikzcd}
	\]
	which sends $X$ to the set 
	\[
	D_R(X)=\left\lbrace p:X\to R\;\mid \;p \text{ {with finite support}, }\sum_{x\in X} p(x)=1\right\rbrace
	\]
	and sends $f: X\to Y$ to the map 
	\[
	\begin{tikzcd}[row sep=0em]
		D_R(f):&[-3em] D_R(X) \arrow[r] & D_R(Y) \\
		 & p \arrow[r,mapsto] & \left(y \mapsto {\displaystyle\sum_{x\in f^{-1}(y)}} p(x)\right).
	\end{tikzcd}
	\]
\end{defn}

Composition with $D_R$ yields a functor $\catsSet\to \catsSet$, which we also denote by $D_R$. {Following \cite{okay2022simplicial} we typically call a map of simplicial sets $X\to D_R(Y)$ a \emph{simplicial distribution}. 
{For twisted distributions we   borrow the more general notion of simplicial distributions introduced in \cite{barbosa2023bundle}.} In the special case where $R=\RR_{\geq 0}$ is the non-negative reals, we write $D$ for the functor $D_R$.

\begin{defn}
	Let $K$ be a simplicial group, and $\pi_E:E\to X$ a principal $K$-bundle. A \emph{twisted distribution} on $\pi_E$ is a simplicial map $p: X\to D_R(E)$ which makes the diagram 
	\begin{equation}\label{dia:simp-dist}
	\begin{tikzcd}
		 & D_R(E)\arrow[d,"D_R(\pi_E)"] \\
		 X\arrow[r,"\delta"']\arrow[ur,"p"] & D_R(X)
	\end{tikzcd}
	\end{equation}
	commute. 
	
	We write $\sDist(\pi_E)$ for the {set} of simplicial 
	distributions on $\pi_E$. 
	A section $\varphi:X\to E$ of the map $\pi_E:E\to X$ gives a simplicial distribution defined by $\delta^\varphi:X\xrightarrow{\varphi} E\xrightarrow{\delta} D_R(E)$  called a \emph{deterministic distribution}. 
	We write $\delta:\Sec(\pi_E) \to \sDist(\pi_E)$ for the map that sends a section $\varphi$ to the associated deterministic distribution $\delta^\varphi$.
\end{defn}

A simplicial distribution on the principal bundle $\pi_\eta:K\times_\eta X \to X$   is called an {\it $\eta$-twisted distribution}.
For a simplicial set map $f:X\to Y$ it will be convenient to write $f_x=f_n(x)$. With this notation an $\eta$-twisted distribution, i.e., a simplicial set map $p:X\to D_R(E)$ making Diagram (\ref{dia:simp-dist}) commute, consists of   
a family of distributions $\set{p_x:\, x\in X_n,\, n\geq 0}$, where $p_x\in D_R(K_n\times X_n)$ with support contained in $K_n\times \set{x}=\set{(k,x):\, k\in K_n}$, such that for every $\theta:[m]\to [n]$ and $x\in X_n$ the following equation holds
$$
D_R(\theta^*)(p_{x}) = p_{\theta^*(x)}.
$$
When $K=NH$ we adopt the notation  $p_x^{h_1h_2\cdots h_n}=p_n(x)(h_1,h_2,\cdots,h_n)$. By Lemma \ref{lem:twist_fxn_norm_cocycle} a twisted bundle is determined by a normalized cocycle $\gamma: X_2 \to H$. In this case we will write $X_\gamma$ for the twisted product $K\times_\eta X$.

\begin{ex}\label{ex:Delta-2-3} 
Let us consider simplicial distributions on $\Delta^2$ twisted by a normalized cocycle $\gamma:(\Delta^2)_2\to \ZZ_2$. A twisted simplicial distribution is given by a commutative diagram
	$$
	\begin{tikzcd}
		& D_R(\Delta^2_\gamma) \arrow[d,"D_R(\pi)"]  \\
		\Delta^2 \arrow[r,"\delta"'] \arrow[ru,"p"] & D_R(\Delta^2) 
	\end{tikzcd}
	$$
Let $\sigma$ denote the non-degenerate $2$-simplex of $\Delta^2$.
Then $p_\sigma$ is a distribution on $\ZZ_2^2\times (\Delta^2)_2$ with support on $\set{(ab,\sigma):\, a,b,\in \ZZ_2}$. The $d_0$-face map is twisted:
	$$
	p_{d_0\sigma}^a =
	p_\sigma^{0(a+\alpha)}+p_\sigma^{1(a+\alpha)} 
	$$
	where $\alpha=\eta_2(\sigma)=\gamma(\sigma)$ by Lemma \ref{lem:twist_fxn_norm_cocycle}.
	
	Next, consider $\Delta^3$ and a normalized $2$-cochain $\gamma$ on this simplicial set. Let $\sigma$ denote the non-degenerate $3$-simplex of $\Delta^3$.
The $d_0$-face map is given by
	$$
	p_{d_0\sigma}^{ab} = p_\sigma^{0(a+\alpha)(b+\beta)} + p_\sigma^{1(a+\alpha)(b+\beta)}
	$$
	where $(\alpha,\beta)=\eta_3(\sigma)=(\gamma(d_3\sigma),\gamma(d_1\sigma)-\gamma(d_0\sigma))$ by Lemma \ref{lem:twist_fxn_norm_cocycle}.
\end{ex}

\subsection{The category of $K$-bundle scenarios}

We now turn our attention to constructing appropriate categories of $K$-bundle scenarios, which we will accomplish with the help of the Grothendieck construction. The reader looking for more background on 2-categories and the Grothendieck construction can consult, e.g., Chapters 2 and 10 of \cite{Johnson_Yau_2021}.

Let $K$ be a simplicial group. Given a simplicial map $f:X\to Y$, pullback along $f$ induces a functor 
\[
\begin{tikzcd}
	f^\ast: &[-3em] \sf{Bun}_K(Y) \arrow[r] & \sf{Bun}_K(X). 
\end{tikzcd}
\]
The universal property of the pullback provides natural isomorphisms 
\[
\xi_{f,g}:g^\ast \circ f^\ast\cong (f\circ g)^\ast 
\]
for any composable simplicial maps $f$ and $g$. Similarly, the universal property provides a natural isomorphism 
\[
(\on{id}_X)^\ast \cong \on{Id}_{\sf{Bun}_K(X)}. 
\]
These data provide a pseudofunctor 
\[
\begin{tikzcd}[row sep=0em]
	\sf{Bun}_K: &[-3em] \catsSet^\op \arrow[r] & \sf{Grpd}\\
	& X \arrow[r,mapsto] & \sf{Bun}_K(X). 
\end{tikzcd}
\]

\begin{defn}
	The \emph{category of $K$-bundle scenarios} $\catbScen_K$ is the covariant Grothendieck construction of the functor $\sf{Bun}_K$. 
\end{defn}

This category was first introduced in \cite{barbosa2023bundle}.
More explicitly, we can describe $\catbScen_K$ as the category whose objects are principal $K$-bundles $\pi_E:E\to X$, and whose morphisms from $\pi_E:E\to X$ to $\pi_F:F\to Y$ are given by commutative diagrams 
\[
\begin{tikzcd}
	E \arrow[d,"\pi_E"'] & f^\ast E\arrow[l]\arrow[d]\arrow[r,"\alpha"] & F\arrow[d,"\pi_F"]\\
	X & Y\arrow[l,"f"] \arrow[r,"="'] & Y 
\end{tikzcd}
\]
The composition of this morphism with a morphism 
\[
\begin{tikzcd}
	F \arrow[d,"\pi_F"'] & g^\ast F\arrow[l]\arrow[d]\arrow[r,"\beta"] & B\arrow[d,"\pi_B"]\\
	Y & Z\arrow[l,"g"] \arrow[r,"="'] & Z
\end{tikzcd}
\]
is the morphism
\[
\begin{tikzcd}[column sep=5em]
	E \arrow[d,"\pi_E"'] & (f\circ g)^\ast E\arrow[l]\arrow[d]\arrow[r,"\beta \circ g^\ast(\alpha)\circ \xi_{f,g}"] & B\arrow[d,"\pi_B"]\\
	X & Z\arrow[l,"f\circ g"] \arrow[r,"="'] & Z
\end{tikzcd}
\]

We then construct a lax natural transformation 
\[
\begin{tikzcd}
	\sDist: &[-3em] \sf{Bun}_K \arrow[r, Rightarrow] & \underline{\catSet} 
\end{tikzcd}
\]
where the latter denotes the constant functor $\catsSet^\op\to \Cat$  on the category $\catSet$\footnote{The manipulations we use here 
based on
Grothendieck constructions involve some size issues --- for example, the category $\catSet$ is not small, and so to consider it as an object of $\Cat$ would require additional axiomatics. However, the categories and functors we construct using the Grothendieck construction can be given explicit definitions which avoid these issues, and the reader not willing to resort to alternate foundations may instead resort to these.}. The components are the functors 
\[
\begin{tikzcd}
	\sDist_X: &[-3em] \sf{Bun}_K(X) \arrow[r] & \catSet 
\end{tikzcd}
\]
which send bundle $\pi_E: E\to X$ to $\sDist(\pi_E)$ and send a bundle isomorphism to the corresponding isomorphism of sets of simplicial distribution. To see lax naturality, note that for $f:X\to Y$ a simplicial map and $\pi_E:E\to Y$ a $K$-bundle, we can define a map 
\[
\begin{tikzcd}[row sep=0em]
	\psi_E^f:&[-3em] \sDist_Y(\pi_E)\arrow[r] & (\sDist_X\circ f^\ast )(\pi_E) 
\end{tikzcd}
\]
which sends a distribution $p$ on $\pi_E$ to the composition 
\[
\begin{tikzcd}
	X \arrow[r,"p\circ f\times \delta_X"] & D_R(E)\times D_R(X)\arrow[r,"m"] & D_R(E\times X)
\end{tikzcd}
\] 
Which has image in the simplicial subset $D_R(f^\ast E)$ by \cite[\S 4.2]{barbosa2023bundle}. The fact that the maps $\psi_E^f$ do, in fact, produce a lax natural transformation follows from Lemmas 4.13 and 4.14 of \cite{barbosa2023bundle}.

Since the Grothendieck construction on a pseudofunctor computes its lax colimit (see, e.g., \cite[Thm 7.4]{gepnerhaugsengnikolaus}) this lax transformation immediately yields the following proposition. 

\begin{pro}\label{prop:functoriality_bundle_scenarios}
	There is a functor 
	\[
	\begin{tikzcd}
		\sDist: &[-3em]\catbScen_K \arrow[r] & \catSet 
	\end{tikzcd}
	\]
	which sends each bundle $\pi_E:E\to X$ to the set $\sDist(\pi_E)$ of twisted distributions on that bundle, each bundle isomorphism $\alpha$ to the corresponding isomorphism on sets of twisted distributions, and each pullback square 
	\[
	\begin{tikzcd}
		E \arrow[d,"\pi_E"'] & f^\ast E\arrow[d]\arrow[l] \\
		Y & X\arrow[l,"f"]
	\end{tikzcd}
	\]  
	to the map $\psi_E^f$ on sets of twisted distributions. 
\end{pro}

\subsection{The colimit decomposition}

If the base space $X$ of a fibration $\pi:E\to X$ can be expressed as a colimit, it is possible to express the space of simplicial distributions in terms of simplicial distributions on the pullback. To make this precise, we consider a functor 
\[
\begin{tikzcd}[row sep=0em]
	\sDist_\pi: &[-3em] \left(\catSet_\Delta\right)_{/X}^\op \arrow[r] & \bf{Set} \\
	& (f:Y\to X) \arrow[r,mapsto] & \sDist\left(Y\times_X E \to Y\right).
\end{tikzcd}
\]
By  
Proposition \ref{prop:functoriality_bundle_scenarios}
any pullback diagram 
\[
\begin{tikzcd}
	E \arrow[d,"\pi_X"'] & F\arrow[d,"\pi_Y"]\arrow[l]\\
	X  & Y\arrow[l]
\end{tikzcd}
\]
yields a map 
\[
\begin{tikzcd}
	\sDist(\pi_X) \arrow[r] & \sDist(\pi_Y)
\end{tikzcd}
\]
providing the functoriality of $\sDist_\pi$. 

\begin{pro}\label{prop:Sdistpi_pres_lim}
	Given a map $\pi:E\to X$, the associated functor 
	\[
	\begin{tikzcd}
		({\catSet}_\Delta)_{/X}^\op \arrow[r,"\on{sDist}_\pi"] & {\catSet } 
	\end{tikzcd} 
	\]
	preserves limits. That is, given a diagram $F:I\to (\catSet_\Delta)_{/X}$ with colimit $Y\to X$, the natural map 
	\[
	\begin{tikzcd}
		\on{sDist}_\pi(Y\times_X E\to Y) \arrow[r] & \lim \on{sDist}_\pi(F(i)\times_X E\to F(i)) 
	\end{tikzcd}
	\]
	is a bijection. 
\end{pro}

\begin{proof}
	We fix some notation before beginning. We write 
	\[
	\begin{tikzcd}
		Y_i \arrow[dr,"\alpha_i"]\arrow[dd,"f_u"'] & \\
		& X \\
		Y_j \arrow[ur,"\alpha_j"] 
	\end{tikzcd}
	\]
	for the image under $F$ of  a morphism $u:i\to j$ in $I$. We write $\alpha:Y\to X$ for the colimit of $F$, and we write $\eta_i: Y_i\to Y$ for the colimiting cocone. We write $\pi_i:E_i:=Y_i\times_X E\to Y_i$, $\pi_Y:E_Y:=Y\times_X E\to Y$, and use an overline to denote the induced maps of total spaces, e.g. 
	\[
	\begin{tikzcd}
		E_i\arrow[d,"\pi_i"'] \arrow[r,"\overline{f}_u"] & E_j\arrow[d,"\pi_j"] \\
		Y_i \arrow[r,"f_u"'] & Y_j 
	\end{tikzcd}
	\]
	
	Given a compatible collection $\{p_i: Y_i \to D_R(E_i)\}_{i\in I}$ of simplicial distributions, we wish to show that there is a unique simplicial distribution $p:Y\to D_R(E_Y)$ such that $(\eta_i)_\ast(p)=p_i$ for every $i\in I$. We first show uniqueness. Suppose that such a distribution $p$ exists, and let $(e,y)\in (E_Y)_n$. Since $Y$ is the colimit of the diagram $F$, this means that there is an $i\in I$ and $y_i\in (Y_i)_n$ such that $\eta_i(y_i)=y$. Thus, we see that desired compatibility condition implies 
	\[
	p_i(y_i)=p(\eta_i(y_i))\cdot \delta_{Y_i}(y_i)
	\] 
	applying this to the element $(e,y_i)$, we see that this implies 
	\[
	p_i(y_i)(e,y_i)=p(y)(e,y)\cdot \delta_{y_i,y_i}=p(y)(e,y).
	\]
	This immediately puts paid to uniqueness,as $p$ is specified at every $y\in Y$ and $(e,y)\in E_Y$ by the formula 
	\[
	p(y)(e,y)=p(y_i)(e,y_i).
	\]
	We then must check that this formula yields a well-defined simplicial distribution. 
	
	To see well-definedness, let $u:i\to j$ in $I$, and let $y_i\in (Y_i)_n$ and $y_j\in (Y_j)_n$ such that
	\[
	f_u(y_i) =y_j \quad \text{and}\quad \eta_j(y_j)= y.
	\] 
	Further let $(e,y)\in E_Y$. The compatibility condition for the collection $\{p_i:Y_i\to D_R(E_i)\}_{i\in I}$ is precisely that
	\[
	(f_u)_\ast(p_j)=p_i 
	\]
	and so we see that 
	\[
	p_i(y_i)(e,y_j) =p_j(y_j)(e,y_j) 
	\]
	so that the two definitions of $p(e,y)$ coincide. Thus, the formula above determines a well-defined value $p(y)(e,y)$ for every $(e,y)\in E_Y$. Extending by $0$ determines $p(y)(e,z)$ for every $z\neq y$. 
	
	We next show that this is, indeed, a distribution. For $y\in Y$, $p(y):E\to R$ is by definition supported on the fiber $\pi_Y^{-1}(y)$. Given $y_i\in Y_i$ such that $\eta_i(y_i)=y$, this canonically identifies the fibers $\pi_i^{-1}(y_i)$ and $\pi_Y^{-1}(y)$. Thus, we see that 
	\[
	\sum_{(e,z)\in(E_Y)_n} p(y)(e,z) =\sum_{(e,y)\in \pi_Y^{-1}(y)} p(y)(e,y)=\sum_{(e,y_i)\in \pi_i^{-1}(y_i)} p(y_i)(e,y_i)=1  
	\] 
	and, since the right-hand side is equal to $1$, we see that $p(y)$ is a distribution. Moreover, by construction, $p(y)$ lies over $\delta_Y(y)$. 
	
	We now need only check that the map 
	\[
	\begin{tikzcd}[row sep=0em ]
		p: &[-3em] Y \arrow[r] & D_R(E) \\
		& y \arrow[r,mapsto] & p(y)
	\end{tikzcd}
	\]
	respects the simplicial structure. Let $\theta:[n]\to [m]$ in $\Delta$, and suppose that $y=\theta^\ast(z)$ for $z\in (Y)_m$. Choosing $z_i\in (Y_i)_m$ such that $\eta_i(z_i)=z$, the fact that $\eta_i$ is a simplicial map means that we can define $y_i=\theta^\ast(z_i)$ and have $\eta_i(y_i)=y$. We then have 
	\[
	p(y)=p(\theta^\ast(z))=p_i(\theta^\ast(z_i))=D_R(\theta^\ast)(p_i(z)).
	\]
	But $D_R(\theta^\ast)(p_i(z_i))$ is defined on $(e,x_i)\in (E_i)_n$ by  
	\[
	\begin{aligned}
		D_R(\theta^\ast)(p_i(z_i))(e,x_i) &=\sum_{(g,w_i)\in (\theta^\ast)^{-1}(e,x_i)} p_i(z_i)(g,w_i)\\
		&=\sum_{ (g,z_i)\in (\theta^\ast)^{-1}(e,x_i)} p_i(z_i)(g,z_i)\\
		&=\sum_{(g,z)\in (\theta^\ast)^{-1}(e,x)}p(z)(g,z)\\
		&=\sum_{(g,w)\in (\theta^\ast)^{-1}(e,x)}p(z)(g,w)\\
		&=D_R(\theta^\ast)(p(z))(e,\eta_i(x_i)).
	\end{aligned}
	\]
	Thus $p(\theta^\ast(z))=D_R(\theta^\ast)(p(z))$, and so $p$ is a simplicial map, completing the proof. 
\end{proof}

\subsection{The polytope of twisted distributions}

We can apply 
Proposition \ref{prop:Sdistpi_pres_lim}
to the canonical representation of $X$ as a colimit over its category of simplices to obtain a useful characterization of simplicial distributions on $\pi$. 

\begin{defn}
	Given $\pi:E\to X$ a principal $K$-bundle, we define two functors 
	\[
	\begin{tikzcd}
		D_\pi: &[-3em] (\Delta_{/X})^\op \arrow[r,hookrightarrow] & (\catSet_\Delta)_{/X}^\op \arrow[r,"\on{sDist}_\pi"] & \catSet 
	\end{tikzcd}
	\]
	and 
	\[
	\begin{tikzcd}
		(\Delta_{/X})^\op \arrow[r,"F_{\pi}"] & \catSet \arrow[r,"D_R"] & \catSet
	\end{tikzcd}
	\]
	where $F_\pi$ is the functor which sends $\sigma:\Delta^n\to X$ to the fiber $\pi_n^{-1}(\sigma)$ of $\pi$ over $\sigma$, which is in bijection with $K_n$. 
\end{defn}

\begin{lem}\label{lem:nat_bij_DR_FP}
	For a  principal $K$-bundle $\pi: E\to X$, there is a natural isomorphism 
	\[
	D_\pi \cong D_R\circ F_\pi.
	\]
\end{lem}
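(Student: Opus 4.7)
The plan is to construct the natural isomorphism via a Yoneda-style argument. For each $\sigma: \Delta^n \to X$ in $\Delta_{/X}$, denote the pullback bundle by $\pi_\sigma: E_\sigma \to \Delta^n$ where $E_\sigma = \Delta^n \times_X E$. An element of $D_\pi(\sigma) = \sDist(\pi_\sigma)$ is a simplicial map $p: \Delta^n \to D_R(E_\sigma)$ lying over $\delta_{\Delta^n}$. By the Yoneda lemma, $p$ is determined entirely by its value $\tilde{p} := p_n(\iota_n) \in D_R((E_\sigma)_n)$ at the non-degenerate top simplex $\iota_n \in (\Delta^n)_n$. The commutativity constraint $D_R(\pi_\sigma) \circ p = \delta$ forces $\tilde{p}$ to be supported on the fiber $(\pi_\sigma)_n^{-1}(\iota_n)$, which by construction of the pullback is in canonical bijection with $\pi_n^{-1}(\sigma) = F_\pi(\sigma)$. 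This defines the candidate transformation $\alpha_\sigma: D_\pi(\sigma) \to D_R(F_\pi(\sigma))$ by $p \mapsto \tilde{p}$.

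To show $\alpha_\sigma$ is a bijection, I exhibit the inverse explicitly. Given $q \in D_R(F_\pi(\sigma))$, viewed as a distribution on $(\pi_\sigma)_n^{-1}(\iota_n) \subset (E_\sigma)_n$, define $\overline{q}: \Delta^n \to D_R(E_\sigma)$ by $\overline{q}_m(\theta) := D_R(\theta^*)(q)$ for each $\theta: [m] \to [n]$. Simpliciality follows from the functoriality of $D_R$, and the commutative diagram condition holds because $\theta^*$ maps the fiber of $\pi_\sigma$ over $\iota_n$ into the fiber over $\theta^*(\iota_n) = \theta$. The identities $\alpha_\sigma(\overline{q}) = q$ and $\overline{\alpha_\sigma(p)} = p$ follow immediately from the Yoneda correspondence and the fact that every simplex of $\Delta^n$ is of the form $\theta^*(\iota_n)$.

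For naturality, given a morphism in $\Delta_{/X}$ represented by $f: \Delta^m \to \Delta^n$ with $\sigma \circ f = \sigma'$, we must verify commutativity of
\[
\begin{tikzcd}
D_\pi(\sigma) \arrow[d, "\alpha_\sigma"'] \arrow[r, "f^\ast"] & D_\pi(\sigma') \arrow[d, "\alpha_{\sigma'}"] \\
D_R(F_\pi(\sigma)) \arrow[r] & D_R(F_\pi(\sigma')).
\end{tikzcd}
\]
Both composites compute the distribution on $\pi_m^{-1}(\sigma')$ obtained by transporting $\tilde{p}$ along the map of fibers $\pi_m^{-1}(\sigma') \to \pi_n^{-1}(\sigma)$ induced by $f$ via the pullback, which agrees by functoriality of pullback and of $D_R$.

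The main obstacle, such as it is, is purely notational bookkeeping: carefully tracking the canonical identification between the fiber $(\pi_\sigma)_n^{-1}(\iota_n)$ of the pulled-back bundle and the fiber $\pi_n^{-1}(\sigma)$ of the original bundle, and verifying that induced pullback maps respect this identification. No deeper input than the Yoneda lemma and the universal property of the pullback is required.
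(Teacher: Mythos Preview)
Your proof is correct and takes essentially the same approach as the paper: both use the Yoneda bijection between simplicial maps $\Delta^n \to D_R(E_\sigma)$ and elements of $D_R((E_\sigma)_n)$, then observe that the commutativity constraint forces support on the fiber $\pi_n^{-1}(\sigma) = F_\pi(\sigma)$; the paper simply declares naturality ``immediate from the definitions'' where you spell it out. One small slip: in your naturality paragraph the induced map of fibers goes $\pi_n^{-1}(\sigma) \to \pi_m^{-1}(\sigma')$ (via $f^\ast$), not the reverse, and one pushes $\tilde{p}$ forward along it rather than transporting along a map in the opposite direction.
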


\begin{proof}
	Given $u:\Delta^n\to X$, we construct a bijection 
	\[
	\begin{tikzcd}
		\alpha_u: &[-3em] D_\pi(u) \arrow[r] & D_R(F_\pi(u)) 
	\end{tikzcd}
	\]
	as follows. Maps of simplicial sets $p:\Delta^n\to D_R(\Delta^n\times_X E)$ are in bijection with $n$-simplices of $D_R(\Delta^n\times_X E)$, which are in turn elements of  
	\[
	D_R((\Delta^n\times_X E)_n).
	\]
	This bijection is given explicitly by sending $p$ to $p_{u(\sigma)}$, i.e., evaluating on the unique non-degenerate $n$-simplex. 
	
	Simplicial distibutions are precisely those $p$ for which the diagram 
	\[
	\begin{tikzcd}
		& D_R(E\times_X \Delta^n)\arrow[d,"D_R(\pi)"]\\
		\Delta^n\arrow[r,"\delta_{\Delta^n}"']\arrow[ur,"p"] & D_R(\Delta^n)
	\end{tikzcd}
	\]
	commutes, i.e., such that the support of $p_{u(\sigma)}$ lies in $\pi_n^{-1}(u)$, i.e., $D_R(F_\pi(u))$. The naturality of this bijection is immediate from the definitions.
\end{proof}

From this lemma, we immediately obtain a characterization of simplicial distributions as a corollary. 

\begin{cor}\label{cor:twiste dist is a polytope}
	Let $X$ be a simplicial set, and $\pi:E\to X$ a principal $K$-bundle. The canonical map 
	\[
	\begin{tikzcd}
		\on{sDist}(\pi) \arrow[r] & \lim_{(\Delta_{/X})^\op} D_R\circ F_\pi
	\end{tikzcd}
	\]
	is a bijection.	
	In particular, if $X$ has finitely many non-degenerate simplices and $R=\mathbb{R}_{\geq 0}$, then $\on{sDist}(\pi)$ is a convex polytope with finitely many vertices. 
\end{cor}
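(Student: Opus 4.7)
The plan is to chain together three ingredients already in hand: the canonical presentation of $X$ as a colimit over its category of simplices, the limit-preservation property of $\on{sDist}_\pi$ from Proposition \ref{prop:Sdistpi_pres_lim}, and the pointwise identification $D_\pi \cong D_R\circ F_\pi$ from Lemma \ref{lem:nat_bij_DR_FP}. Every simplicial set admits the canonical presentation
\[
X \;\cong\; \operatorname*{colim}_{(\sigma:\Delta^n\to X)\in \Delta_{/X}} \Delta^n,
\]
which refines to a colimit in the slice $(\catsSet)_{/X}$ with cocone the tautological arrows into $X$. Applying $\on{sDist}_\pi$ and invoking Proposition \ref{prop:Sdistpi_pres_lim} sends this colimit to a limit, yielding
\[
\on{sDist}(\pi) \;\cong\; \on{sDist}_\pi(\mathrm{id}_X) \;\cong\; \lim_{(\Delta_{/X})^\op} D_\pi,
\]
and substituting the natural isomorphism $D_\pi \cong D_R\circ F_\pi$ from Lemma \ref{lem:nat_bij_DR_FP} on the right produces the canonical bijection in the statement.

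For the polytope conclusion, I would first cut the limit down to a finite diagram. Because the value of any simplicial distribution on a degenerate simplex $s_j(\sigma)$ is forced to equal $D_R(s_j)$ applied to its value on $\sigma$, every constraint imposed by a degenerate simplex is implied by those from non-degenerate ones together with the face-map relations. Thus, when $X$ has finitely many non-degenerate simplices, $\on{sDist}(\pi)$ embeds into the finite product
\[
\prod_{\sigma\in X^{\mathrm{nd}}} D_R\bigl(F_\pi(\sigma)\bigr)
\]
as the set of tuples satisfying the finitely many relations $D_R(d_i)(p_\sigma)=p_{d_i(\sigma)}$, where the right-hand side is reinterpreted through a degeneracy whenever $d_i\sigma$ is itself degenerate. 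With $R=\RR_{\geq 0}$, each factor is a probability simplex on $F_\pi(\sigma)$, each defining relation is linear in the distribution weights, and the resulting intersection of a product of simplices with finitely many affine hyperplanes is a convex polytope with finitely many vertices.

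The categorical content of the bijection is fully packaged in the two preceding results, so the only real obstacle is the finite bookkeeping in the polytope reduction --- in particular, handling the case where some face $d_i(\sigma)$ of a non-degenerate simplex is itself degenerate by folding the corresponding constraint back onto a relation among values on non-degenerate simplices. Once this is in place, the polytope statement is immediate from elementary convex geometry.
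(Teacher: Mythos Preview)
Your argument for the bijection is identical to the paper's: colimit presentation of $X$, Proposition~\ref{prop:Sdistpi_pres_lim}, then Lemma~\ref{lem:nat_bij_DR_FP}.

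For the polytope claim, the paper takes a slightly different route. Rather than arguing by hand that degenerate simplices impose redundant constraints, it invokes a cofinality lemma (Lemma~\ref{lem:fin_sset_cat_simps_cofinal} in the appendix) showing that when $X$ is $n$-dimensional the inclusion $(\Delta_{/X})^{\leq n}\hookrightarrow \Delta_{/X}$ is cofinal, so the limit may be computed over the finite category $(\Delta_{/X})^{\leq n}$; it then rewrites that finite limit as an equalizer of finite products of simplices. Your direct reduction to non-degenerate simplices is correct and a bit more elementary---the Eilenberg--Zilber decomposition handles exactly the ``folding back'' you describe---while the paper's cofinality argument is cleaner categorically and sidesteps that bookkeeping. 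One caveat worth flagging in either version: saying each $D(F_\pi(\sigma))$ is a finite-vertex simplex tacitly assumes the fiber $F_\pi(\sigma)\cong K_{|\sigma|}$ is finite, which holds in the paper's main examples (e.g.\ $K=N(H)$ for $H$ finite) but is not stated in the hypothesis.
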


\begin{proof}
	Since every simplicial set $X$ is a colimit over its category of simplices, 
	Proposition \ref{prop:Sdistpi_pres_lim} shows that the canonical map 
	\[
	\begin{tikzcd}
		\on{sDist}(\pi) \arrow[r] & \lim_{\Delta_{/X}^\op} D_\pi  
	\end{tikzcd}
	\]
	is an isomorphism. By Lemma \ref{lem:nat_bij_DR_FP}, however, we see that 
	\[
	\lim_{\Delta_{/X}^\op} D_\pi \cong \lim_{\Delta_{/X}^\op} D_R\circ F_\pi.
	\]
		When $X$ is a finite simplicial set and $R=\RR_{\geq 0}$, Lemma \ref{lem:fin_sset_cat_simps_cofinal} in the appendix shows that we may, instead, compute the limit  as the limit over $\left((\Delta_{/X})^{\leq n}\right)^\op$. Since this category is finite, we may express the limit as an equalizer of finite products:
	\[
	\begin{tikzcd}
		\sDist(\pi) \arrow[r] & \prod\limits_{\sigma} D(F_\pi(\sigma)) \arrow[r,shift left,"s"]\arrow[r,shift right,"t"'] & \prod\limits_{\psi:\sigma\to \tau} D(F_\pi(\tau))
	\end{tikzcd}
	\]
	where the former product is indexed over objects of $\left((\Delta_{/X})^{\leq n}\right)^\op$, and the latter is indexed over morphisms of $\left((\Delta_{/X})^{\leq n}\right)^\op$. Since each of the $D(F_\pi(\sigma))$ are polytopes with finitely many vertices, this means that $\sDist(\pi)$ is a subspace of a finite product of such polytopes, specified by finitely many linear equations. Thus, $\sDist(\pi)$ is a convex polytope with finitely many vertices, as desired.
\end{proof}

\subsection{The convolution product}

We now make use of the monoidal structure on principal bundles to define a global structure on simplicial distributions on $K$-bundles.  To do this, we first note that, for any simplicial sets $X,Y,Z\in \catsSet$, and maps $f:X\to Z$, $g:Y\to Z$, we have a canonical map 
\[
\begin{tikzcd}[row sep=0em]
	D_R(X)\times D_R(Y) \arrow[r,"m"] & D_R(X\times Y) \\
	(p,q) \arrow[r,mapsto] & \left((x,y)\mapsto p(x)q(y)\right).
\end{tikzcd}
\]
Let $K$ be a simplicial Abelian group. For $K$-bundles $\pi_E:E\to X$ and $\pi_F:F\to X$ form the composite
\[
\begin{tikzcd}
	\sDist(\pi_E)\times \sDist(\pi_F) \arrow[r]& \sDist(X,E)\times \sDist(X,F) \arrow[r,"m"] & \sDist(X,E\times F).
\end{tikzcd}
\]
However, for any $(p,q)\in 	\sDist(\pi_E)\times \sDist(\pi_F)$, $p(x)$ and $p(x)$ are supported on $\pi_E^{-1}(x)$ and $\pi_F^{-1}(x)$, respectively. Thus, $m(p,q)(x)$ is supported on $\pi_E^{-1}(x)\times\pi_F^{-1}(x)$. As a result, we see that we obtain a map, abusively also denoted by $m$, 
\[
\begin{tikzcd}
	m: &[-3em] \sDist(\pi_E)\times \sDist(\pi_F) \arrow[r] & \sDist(E\times_X F\to X). 
\end{tikzcd}
\]
The quotient map $E\times_X F\to E\otimes_K F$ induces a map on distributions. Composing this with $m$, we obtain a map
\[
\begin{tikzcd}
	\mu: &[-3em] \sDist(\pi_E)\times \sDist(\pi_F) \arrow[r] & \sDist(E\otimes_K F\to X). 
\end{tikzcd}
\]
\begin{defn}
	The \emph{convolution product of twisted distributions} is the map 
	\[
	\begin{tikzcd}
		\mu: &[-3em] \sDist(\pi_E)\times \sDist(\pi_F) \arrow[r] & \sDist(E\otimes_K F\to X). 
	\end{tikzcd}
	\]
	We will often denote $\mu(p,q)$ by $p\ast q$. 
\end{defn}

We also note that, for the trivial bundle $K\times X\to X$, there is a canonical choice of section $s_0:X\to K\times X$, which sends $x\mapsto (0,x)$.  We denote by $\varepsilon:\ast \to \sDist(K\times X\to X)$ the map which sends $\ast$ to the deterministic distribution on $s_0$. 

\begin{pro}\label{pro:sDist is lax SMF}
	The maps $\mu$ and $\varepsilon$ equip the functor
	\[
	\begin{tikzcd}
		\sDist: &[-3em] \sf{Bun}_K(X) \arrow[r] & \catSet 
	\end{tikzcd}
	\]
	with the structure of a lax symmetric monoidal functor. 
\end{pro}

\begin{proof}
	This is Proposition \ref{prop:Bun_sym_mon_funct_Set} in the appendix. 
\end{proof}

Applying the monoidal Grothendieck construction, we obtain 

\begin{cor}\label{cor:Grothendieck construction}
	The Grothendieck construction $\int_{\sf{Bun}_K(X)}\sDist$ is equipped with a symmetric monoidal structure given by $(\pi_E,p)\boxtimes (\pi_F,Q)=(E\otimes_K F,p\ast q)$ with unit $(K\times X,\delta^{s_0})$. Thus, $\int_{\sf{Bun}_K(X)}\sDist$ is a symmetric monoidal groupoid. 
\end{cor}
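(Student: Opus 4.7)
The plan is to invoke the monoidal Grothendieck construction of \cite{moellervasilakopoulou,Omongroth}, which says that for any symmetric monoidal category $\mathcal{C}$ and any lax symmetric monoidal functor $F:\mathcal{C}\to\catSet$, the covariant Grothendieck construction $\int_{\mathcal{C}} F$ inherits a canonical symmetric monoidal structure whose projection to $\mathcal{C}$ is a strict symmetric monoidal functor. Both hypotheses are in hand: Theorem \ref{thm:bun is SMC} supplies the symmetric monoidal structure on $\sf{Bun}_K(X)$ given by $\otimes_K$, and Proposition \ref{pro:sDist is lax SMF} supplies the lax symmetric monoidal structure on $\sDist$ with laxator $\mu$ and unit constraint $\varepsilon$. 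Applying the construction immediately produces a symmetric monoidal structure on $\int_{\sf{Bun}_K(X)} \sDist$.

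Next I would unpack the formulas. By the general recipe, the monoidal product of $(\pi_E,p)$ and $(\pi_F,q)$ in the Grothendieck construction has its base component given by the monoidal product in $\sf{Bun}_K(X)$, namely $E\otimes_K F$, and its fiber component given by applying the laxator $\mu$ to $(p,q)$, namely $p\ast q$. Thus $(\pi_E,p)\boxtimes(\pi_F,q)=(E\otimes_K F,\, p\ast q)$, matching the statement. Similarly, the monoidal unit has base component the monoidal unit of $\sf{Bun}_K(X)$, which is the trivial bundle $K\times X\to X$, and fiber component $\varepsilon(\ast)=\delta^{s_0}$. The associativity, unitality, and symmetry coherence 2-cells are inherited from those of $\sf{Bun}_K(X)$ together with the lax monoidal coherences of $\sDist$, so no additional verification is required beyond what is already packaged in the cited theorems.

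Finally, for the groupoid claim, I would observe that a morphism $(\pi_E,p)\to(\pi_F,q)$ in $\int_{\sf{Bun}_K(X)} \sDist$ is by definition a morphism $\alpha:\pi_E\to\pi_F$ in $\sf{Bun}_K(X)$ together with the condition $\sDist(\alpha)(p)=q$. Since $\sf{Bun}_K(X)$ is itself a groupoid and $\sDist$ sends each bundle isomorphism to a bijection of distribution sets, the pair $(\alpha^{-1}, \text{id})$ furnishes an inverse, so every morphism is invertible and the Grothendieck construction is a groupoid.

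The only real content is the appeal to the monoidal Grothendieck construction; the expected obstacle is merely notational, namely confirming that the conventions used in \cite{moellervasilakopoulou,Omongroth} for the covariant construction applied to a lax (rather than oplax) monoidal functor produce exactly the tensor product and unit written in the statement. Once conventions are aligned, the argument is purely definitional.
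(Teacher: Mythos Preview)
Your proposal is correct and follows the same approach as the paper, which simply cites \cite[Thm 3.13]{moellervasilakopoulou}. Your additional unpacking of the tensor product, unit, and groupoid structure is accurate and more detailed than the paper's one-line proof.
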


\begin{proof}
	This 
	 follows from \cite[Thm 3.13]{moellervasilakopoulou}. 
\end{proof}

\begin{rem}
{In \cite{Omongroth} and \cite{HOSconvex}, the authors of the present paper generalize the monoidal Grothendieck construction of \cite{moellervasilakopoulou} to take as input generalizations of monoidal functors into the category of convex sets. From this generalization, the category $\int_{\sf{Bun}_K(X)}\sDist$ inherits additional convex structure defined fiberwise. In particular, this structure provides the monoid of Corollary \ref{cor:convmon_twistconv} below a partially-defined convex structure, which generalizes the convex monoids of distributions from \cite{kharoof2022simplicial}.}
\end{rem}

Denote by $\sf{Bun}_K^0(X)$ the groupoid of trivial principal $K$-bundles over $X$. Then we can note that, since a tensor product of trivial bundles is trivial, $\sf{Bun}_K^0(X)$ is a symmetric monoidal subcategory of $\sf{Bun}_K(X)$. We can define another functor
\[
\begin{tikzcd}[row sep=0em]
	\on{Cl}:&[-3em]\sf{Bun}_K^0(X) \arrow[r] & \catSet\\
	  & \pi_E \arrow[r,mapsto] & D_R(\Sec(\pi_E)).
\end{tikzcd}
\]
We then define structure maps to equip $\on{Cl}$ with the structure of a lax monoidal functor. Firstly, given sections $\phi$ of $\pi_E$ and $\psi$ of $\pi_F$, we obtain a canonical map
\[
\begin{tikzcd}
	(\phi,\psi):&[-3em] X \arrow[r] & E\times_X F 
\end{tikzcd}
\]
which is a section of $E\times_X F\to X$. composing this with the quotient map, we obtain a section, which we abusively write 
\[
\begin{tikzcd}
	\phi+\psi:&[-3em] X \arrow[r] & E\otimes_K F.
\end{tikzcd}
\]
This construction provides a map
\[
\begin{tikzcd}
	\overline{\ell}: &[-3em] \Sec(\pi_E)\times \Sec(\pi_F)\arrow[r] & \Sec(\pi_{E\otimes F}). 
\end{tikzcd}
\]

\begin{lem}
	For $K$-valued twisting functions $\eta,\nu$ on $X$, and sections $\phi(x)=(s(x),x)$ and $\psi(x)=(t(x),x)$ of $\pi_\eta$ and $\pi_\nu$, respectively, the identification of $(K\times_\eta X)\otimes_K (K\times_\nu X)$ with $K\times_{\eta+\nu} X$ from the proof of Proposition \ref{pro:Addition_of_twists_tensor} identifies $\phi+\psi$ with the section 
	\[
	\begin{tikzcd}[row sep=0em]
		X \arrow[r] & K\times_{\eta+\nu}X \\
		x \arrow[r,mapsto] & (s(x)+t(x),x).
	\end{tikzcd}
	\]
\end{lem}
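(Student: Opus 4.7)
The plan is to unpack the construction of $\phi + \psi$ step by step and then apply the explicit identification from the proof of Proposition \ref{pro:Addition_of_twists_tensor}. I expect the argument to be essentially a bookkeeping exercise, since all the real content is in the construction of the identification used in that proposition; the only thing to watch is the switch between the multiplicative notation used earlier for $K$ and the additive notation used in the statement of this lemma.

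First I would trace the construction of $\phi + \psi$. By definition, the product section $(\phi,\psi) \colon X \to (K \times_\eta X) \times_X (K \times_\nu X)$ sends $x$ to the pair $((s(x),x),(t(x),x))$. Composing with the quotient map yields
\[
(\phi + \psi)(x) = [(s(x),x),\,(t(x),x)] \in (K \times_\eta X) \otimes_K (K \times_\nu X).
\]
Using the defining equivalence relation $(k \cdot e, f) \sim (e, k \cdot f)$ for $k \in K$, I can transport $t(x)$ across the tensor (writing $K$ additively, so that $0$ is the identity):
\[
[(s(x),x),\,(t(x),x)] = [(s(x)+t(x),x),\,(0,x)].
\]

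Next I would apply the explicit isomorphism
\[
(K \times_\eta X) \otimes_K (K \times_\nu X) \;\xrightarrow{\cong}\; K \times_{\eta + \nu} X
\]
constructed in the proof of Proposition \ref{pro:Addition_of_twists_tensor}, which identifies any class of the form $[(k,x),(0,x)]$ with $(k,x)$. Under this isomorphism the element above is sent to $(s(x)+t(x),x)$, which is the claimed formula.

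Finally, I would briefly check that the resulting graded-set map $x \mapsto (s(x)+t(x),x)$ genuinely is a section of $\pi_{\eta + \nu}$, but this is automatic: both $\phi$ and $\psi$ are sections, their product is a section of the fibered product $(K \times_\eta X) \times_X (K \times_\nu X) \to X$, the quotient of a section along a $K$-equivariant quotient remains a section, and the isomorphism with $K \times_{\eta + \nu} X$ is a map over $X$. The main (minor) obstacle is simply keeping the two conventions for the group law on $K$ straight between the formula in the earlier proof and the additive formula stated in the lemma; no genuine computation of twisting data is required.
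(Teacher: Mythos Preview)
Your proposal is correct and matches the paper's approach, which is simply ``follows immediately from unwinding the definitions.'' One small simplification: the explicit isomorphism from the proof of Proposition~\ref{pro:Addition_of_twists_tensor} sends a general class $[(k,x),(h,x)]$ to $(kh,x)$, so you can apply it directly to $[(s(x),x),(t(x),x)]$ without first using the equivalence relation to shuffle $t(x)$ to the other side.
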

\begin{proof}
	Follows immediately from unwinding the definitions. 
\end{proof}

Applying $D_R$, we then obtain a composite map
\[
\begin{tikzcd}
	\ell:&[-3em]D_R(\Sec(\pi_E))\times D_R(\Sec(\pi_F))\arrow[r,"m"] &D_R(\Sec(\pi_E)\times \Sec(\pi_F)) \arrow[r,"D_R(\overline{\ell})"] & D_R(\Sec(\pi_{E\otimes F})).
\end{tikzcd}
\]
Similarly, applying $D_R$ to the inclusion $\ast \to \Sec(K\times X)$ of the zero section yields a map
\[
\begin{tikzcd}
	\ast \cong D_R(\ast) \arrow[r,"u"] & D_R(\Sec(K\times X)).  
\end{tikzcd}
\]

\begin{pro}
	The maps $\ell$ and $u$ equip $\on{Cl}$ with the structure of a lax monoidal functor. 
\end{pro}

\begin{proof}
	This is Proposition \ref{pro:Cl_lax_mon} in the appendix.
\end{proof}

For a trivial $K$-bundle $\pi_E:E\to X$, we now define a map
\[
\begin{tikzcd}
	\Theta_E: &[-3em] D_R(\Sec(\pi_E)) \arrow[r] & \sDist(\pi_E)
\end{tikzcd}
\]
as follows. Given a distribution $p$ on sections of $\pi_E$, we define the distribution $\Theta_E(p)$ by 
\[
\Theta_E(p)_x(e)=\sum_{\phi\in \Sec(\pi_E)} p(\phi) \delta_{\phi(x),e}. 
\]
To see that this map is well-defined, we must show that $\Theta(p)_x$ is a distribution, and $\Theta(p)$ is simplicial. The first follows because 
\[
\begin{aligned}
	\sum_{e\in\phi_E^{-1}(x)} \Theta(p)_x(e)&=	\sum_{e\in\phi_E^{-1}(x)} \sum_{\phi\in \Sec(\pi_E)} p(\phi) \delta_{\phi(x),e}\\
	 &= \sum_{\phi\in \Sec(\pi_E)}\sum_{e\in\phi_E^{-1}(x)}  p(\phi) \delta_{\phi(x),e}\\
	 & =\sum_{\phi\in \Sec(\pi_E)} p(\phi)=1.
\end{aligned}
\] 
On the other hand, given a simplicial map $\theta:[n]\to [m]$ in $\Delta$, we compute 
\[
\begin{aligned}
	\theta^\ast(\Theta(p)_x)(e)&=\sum_{\overline{e}\in (\theta^\ast)^{-1}(e)} \Theta(p)_x(\overline{e}) \\
	&=\sum_{\overline{e}\in (\theta^\ast)^{-1}(e)} \sum_{\phi\in \Sec(\pi_E)} p(\phi) \delta_{\phi(x),\overline{e}} \\
	&=\sum_{\phi\in \Sec(\pi_E)}\sum_{\overline{e}\in (\theta^\ast)^{-1}(e)}  p(\phi) \delta_{\phi(x),\overline{e}}\\
	&= \sum_{\phi\in \Sec(\pi_E)}  p(\phi) \delta_{\phi(\theta^\ast(x)),e}\\
	&= \Theta(p)_{\theta^\ast(x)}(e).
\end{aligned}
\]

\begin{lem}
	The maps $\Theta_E$ define a monoidal natural transformation from $\on{Cl}$ to $\sDist|_{\sf{Bun}_K^0(X)}$. 
\end{lem}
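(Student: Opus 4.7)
The plan is to verify the three conditions that make $\{\Theta_E\}$ a monoidal natural transformation: naturality in $E$, compatibility with the laxators $\ell$ and $\mu$, and compatibility with the units $u$ and $\varepsilon$. Each is an explicit computation obtained by unwinding the definitions of $\Theta_E$, the convolution product, and the maps $\ell,\overline{\ell}$, so the proof is essentially bookkeeping.

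For naturality, given a bundle isomorphism $f:E\to F$ in $\sf{Bun}_K^0(X)$, the induced map $\Sec(f):\Sec(\pi_E)\to \Sec(\pi_F)$ sends a section $\phi$ to $f\circ \phi$. A direct computation shows that for $p\in D_R(\Sec(\pi_E))$, $x\in X_n$, and $e'\in F_n$ with $\pi_F(e')=x$, both $\sDist(f)(\Theta_E(p))_x(e')$ and $\Theta_F(D_R(\Sec(f))(p))_x(e')$ reduce to $\sum_{\phi:\, f(\phi(x))=e'}p(\phi)$. For the unit, $u(\ast)\in D_R(\Sec(K\times X))$ is the delta distribution concentrated at the zero section $s_0$, and the definition of $\Theta_{K\times X}$ immediately gives $\Theta_{K\times X}(u(\ast))_x(e)=\delta_{s_0(x),e}=\varepsilon(\ast)_x(e)$.

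The main computation is laxator compatibility: $\Theta_{E\otimes_K F}\circ \ell=\mu\circ(\Theta_E\times \Theta_F)$. Fix $(p,q)\in D_R(\Sec(\pi_E))\times D_R(\Sec(\pi_F))$, a simplex $x\in X_n$, and $[e,f]\in (E\otimes_K F)_n$ lying over $x$. Writing $\rho:E\times_X F\to E\otimes_K F$ for the quotient, the convolution unfolds as
\[
\mu(\Theta_E(p),\Theta_F(q))_x([e,f])=\sum_{(e',f')\in \rho^{-1}([e,f])}\Theta_E(p)_x(e')\,\Theta_F(q)_x(f')=\sum_{\substack{(\phi,\psi):\\ [\phi(x),\psi(x)]=[e,f]}}p(\phi)\,q(\psi),
\]
the second equality by substituting the definition of $\Theta$. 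On the other side, since $\overline{\ell}(\phi,\psi)=\phi+\psi$ and $(\phi+\psi)(x)=[\phi(x),\psi(x)]$ in $E\otimes_K F$, the same sum arises:
\[
\Theta_{E\otimes_K F}(\ell(p,q))_x([e,f])=\sum_{\chi}\ell(p,q)(\chi)\,\delta_{\chi(x),[e,f]}=\sum_{\substack{(\phi,\psi):\\ [\phi(x),\psi(x)]=[e,f]}}p(\phi)\,q(\psi).
\]

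The only delicate point is matching the two quotient operations: the pushforward along $\rho$ hidden inside $\mu$, and the quotient used in defining $\overline{\ell}(\phi,\psi)=\phi+\psi$ on sections. Once one observes that these agree pointwise via $(\phi+\psi)(x)=[\phi(x),\psi(x)]$, all three identities collapse to the same sum over pairs of sections whose combined image is the specified simplex, and the monoidal naturality falls out mechanically.
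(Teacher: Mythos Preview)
Your proof is correct and follows essentially the same approach as the paper's: both reduce the laxator compatibility square to the common sum $\sum_{(\phi,\psi):\,[\phi(x),\psi(x)]=[e,f]} p(\phi)q(\psi)$ by unwinding $\mu$ via the quotient $\rho$ on one side and $\ell=D_R(\overline{\ell})\circ m$ on the other, using the pointwise identity $(\phi+\psi)(x)=[\phi(x),\psi(x)]$ as the bridge. The treatment of naturality and the unit also matches the paper's (which dismisses both as immediate/definitional).
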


\begin{proof}
	Naturality in bundle isomorphisms is immediate. It thus remains to check monoidalness of the transformation, i.e., the commutativity of the squares 
	\[
	\begin{tikzcd}[column sep=5em, row sep=5em]
		D_R(\Sec(\pi_E))\times D_R(\Sec(\pi_F)) \arrow[r,"\Theta_E\times \Theta_F"]\arrow[d,"\ell"'] & \sDist(\pi_E)\times \sDist(\pi_F) \arrow[d,"\mu"]\\ D_R(\Sec(\pi_{E\otimes F})) \arrow[r,"\Theta_{E\otimes F}"']& \sDist(\pi_{E\otimes F})
	\end{tikzcd}
	\] 
	Starting with a pair of distributions $p,q\in D_R(\Sec(\pi_E))\times D_R(\Sec(\pi_F))$, applying the definitions yields 
	\[
	\Theta_{E\otimes F}(\ell(p,q))_x([e,f])=\sum_{\phi\in \Sec(\pi_{E\otimes F})} \left(\sum_{\psi_E,\psi_F \text{s.t.} \overline{\ell}(\psi_E,\psi_F)=\phi} p(\psi_E)q(\psi_F)\right)\delta_{\phi(x),[e,f]} .
	\]
	Since the internal sum collapses, we can rewrite this as 
	\[
	\Theta_{E\otimes F}(\ell(p,q))_x([e,f])=\sum_{(\psi_E,\psi_F)\in\Sec(\pi_E)\times \Sec(\pi_F)} p(\psi_E)q(\psi_F)\delta_{\overline{\ell}(\psi_E(x),\psi_F(x)),[e,f]} .
	\] 
	On the other hand, applying the definitions also yields 
	\[
	\mu(\Theta_E(p),\Theta_F(q))_x([e,f])= \sum_{\overline{e},\overline{f}\text{ s.t. } [\overline{e},\overline{f}]=[e,f]} \sum_{(\psi_E,\psi_F)\in\Sec(\pi_E)\times \Sec(\pi_F)} p(\psi_E)q(\psi_F)\delta_{\psi_E(x),\overline{e}} \delta_{\psi_F(x),\overline{f}} .
	\] 
	However, $\overline{\ell}(\psi_E,\psi_F)(x)=[e,f]$ if and only if there exist unique $\overline{e},\overline{f}$ such that $[\overline{e},\overline{f}]=[e,f]$ and $\psi_E(x)=\overline{e}$ and $\psi_F(x)=\overline{f}$. Thus, we can rewrite this sum as 
	\[
	\mu(\Theta_E(p),\Theta_F(q))_x([e,f])=\sum_{(\psi_E,\psi_F)\in\Sec(\pi_E)\times \Sec(\pi_F)} p(\psi_E)q(\psi_F)\delta_{\overline{\ell}(\psi_E,\psi_F)(x),[e,f]} 
	\]
	which shows that the square commutes. It is definitional that $\Theta_{K\times X}\circ u=\epsilon$, completing the proof.  
\end{proof}

\begin{cor}\label{cor:Theta_monoidal_functor}
	There is a monoidal functor 
	\[
	\begin{tikzcd}
		\Theta: &[-3em] {\displaystyle\int_{\sf{Bun}^0_K(X)}} \on{Cl} \arrow[r] & {\displaystyle \int_{\sf{Bun}^0_K(X)} \sDist}. 
	\end{tikzcd}
	\]
\end{cor}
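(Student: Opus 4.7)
The plan is to obtain $\Theta$ as the image, under the monoidal Grothendieck construction of \cite{moellervasilakopoulou}, of the monoidal natural transformation between lax monoidal functors that has just been constructed. All of the substantive work has in fact been done in the preceding lemma; what remains is to invoke the appropriate functoriality.

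First I would recall that by Corollary \ref{cor:Grothendieck construction} (applied to the restriction $\sDist|_{\sf{Bun}^0_K(X)}$) and by Proposition \ref{pro:Cl_lax_mon}, both $\on{Cl}$ and $\sDist|_{\sf{Bun}^0_K(X)}$ are lax symmetric monoidal functors from the symmetric monoidal groupoid $\sf{Bun}^0_K(X)$ to $\catSet$. Hence their monoidal Grothendieck constructions $\int_{\sf{Bun}^0_K(X)} \on{Cl}$ and $\int_{\sf{Bun}^0_K(X)} \sDist$ are symmetric monoidal categories, with monoidal products described fiberwise by the lax structure maps $\ell,\mu$ and units $u,\varepsilon$.

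Next I would invoke the fact, established in \cite[Thm 3.13]{moellervasilakopoulou}, that the monoidal Grothendieck construction is itself a functor from the category of lax monoidal functors $\sf{Bun}^0_K(X)\to \catSet$ and monoidal natural transformations to the category of (symmetric) monoidal categories and (symmetric) monoidal functors. The preceding lemma shows precisely that the collection $\{\Theta_E\}_{\pi_E\in \sf{Bun}^0_K(X)}$ defines a monoidal natural transformation $\Theta:\on{Cl}\Rightarrow \sDist|_{\sf{Bun}^0_K(X)}$. Applying the Grothendieck construction to this natural transformation yields the desired monoidal functor $\Theta$.

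Concretely, one should also note that the functor $\Theta$ obtained in this way acts on objects by $(\pi_E, p)\mapsto (\pi_E, \Theta_E(p))$ and on morphisms essentially by $\Theta_E$ on the fiber over $\pi_E$; the monoidal coherence squares for $\Theta$ reduce to the commutative square checked in the previous lemma, together with the identity $\Theta_{K\times X}\circ u=\varepsilon$ verified there. There is no real obstacle here beyond correctly identifying the abstract input; the only subtlety is to make sure that the convention used for the monoidal Grothendieck construction in \cite{moellervasilakopoulou} matches the one implicit in Corollary \ref{cor:Grothendieck construction}, so that monoidal natural transformations really do induce monoidal functors on total categories in the required direction.
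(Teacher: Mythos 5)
Your proposal takes essentially the same approach as the paper: both invoke \cite[Thm 3.13]{moellervasilakopoulou}, viewing the monoidal Grothendieck construction as functorial in monoidal natural transformations and applying it to the monoidal natural transformation $\Theta:\on{Cl}\Rightarrow \sDist|_{\sf{Bun}^0_K(X)}$ established in the preceding lemma. The paper's proof is a one-line citation, and your elaboration correctly unpacks the implicit reasoning behind that citation.
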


\begin{proof}
	This follows from \cite[Thm 3.13]{moellervasilakopoulou}.
\end{proof}

If we view the (commutative) monoid $\Twist_K(X)$ of twisting functions on $X$ as a discrete symmetric monoidal category, we have the following result. 

\begin{pro}
	The functor 
	\[
	\begin{tikzcd}[row sep=0em]
		\Twist_K(X) \arrow[r] & \sf{Bun}_K(X) \\
		\eta \arrow[r,mapsto] & K\times_\eta X
	\end{tikzcd}
	\]
	is symmetric monoidal. 
\end{pro}

\begin{proof}
	This is Proposition \ref{prop:Twist_sym_mon_funct_bun} in the appendix. 
\end{proof}

\begin{cor}\label{cor:convmon_twistconv}
	The Grothendieck construction $\int_{\sf{Bun}_K(X)}\sDist$ restricts to a monoid $\int_{\Twist_K(X)} \sDist$ over $\Twist_K(X)$. More explicitly, this monoid is 
	\[
	\int_{\Twist_K(X)} \sDist\cong\coprod_{\eta\in\Twist_K(X)}\sDist(\pi_\eta)
	\]
	with multiplication $(\eta,p)\cdot (\xi,q)=(\eta\xi,p\ast q)$. 
\end{cor}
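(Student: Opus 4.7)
The plan is to assemble this corollary directly from the structural results already established. By the immediately preceding proposition, the assignment $\eta \mapsto K\times_\eta X$ defines a symmetric monoidal functor $\Twist_K(X)\to \sf{Bun}_K(X)$ (where $\Twist_K(X)$ is viewed as a discrete symmetric monoidal category under levelwise sum). Composing with the lax symmetric monoidal functor $\sDist$ of Proposition \ref{pro:sDist is lax SMF} yields a lax symmetric monoidal functor $\Twist_K(X)\to \catSet$, and applying the monoidal Grothendieck construction \cite[Thm 3.13]{moellervasilakopoulou} to this composite produces a symmetric monoidal category. Since monoidal Grothendieck constructions are functorial in monoidal transformations of the input data, this category sits inside $\int_{\sf{Bun}_K(X)}\sDist$ as a symmetric monoidal subcategory, which is the precise meaning of ``restricts to a monoid over $\Twist_K(X)$.''

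Next, I would unpack the Grothendieck construction using the fact that $\Twist_K(X)$ is discrete as a category: the total category of any functor out of a discrete category is simply the disjoint union of its fibers, so
\[
\int_{\Twist_K(X)}\sDist \;\cong\; \coprod_{\eta\in \Twist_K(X)} \sDist(\pi_\eta).
\]
Finally, to read off the multiplication, I would trace through the definition of the monoidal structure on the Grothendieck construction. On objects $(\eta,p)$ and $(\xi,q)$, the first coordinate multiplies by $\eta+\xi$ (the monoidal structure of $\Twist_K(X)$). In the second coordinate, one applies the lax structure map of $\sDist$, which is precisely the convolution product $\mu$ of the preceding subsection, obtaining an element of $\sDist\bigl((K\times_\eta X)\otimes_K (K\times_\xi X)\bigr)$, and then transports along the canonical isomorphism $(K\times_\eta X)\otimes_K (K\times_\xi X)\cong K\times_{\eta+\xi} X$ supplied by the proof of Proposition \ref{pro:Addition_of_twists_tensor}. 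Under the notation $\ast$ for convolution, this gives $(\eta,p)\cdot(\xi,q)=(\eta+\xi,\,p\ast q)$, as claimed.

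The only genuinely non-formal point, and the step I would be most careful about, is checking that the structure isomorphism $(K\times_\eta X)\otimes_K(K\times_\xi X)\cong K\times_{\eta+\xi}X$ used to make the functor $\eta\mapsto K\times_\eta X$ strictly monoidal is the same one implicit in the monoidal Grothendieck construction's coherence data; this ensures that the coproduct decomposition above is preserved under multiplication (so the monoid structure really closes on $\coprod_\eta \sDist(\pi_\eta)$ rather than only on its essential image) and that the second component is literally $p\ast q$ rather than its image under an unnamed transport. All other verifications are routine and follow immediately from the discreteness of $\Twist_K(X)$ and the general properties of the monoidal Grothendieck construction.
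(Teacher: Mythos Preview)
Your argument is correct and matches the paper's intended reasoning: the corollary is stated without proof in the paper, but it follows exactly as you describe by composing the symmetric monoidal functor $\Twist_K(X)\to \sf{Bun}_K(X)$ of the preceding proposition with the lax monoidal $\sDist$ and applying the monoidal Grothendieck construction, then using discreteness of $\Twist_K(X)$ to unpack the total category as a coproduct. Your caution about the coherence isomorphism is well placed but ultimately handled by Proposition \ref{prop:Twist_sym_mon_funct_bun}, which records precisely that the canonical isomorphism from Proposition \ref{pro:Addition_of_twists_tensor} furnishes the monoidal structure maps.
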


Similarly, if we restrict to the submonoid $\on{Twist}_K^0(X)$ of twisting functions which yield trivial bundles, Corollary \ref{cor:Theta_monoidal_functor} then yields

\begin{cor}
	The map 
	\[
	\begin{tikzcd}
		\Theta: &[-3em] {\displaystyle\coprod_{\eta \in \Twist_K^0(X)}D_R(\Sec(\pi_\eta))} \arrow[r] & {\displaystyle\coprod_{\eta \in \Twist_K(X)}\sDist(\pi_\eta)}
	\end{tikzcd}
	\]
	is a monoid homomorphism. 
\end{cor}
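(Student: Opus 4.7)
The plan is to deduce this from the monoidal functor $\Theta$ of Corollary \ref{cor:Theta_monoidal_functor} by restricting along the inclusion of twisting functions that yield trivial bundles, following the same recipe used in Corollary \ref{cor:convmon_twistconv}.

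First I would verify that $\Twist_K^0(X)$ is genuinely a submonoid of $\Twist_K(X)$: if $\eta, \xi \in \Twist_K^0(X)$, then $K\times_\eta X$ and $K\times_\xi X$ are both trivializable, so by Proposition \ref{pro:Addition_of_twists_tensor} their sum $\eta + \xi$ classifies the bundle $(K\times_\eta X)\otimes_K (K\times_\xi X)$, which is again trivial. Consequently the symmetric monoidal functor $\Twist_K(X) \to \sf{Bun}_K(X)$ from the proposition above restricts to a symmetric monoidal functor $\Twist_K^0(X) \to \sf{Bun}_K^0(X)$ landing in the trivial locus.

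Next, I would apply the (discrete) monoidal Grothendieck construction to the lax monoidal functor $\on{Cl}:\sf{Bun}_K^0(X)\to \catSet$ pulled back along this symmetric monoidal functor, producing a monoid whose underlying set is $\coprod_{\eta\in \Twist_K^0(X)} D_R(\Sec(\pi_\eta))$ and whose product is $(\eta,p)\cdot(\xi,q) = (\eta+\xi,\,\ell(p,q))$. This parallels the derivation of the monoid structure on the right-hand side in Corollary \ref{cor:convmon_twistconv}, so both sides become monoids over $\Twist_K^0(X)\hookrightarrow \Twist_K(X)$.

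Finally, I would invoke Corollary \ref{cor:Theta_monoidal_functor}: the monoidal functor $\Theta : \int_{\sf{Bun}^0_K(X)} \on{Cl} \to \int_{\sf{Bun}^0_K(X)} \sDist$ descends, after pulling back along $\Twist_K^0(X)\to \sf{Bun}_K^0(X)$ and composing with the monoid inclusion $\int_{\sf{Bun}_K^0(X)}\sDist \hookrightarrow \int_{\sf{Bun}_K(X)}\sDist$, to a monoid homomorphism on underlying coproducts. Unwinding the identifications yields precisely the map in the statement, which on $(\eta,p)\cdot(\xi,q)$ sends the left side's product $(\eta+\xi,\ell(p,q))$ to $(\eta+\xi,\,\Theta_{E\otimes F}(\ell(p,q)))$, matching the right side's product $(\eta+\xi,\,\Theta_E(p)\ast \Theta_F(q))$ by the monoidality square established in the proof of Corollary \ref{cor:Theta_monoidal_functor}. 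The main (mild) obstacle is simply bookkeeping: checking that the monoidal Grothendieck construction truly restricts to a submonoid when we restrict the indexing monoidal category to $\Twist_K^0(X)$, and that the unit element on both sides (the zero section of $K\times X$) is preserved — both of which are immediate from the explicit formulas for $\ell$, $\mu$, and $\Theta$.
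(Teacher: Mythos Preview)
Your proposal is correct and follows essentially the same approach as the paper: the paper simply remarks that restricting to the submonoid $\Twist_K^0(X)$ and invoking Corollary \ref{cor:Theta_monoidal_functor} yields the result, and your argument is just a more detailed unpacking of that sentence.
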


\subsection{Equivariant distributions}

We now provide a general construction of a morphism of bundle scenarios associated to a pair of bundles $\pi_E:E\to X$ and $\pi_F:F\to X$. 

Fix a simplicial Abelian group $K$. We will write the operation of $K$ multiplicatively to better accord with the standard notation for actions of $K$ on other simplicial sets.

\begin{defn}\label{defn:invprod_in_PB}
	Let $\pi_E:E\to X$ be a principal $K$-bundle, and let $a,b\in \pi_E^{-1}(x)$ be elements in the same fiber. We define $ba^{-1}\in K$ to be the unique element such that 
	\[
	ba^{-1}\cdot a=b. 
	\] 
\end{defn}

\begin{lem}
	Let $\pi_E:E\to X$ be a principal $K$-bundle, and let $a,b,c\in \pi_E^{-1}(x)$ be elements in the same fiber. Then the following computation rules hold. 
	\begin{enumerate}
		\item For $k\in K$, $(k\cdot b)a^{-1}=k(ba^{-1})$. 
		\item $(ba^{-1})(ac^{-1})=bc^{-1}$. 
		\item $(ba^{-1})^{-1}=ab^{-1}$. 
		\item $aa^{-1}=1$. 
	\end{enumerate}
\end{lem}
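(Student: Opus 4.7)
The plan is to prove each identity by invoking the defining uniqueness in Definition \ref{defn:invprod_in_PB}: to check that some element $\ell \in K$ equals $ba^{-1}$, it suffices to verify $\ell \cdot a = b$. All four identities reduce to one-line verifications using associativity of the $K$-action on $E$ and the defining equation $ba^{-1}\cdot a = b$ (applied one or more times). I would prove them in the order (4), (1), (2), (3), since later parts reuse earlier ones.

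For (4), the identity $1\cdot a = a$ holds by definition of a group action, so uniqueness forces $aa^{-1}=1$. For (1), the computation
\[
(k(ba^{-1}))\cdot a \;=\; k\cdot ((ba^{-1})\cdot a) \;=\; k\cdot b
\]
identifies $k(ba^{-1})$ as the unique element of $K$ moving $a$ to $k\cdot b$, hence $k(ba^{-1})=(k\cdot b)a^{-1}$. For (2), a similar two-step computation
\[
((ba^{-1})(ac^{-1}))\cdot c \;=\; (ba^{-1})\cdot((ac^{-1})\cdot c) \;=\; (ba^{-1})\cdot a \;=\; b
\]
identifies $(ba^{-1})(ac^{-1})$ as $bc^{-1}$.

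For (3), I would use (2) with $c$ replaced by $b$ to get $(ba^{-1})(ab^{-1}) = bb^{-1}$, and then apply (4) to rewrite the right-hand side as $1$. This exhibits $ab^{-1}$ as an inverse to $ba^{-1}$ in $K$, so by uniqueness of inverses in a group, $(ba^{-1})^{-1}=ab^{-1}$.

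There is no real obstacle here beyond bookkeeping; the lemma is a formal consequence of the freeness of the action (which supplies the uniqueness) plus associativity of the action. The only mild point to keep straight is that $ba^{-1}$ is a formal notation, not literally a product in $E$, so each manipulation must be justified by appealing to the defining equation rather than treating $a$, $b$ as invertible elements in their own right.
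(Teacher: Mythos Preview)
Your proof is correct and essentially identical to the paper's: both arguments verify each identity by checking the defining equation and invoking uniqueness from freeness of the action, and the paper likewise deduces (3) from (2) and uniqueness of inverses. The only cosmetic difference is the order in which the parts are proved.
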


\begin{proof}
	If $h\cdot a=b$, then $(kh)\cdot a=kb$. By the freeness of the $K$-action, this implies $(k\cdot b)a^{-1}=k(ba^{-1})$, i.e., rule (1) holds. Similarly, if $kc=a$ and $ha=b$, then $hkc=b$, so that rule (2) holds. Rule (3) is a consequence of (2) and the uniqueness of inverses. Rule (4) holds since $1\cdot a=a$ for any $a\in E$. 
\end{proof}

Given two principal $K$-bundles $\pi_E:E\to X$ and $\pi_F:F\to X$, we can form the commutative diagram 
\[
\begin{tikzcd}
	E\arrow[d,"\pi_E"'] & E\times_XF\arrow[d,"\pi_{FE}"]\arrow[l,"\pi_{EF}"']\arrow[r] & E\otimes_KF \arrow[d,"\pi_{E\otimes F}"] \\
	X & F\arrow[l,"\pi_F"]\arrow[r,"\pi_F"'] & X 
\end{tikzcd}
\]
where the upper rightward
map is the quotient map. The 
right-hand
square induces a canonical map $E \times_X F\to (E\otimes_K F)\times_X F$ by the universal property of pullback. We thus obtain a morphism of bundle scenarios 
\[
\begin{tikzcd}
	E\arrow[d,"\pi_E"'] & E\times_XF\arrow[d,"\pi_{FE}"]\arrow[l,"\pi_{EF}"']\arrow[r,"\alpha"] & (E\otimes_KF)\times_X F \arrow[d,"\pi_2"] \\
	X & F\arrow[l,"\pi_F"]\arrow[r,"="'] & F.
\end{tikzcd}
\]
Explicitly, $\alpha$ sends $ (e,f)$ to $([e,f],f)$. 

This map of bundle scenarios induces a corresponding map on sets of simplicial distributions, which we will denote by
\[
\begin{tikzcd}
	\gamma_{E,F}: &[-3em] \sDist(\pi_E) \arrow[r] & \sDist(\pi_2). 
\end{tikzcd}
\]
Explicitly, given $p\in \sDist(\pi_E)$, we compute 
\[
\begin{aligned}
	\gamma_{E,F}(p)_f([e,j],\ell) &= \sum_{(u,v)\in \alpha^{-1}([e,j],\ell)} p_{\pi_F(f)}(u) \delta_{f,v} \\
	& =\sum_{u: [u,f]=[e,j]} p_{\pi_F(f)}(u)\delta_{f,\ell} \\
	& =\sum_{k:k^{-1}f=j} p_{\pi_F(f)}(k^{-1}e)\delta_{f,\ell}
\end{aligned}
\]
since $k$ acts freely transitively on $F$, there is a unique $k$ such that $kj=f$. Denoting this $k$ by $(fj^{-1})$ as in Definition \ref{defn:invprod_in_PB}, we thus obtain
\[
\gamma_{E,F}(p)_f([e,j],\ell)= p_{\pi_F(f)}((jf^{-1})\cdot e)\delta_{f,\ell}.
\]

We further note that the projection $D_R(\pi_1):D_R((E\otimes_KF)\times_X F)\to D_R(E\otimes_K F)$ induces a map 
\[
\begin{tikzcd}
	(\pi_1)_\ast: &[-3em] \sDist(\pi_2) \arrow[r] & \sDist(F,E\otimes_K F). 
\end{tikzcd}
\]

\begin{lem}\label{lem:Bundist_is_equivar_tensor_dist}
	The map $(\pi_1)_\ast\circ \gamma_{E,F}$ is injective. A simplicial distribution $p:F\to D(E\otimes_K F)$  is in the image of $(\pi_1)_\ast \circ \gamma_{E,F}$ if and only if  
\begin{enumerate}
\item[(1)] {$p$} is $K$-equivariant and 

\item[(2)] it makes the diagram 
	\[
	\begin{tikzcd}
		& D_R(E\otimes_K F)\arrow[d,"D_R(\pi_{E\otimes F})"] \\
		F \arrow[r,"\delta\circ \pi_F"']\arrow[ur,"p"] & D_R(X) 
	\end{tikzcd}
	\] 
	commute. 
\end{enumerate}		
\end{lem}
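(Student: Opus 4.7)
The plan is to make the composite $(\pi_1)_\ast \circ \gamma_{E,F}$ completely explicit, read off an inverse on distributions satisfying (1) and (2), and then verify the equivariance and commutativity conditions directly. Applying $D_R(\pi_1)$ (which sums out the second coordinate) to the formula for $\gamma_{E,F}(p)$ displayed just before the lemma yields
\[
q_f([e,j]) \;=\; p_{\pi_F(f)}\bigl((jf^{-1})\cdot e\bigr)
\]
for $q := ((\pi_1)_\ast \circ \gamma_{E,F})(p)$. Using the identity $[e,j] = [(jf^{-1})e, f]$ in $E \otimes_K F$, which is valid whenever $\pi_F(j) = \pi_F(f)$, this simplifies to $q_f([a,f]) = p_{\pi_F(f)}(a)$ on representatives with second coordinate $f$. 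Since every class in $\pi_{E \otimes F}^{-1}(\pi_F(f))$ admits such a representative, $p$ is recovered from $q$ by $p_x(a) = q_f([a,f])$ for any lift $f \in \pi_F^{-1}(x)$, which gives injectivity.

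For the forward direction of the image characterization, $K$-equivariance of $q$ (namely $q_{kf}(y) = q_f(k^{-1}y)$) follows from the displayed formula, commutativity of $K$, and the identity $j(kf)^{-1} = k^{-1}(jf^{-1})$. Property (2) is immediate: the support of $q_f$ is contained in $\pi_{E \otimes F}^{-1}(\pi_F(f))$, and its total mass equals that of $p_{\pi_F(f)}$, namely $1$.

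For the converse, given $q: F \to D_R(E \otimes_K F)$ satisfying (1) and (2), define $p_x(a) := q_f([a,f])$ for any $f \in \pi_F^{-1}(x)$. Independence of the lift follows by combining $K$-equivariance with the identity $[k^{-1}a, kf] = [a,f]$: one has $q_{kf}([a,kf]) = q_f(k^{-1}\cdot[a,kf]) = q_f([k^{-1}a,kf]) = q_f([a,f])$. Property (2) guarantees that $p_x$ is a probability distribution supported on $\pi_E^{-1}(x)$. For simpliciality, given $\theta:[n] \to [m]$, the key observation is that the support of $q_f$ is parameterized by classes $[a,f]$ with $a \in \pi_E^{-1}(\pi_F(f))$ (a consequence of (2) and freeness), and under this parameterization $\theta^\ast[a,f] = [\theta^\ast a, \theta^\ast f]$, with $[\theta^\ast a, \theta^\ast f] = [b, \theta^\ast f]$ iff $\theta^\ast a = b$. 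Combining these with simpliciality of $q$ gives
\[
D_R(\theta^\ast)(p_x)(b) \;=\; \sum_{a:\theta^\ast a = b} q_f([a,f]) \;=\; D_R(\theta^\ast)(q_f)([b,\theta^\ast f]) \;=\; q_{\theta^\ast f}([b,\theta^\ast f]) \;=\; p_{\theta^\ast x}(b).
\]
Finally, $((\pi_1)_\ast \circ \gamma_{E,F})(p)_f([a,f]) = p_{\pi_F(f)}(a) = q_f([a,f])$, so the two constructions are mutually inverse on the image.

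The main obstacle is the simpliciality verification in the converse direction: it requires carefully tracking $\theta^\ast$ through the quotient $E \times_X F \to E \otimes_K F$ and exploiting the unique representability of elements in $\pi_{E \otimes F}^{-1}(\pi_F(f))$ by second coordinate $f$. Once this parameterization is in place, the rest of the argument is essentially formal manipulation of the explicit formula for $q_f$.
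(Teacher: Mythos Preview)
Your proposal is correct and follows essentially the same approach as the paper: both make the composite explicit via the formula $q_f([e,j]) = p_{\pi_F(f)}((jf^{-1})\cdot e)$, recover $p$ from $q$ by evaluating at $[a,f]$, verify equivariance using $j(kf)^{-1} = k^{-1}(jf^{-1})$, and construct the inverse $\overline{p}_x(a) = q_f([a,f])$ checking well-definedness via equivariance. One minor difference is that you explicitly verify simpliciality of the constructed $\overline{p}$ using the bijection $a \mapsto [a,f]$ between $\pi_E^{-1}(x)$ and $\pi_{E\otimes F}^{-1}(x)$, whereas the paper leaves this step implicit; this is a welcome addition rather than a departure in method.
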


\begin{proof}
	We first show that the map $(\pi_1)_\ast \circ \gamma_{E,F}$ is injective. For brevity, let us write $\zeta=(\pi_1)_\ast \circ \gamma_{E,F}$. Suppose $p,q\in \sDist(\pi_E)$ such that $\zeta(p)=\zeta(q)$. Then, for any $f\in f$ and any $[e,j]\in E\otimes_K F$, we have 
	\[
	\begin{aligned}
	\zeta(p)_f([e,j]) &=p_{\pi_F(f)}((jf^{-1})\cdot e) \delta_{\pi_F(f),\pi_{E\otimes F}([e,j])} \\
	&=q_{\pi_F(f)}((jf^{-1})\cdot e) \delta_{\pi_F(f),\pi_{E\otimes F}([e,j])}\\
	&=\zeta(q)_f([e,j]).
\end{aligned}
	\]
	Given $x\in X$ and $g\in \pi_E^{-1}(x)$, we can choose any $f\in \pi_F^{-1}(x)$. Applying the formulae above to $f$ and $[g,f]$. we obtain 
	\[
	\begin{aligned}
		\zeta(p)_f([g,f])&=p_x((ff^{-1})\cdot g) \delta_{x,x}\\
		& =p_x(g)  
	\end{aligned}
	\]
	and similarly,
	\[
	\zeta(q)_f([g,f])=q_x(g). 
	\]
	Thus, $p=q$, and the map is injective, as desired. 
	
	We then check that given $p\in \sDist(\pi_E)$, $\zeta(p)$ has the two requisite properties. The term $\delta_{\pi_F(f),\pi_{E\otimes F}([e,j])}$ in the formula above shows that $\zeta(p)$ does indeed make the desired diagram commute. To see equivariance, we let $k\in K$ and compute
	\[
	\begin{aligned}
		\zeta(p)_{k\cdot f}([e,j])&= p_{\pi_F(k\cdot f)}((j(k\cdot f)^{-1})\cdot e) \delta_{\pi_F(k\cdot f),\pi_{E\otimes F}([e,j])}\\
		& =p_{\pi_F(f)}(k^{-1}\cdot((jf^{-1})\cdot e)) \delta_{\pi_F(f ),\pi_{E\otimes F}([e,j])}\\
		&= (\delta_k\ast \zeta(p)_f)([e,j]) 
	\end{aligned}
	\]
	so that the map is equivariant. 
	
	Finally, given $p:F\to D_R(E\otimes_KF)$ satisfying our two conditions, define a simplicial distribution $\overline{p}\in \sDist(\pi_E)$ defined by 
	\[
	\overline{p}_x(g)=\begin{cases}
		p_f([g,f]) & \pi_E(g)=x\\
		0 & \text{else},
	\end{cases}
	\]
	where $f\in F$ is any element such that $\pi_F(f)=x$. To see that this is well defined, note that, choosing another element $k\cdot f$, we have 
	\[
	p_{kf}([g,kf])= \delta_k\ast p_f([g,kf])=p_f(k^{-1} [g,kf])=p_f([g,k^{-1}kf])=p_f([g,f])
	\] 
	so that $\overline{p}_x(g)$ is well-defined. Similarly, we can compute 
	\[
	D_R(\pi_E)(\overline{p}_x)(y)= \sum_{e\in \pi_E^{-1}(y)}\overline{p}_x(e)=\delta_{x,y} \sum_{e\in \pi_E^{-1}(x)}p_f([e,f]) 
	\]
	where we can choose a fixed $f\in \pi_F^{-1}(x)$ for the last formula. Since there is a bijection between pairs $(e,f)$ with $e\in \pi_E^{-1}(x)$ for fixed $f$ and equivalence classes $[e,f]\in \pi_{E\otimes F}^{-1}(x)$, the latter sum can be written as 
	\[
	\delta_{x,y}\sum_{[e,j]\in \pi_{E\otimes F}^{-1}(x)} p_f([e,j])=\delta_{x,y}. 
	\] 
	Thus, the diagram 
	\[
	\begin{tikzcd}
		& D_R(E)\arrow[d,"D_R(\pi_E)"] \\
		X \arrow[r,"\delta"']\arrow[ur,"\overline{p}"] & D_R(X)
	\end{tikzcd}
	\]
	commutes, and $\overline{p}\in \sDist(\pi_E)$. 
	
	Finally, we compute 
	\[
	\begin{aligned}
		\zeta(\overline{p})_f([e,j])&=\overline{p}_{\pi_F(f)}((jf^{-1})\cdot e)\delta_{\pi_F(f),\pi_{E\otimes F}([e,j])}\\
		&=p_f([(jf^{-1})\cdot e,f])\\
		&=p_f([e,(j f^{-1})\cdot f])= p_f([e,j])
	\end{aligned}
	\]
	so that $\zeta(\overline{p})=p$. 
\end{proof}

\begin{defn}\label{defn:equivar_dist_notation}
	For $F,B$ simplicial sets equipped with $K$-actions, we denote by $\sDist_K(F,B)$ the set of $K$-equivariant simplicial distributions $F\to D_R(B)$, where $D_R(B)$ is equipped with the $K$ action given by 
	\[
	k\cdot p=\delta_k\ast p. 
	\] 
	
	For $\chi:F\to X$ and $\beta:B\to X$ two $K$-equivariant	
	simplicial maps, we denote by $\sDist_K^X(F,B)\subset \sDist_K(F,B)$ the subset of those equivariant distributions $p$ such that the diagram 
	\[
	\begin{tikzcd}
		& D_R(B) \arrow[d,"D_R(\beta)"] \\
		F \arrow[r,"\delta \circ \chi"'] \arrow[ur,"p"]& D_R(X)
	\end{tikzcd}
	\]
	commutes.
\end{defn}

\begin{thm}\label{thm:twisted as equivariant}
	For a $K$-valued twisting function $\eta$ on $X$, there is a bijection
	\[
	\sDist(\pi_\eta)\cong \sDist_K(K\times_{\eta^{-1}} X, K)
	\]
	given on elements by sending $p:X\to K\times_\eta X$ to the distribution $\widetilde{p}$ with
	\[
	\widetilde{p}_{k,x}(h)=p_x(hk^{-1},x).
	\]
\end{thm}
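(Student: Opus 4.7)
The plan is to reduce the theorem to Lemma \ref{lem:Bundist_is_equivar_tensor_dist} by choosing the pair of bundles so that their tensor product becomes trivial. Take $E = K\times_\eta X$ and $F = K\times_{\eta^{-1}}X$. By Proposition \ref{pro:Addition_of_twists_tensor} there is a canonical bundle isomorphism $E\otimes_K F \cong K\times_{\eta + \eta^{-1}} X = K\times X$ given on representatives by $[(k,x),(h,x)]\mapsto (k+h,x)$. Substituting this into the conclusion of Lemma \ref{lem:Bundist_is_equivar_tensor_dist} produces a bijection
\[
\sDist(\pi_\eta) \;\cong\; \sDist_K^X\bigl(K\times_{\eta^{-1}}X,\; K\times X\bigr).
\]

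Next, I would identify $\sDist_K^X(F, K\times X)$ with $\sDist_K(F, K)$. The factorization condition of Definition \ref{defn:equivar_dist_notation} forces each distribution $p_f$ to be supported on $K\times\{\pi_F(f)\}$, so projection onto the $K$-factor yields a bijective, simplicial correspondence that intertwines the two $K$-actions (both of which are given by left multiplication on the $K$-factor). Composing with the bijection from the previous paragraph yields the claimed isomorphism $\sDist(\pi_\eta) \cong \sDist_K(K\times_{\eta^{-1}}X, K)$.

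For the explicit formula, I would trace through the composite on elements. Given $f = (k,x)$ and $h \in K$, I would choose the representative $[(hk^{-1},x), (k,x)]$ in $E\otimes_K F$, which maps to $(h,x) \in K\times X$ under the isomorphism of the first step. Since the second coordinate of this representative coincides with $f$, one has $jf^{-1} = 1$ in the notation of Definition \ref{defn:invprod_in_PB}, so the formula for $(\pi_1)_\ast \circ \gamma_{E,F}$ derived just before Lemma \ref{lem:Bundist_is_equivar_tensor_dist} collapses to $\widetilde{p}_{(k,x)}(h) = p_x(hk^{-1},x)$, as stated. The main obstacle is bookkeeping: correctly lining up the three identifications and tracking the $K$-action conventions across them. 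Once these are fixed, the result follows essentially by substitution, since the nontrivial content has already been absorbed into Lemma \ref{lem:Bundist_is_equivar_tensor_dist}.
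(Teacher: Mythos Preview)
Your proposal is correct and follows essentially the same approach as the paper: set $E=K\times_\eta X$, $F=K\times_{\eta^{-1}}X$, use the canonical isomorphism $E\otimes_K F\cong K\times X$, apply Lemma \ref{lem:Bundist_is_equivar_tensor_dist}, then strip off the redundant $X$-factor to pass from $\sDist_K^X(F,K\times X)$ to $\sDist_K(F,K)$. Your derivation of the explicit formula via the representative $[(hk^{-1},x),(k,x)]$ is a clean way to read off the same formula the paper obtains directly.
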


\begin{proof}
	The canonical isomorphism 
	\[
	\begin{tikzcd}[row sep=0em]
		(K\times_\eta X)\otimes_K (K\times_{\eta^{-1}} X) \arrow[r,"\cong"] & K\times X \\
		{[(h,x),(k,x)]}\arrow[r,mapsto] & (hk,x)
	\end{tikzcd}
	\]
	{induces} an isomorphism 
	\[
	\sDist_K^X(K\times_{\eta^{-1}} X,(K\times_\eta X)\otimes_K (K\times_{\eta^{-1}} X))\cong \sDist_K^X(K\times_{\eta^{-1}} X,K\times X).
	\] 
	Applying Lemma \ref{lem:Bundist_is_equivar_tensor_dist}, we obtain a bijection 
	\[
	\sDist(\pi_\eta)\cong \sDist_K^X(K\times_{\eta^{-1}} X,K\times X)
	\]
	which sends $p\in \sDist(\pi_\eta)$ to the distribution $\overline{p}$ defined by 
	\[
	\overline{p}_{(k,x)}(h,y)=\delta_{x,y} p_x(k^{-1}h,x).
	\]
	Since $K\times X$ is an untwisted product, there is a canonical isomorphism 
	\[
	\sDist_K^X(K\times_{\eta^{-1}} X,K\times X)\cong \sDist_K(K\times_{\eta^{-1}} X,K)
	\]
	completing the proof.
\end{proof}


\begin{ex}\label{ex:quantum}
Key examples of twisted distributions come from quantum theory. See for example \cite{watrous2018theory} for basic definitions. 
Let $\hH$ denote a finite-dimensional Hilbert space. We will write $\Proj(\hH)$ for the set of projectors.  
Analogous to the distribution functor we can define a functor  
$$
\begin{tikzcd}
		P_\hH: &[-3em] \catSet \arrow[r] & \catSet 
	\end{tikzcd}
$$
	which sends $X$ to the set of {\it projective measurements}
	\[
	P_\hH(X)=\left\lbrace \Pi:X\to \Proj(\hH)\;\mid \;\Pi \text{ {with finite support}, }\sum_{x\in X} \Pi(x)=\one\right\rbrace
	\]
	and sends $f: X\to Y$ to the map 
	\[
	\begin{tikzcd}[row sep=0em]
		P_\hH(f):&[-3em] P_\hH(X) \arrow[r] & P_\hH(Y) \\
		 & \Pi \arrow[r,mapsto] & \left(y \mapsto {\displaystyle\sum_{x\in f^{-1}(y)}} \Pi(x)\right).
	\end{tikzcd}
	\]
We can easily extend this functor to simplicial sets. Given a density operator $\rho$ (positive and trace $1$ linear operator), also called a quantum state, we have a natural transformation
$$ 
\begin{tikzcd}
		\rho_*: &[-3em] P_\hH \arrow[r] & D 
	\end{tikzcd}
$$ 	
whose components are given by $(\rho_*)_Y: P_\hH(Y)\to D(Y)$ that sends $\Pi \mapsto (y \mapsto \Tr(\rho\Pi(y)))$. For a simplicial group $K$ the corresponding map $P_\hH(K)\to D(K)$ is in fact $K$-equivariant. We can see this distribution as a twisted distribution. For a group $G$ the set $P_\hH(G)$ admits a  natural $G$ action: For $g\in G$ and $\Pi\in P_\hH(G)$ we have $g\cdot \Pi(h)=\Pi(g^{-1}\cdot h)$. Then we define 
$$
\bar P_\hH(G) = P_\hH(G)/\sim
$$
to be the quotient under this action. For simplicial sets the quotient gives a principal $K$-bundle
\begin{equation}\label{eq:PHK bundle}
\begin{tikzcd}
		\pi: &[-3em] P_\hH(H) \arrow[r] & \bar P_\hH(K) .
	\end{tikzcd}
\end{equation}
Thus the isomorphism in Theorem \ref{thm:twisted as equivariant} allows us to obtain a twisted distribution  
	\[
	\begin{tikzcd}
		& D(P_\hH(K)) \arrow[d,"D(\pi)"] \\
		\bar P_\hH(K) \arrow[r,"\delta"] \arrow[ur,"p_\rho"]& D(\bar P_\hH(K))
	\end{tikzcd}
	\]
An important special case is $K=N\ZZ_d$. Then there is an alternative description of $P_\hH(N\ZZ_d)$. For a group $G$, let us write $N(\ZZ_d,G)\subset NG$ for the simplicial subset consisting of $n$-simplices given by $n$-tuples $(g_1,\cdots,g_n)$ of pairwise commuting group elements such that $g_i^d=1$ for every $i$.	For the unitary group $U(\hH)$, the spectral decomposition of the commuting tuples of operators give an isomorphism of simplicial sets
$$
\begin{tikzcd}
		 &[-3em] N(\ZZ_d,U(\hH)) \arrow[r,"\cong"] & P_\hH(N\ZZ_d) .
	\end{tikzcd}
$$
The simplicial group $N\ZZ_d$ acts freely on $N(\ZZ_d,U(\hH))$ via the canonical inclusion induced by $1\mapsto e^{2\pi/d}\one$. Then the isomorphism above extends to an isomorphism of principal bundles
$$
\begin{tikzcd}
N(\ZZ_d,U(\hH)) \arrow[d]\arrow[r,"\cong"]  &  P_\hH(N\ZZ_d) \arrow[d] \\
\bar N(\ZZ_d,U(\hH)) \arrow[r,"\cong"] & \bar P_\hH(N\ZZ_d)
\end{tikzcd}
$$
where the left vertical map is the quotient just described.
\end{ex}

\section{Contextuality}
\label{sec:Contextuality}

In this section we introduce the notion of contextuality for simplicial distributions. We follow the framework of \cite{barbosa2023bundle}. To this end it is useful to introduce the interpretation of the simplicial sets involved. Given a simplicial set map $f:E\to X$, the base space represents the {\it measurement space}. The fiber $x^*(E)$ over a simplex $x:\Delta^n\to X$ corresponding to a measurement, represents the {\it space of outcomes}. Any physical experiment whose measurements and outcomes are discretely labeled can be described in this way; see \cite{okay2022simplicial} for more details.

\Def{\label{def:contextuality}
	A simplicial distribution $p$ on $f:E\to X$ is called \emph{non-contextual} if it lies in the image of
	$$
	\Theta: D_R (\Sec(f)) \to \sDist(f).
	$$
	Otherwise, it is called \emph{contextual}.  
}

When 
{$\eta\sim 0$}
the bundle ${K\times_\eta X}\to X$ is isomorphic to the projection map {$\pi_2:K\times X\to X$}. In this case simplicial distributions on $\pi_2$ can be identified with simplicial set maps $X\to D_R({K})$ recovering the original definition {of contextuality} in \cite{okay2022simplicial}.

The difficulty in studying contextuality for twisted distributions is that, unless $\eta\sim 0$, the set $\Sec(\pi_\eta)$ is empty. As a result, in most interesting cases, there are no {``obvious"} classical distributions. To address this, we examine distributions locally\footnote{{Here, ``local" is used in the topological sense as the opposite of ``global." Note that in the physics context, the word ``local" is also used for a special case of non-contextuality. To avoid confusion, we will use the word ``relative" in the topological context.}}. Throughout, fix a simplicial group $K$ and a $K$-principal bundle $\pi:E\to X$.

\begin{defn}\label{def:local contextuality}
	A morphism $\phi:U\to X$ of simplicial sets is called \emph{trivializing} (or \emph{$\pi$-trivializing}, when the bundle is not clear from context) when the pullback $E\times_X U\to U$ of $\pi:E\to X$ along $\phi$ is a trivial bundle. 
	
	If $\phi:U\to X$ is a $\pi$-trivializing morphism, we will call a simplicial distribution $p\in \sDist(\pi)$ \emph{$\phi$-relatively non-contextual} if $\psi^\phi_E(p)$ is non-contextual (here, $\psi^\phi_E$ is the map of Proposition \ref{prop:functoriality_bundle_scenarios}).  Note that since the restriction maps are convex, the set of $\phi$-relatively non-contextual distributions is itself a convex subset of $\sDist(\pi)$. We will call $p$ \emph{$\phi$-
{relatively}	
	deterministic} if $\psi^\phi_E(p)$ is deterministic.  
\end{defn}

\begin{ex}Basic examples of trivializing simplicial set maps that prove to be useful in applications are as follows.
	\begin{enumerate}
		\item Given a subset $U\subset X$ such that $U$ is weakly contractible, the morphism $U\to X$ is $\pi$-trivializing for every $K$-bundle over $X$.  
		\item A collection $\mathcal{U}:=\{U_i\}_{i=1}^k$ of subsets of $X$ is called a \emph{cover} of $X$ if the canonical map 
		\[
		\begin{tikzcd}
			\phi:&[-3em] {\displaystyle\coprod_{i=1}^k} U_i\arrow[r] & X 
		\end{tikzcd}
		\] 
		is surjective. If each inclusion $U\to X$ is $\pi$-trivializing, than so is $\phi$, in which case we say that $\mathcal{U}$ is a \emph{$\pi$-trivializing cover}. 
		\item By the definition of a principal $K$-bundle, the square 
		\[
		\begin{tikzcd}
			K\times X \arrow[r]\arrow[d,"\on{pr}_2"'] & E \arrow[d,"\pi"] \\
			E\arrow[r,"\pi"'] & X
		\end{tikzcd}
		\]
		is pullback. Thus, the morphism $\pi$ is always $\pi$-trivializing. 
	\end{enumerate}	
\end{ex}


\subsection{Twisting and collapsing}

In the case where $K=N(H)$ is the nerve of an Abelian group, one may relate distributions with specified deterministic behavior on a simplicial subset of the measurement space to twisted distributions on the corresponding quotient. To this end, let us fix an Abelian group $H$, a measurement space $X\in\catsSet$, a simplicial subset $Z\subset X$, and a normalized 2-cocycle $\alpha:X_2\to H$. For $\pi_\eta:K\times_\eta X\to X$, it will be convenient to write $\sDist_\eta(X)=\sDist(\pi_\eta)$ and $\Sec_\eta(X)=\Sec(\pi_\eta)$. To emphasize that $\eta$ comes from a cocycle $\alpha$ we will sometimes use the notation $\sDist_\alpha(X)$ and $\Sec_\alpha(X)$, respectively.

We will be working with the cofiber sequence 
\[
\begin{tikzcd}
	Z\arrow[r,"i"] & X \arrow[r,"j"] & X/Z 
\end{tikzcd}
\]
of simplicial sets throughout, and the corresponding short exact sequence of $H$-valued cochain complexes\footnote{Note that this sequence is \emph{not} actually exact in degree $0$, as the short exact sequence on cohomology only holds for relative cohomology. However, we will not at any point make use of $0$-cochains, and so it is sufficient for us to look at the truncated cochain complexes presented here. } 
\[
\begin{tikzcd}
	{C}_1(X/Z,H) \arrow[r,"j^\ast"]\arrow[d,"\partial"'] & {C}_1(X,H) \arrow[r,"i^\ast"]\arrow[d,"\partial"] & {C}_1(Z,H)\arrow[d,"\partial"]\\
	{C}_2(X/Z,H) \arrow[r,"j^\ast"]\arrow[d,"\partial"'] & {C}_2(X,H) \arrow[r,"i^\ast"]\arrow[d,"\partial"] & {C}_2(Z,H)\arrow[d,"\partial"] \\
	\vdots & \vdots & \vdots 
\end{tikzcd}
\]

First, let us assume that $i^\ast(\alpha)=0$. Then there is a unique 2-cocycle $\beta:(X/Z)_2\to H$ such that $j^\ast(\beta)=\alpha$.  Denote by $\eta$ the twisting function on $X/Z$ corresponding to $\beta$, and note that, defining $\eta|_{X}= \eta\circ j$ , we have that {$\eta|_X$} is the twisting function corresponding to $\alpha$. We thus obtain a commutative diagram 
\[
\begin{tikzcd}
	NH\times Z \arrow[r]\arrow[d,"\on{pr_2}"'] & 	NH\times_{\eta|_X} X \arrow[r]\arrow[d,"\pi_\alpha"'] & NH\times_{\eta} X/Z\arrow[d,"\pi_\eta"] \\
	Z \arrow[r,"i"'] & X \arrow[r,"j"'] & X/Z 
\end{tikzcd}
\]
in which each square is pullback, and the $N(H)$-components of the top morphisms are identities. By the functoriality of simplicial distributions on $\sf{bScen}_{N(H)}$, this provides a sequence of maps of convex sets of twisted distributions
\[
\begin{tikzcd}
	\sDist_\eta(X/Z) \arrow[r,"j^\ast"] & \sDist_{\eta|_X}(X) \arrow[r,"i^\ast"] & \sDist_0(Z). 
\end{tikzcd}
\] 
Since we can consider the delta distribution $\delta^{\varphi_0}$ on the zero section $\varphi_0$ as a basepoint of $\sDist_0(Z)$, it is possible to define the kernel of $i^\ast$, and thus, it is possible to ask whether this sequence is short exact. 

\begin{lem}
	The induced sequence 
	\[
	\begin{tikzcd}
		\sDist_\eta(X/Z) \arrow[r,"j^\ast"] & \sDist_{\eta|_X}(X) \arrow[r,"i^\ast"] & \sDist_0(Z)
	\end{tikzcd}
	\] 
	of convex sets is left-exact. 
\end{lem}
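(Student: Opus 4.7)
The plan is to verify the two conditions defining left-exactness of a pointed sequence: injectivity of $j^\ast$, and the equality of the image of $j^\ast$ with the preimage of the basepoint $\delta^{\varphi_0}$ under $i^\ast$. The key structural fact driving the argument is that $X/Z$, as a pointed simplicial set, decomposes in each dimension $n$ as non-basepoint simplices (in bijection with $X_n \setminus Z_n$ via $j$) together with the single basepoint simplex $\ast_n = s_0^n(\ast_0)$.

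For injectivity of $j^\ast$, I first observe that any $q \in \sDist_\eta(X/Z)$ is completely pinned down by $j^\ast(q)$. Because $N(H)_0$ is a singleton, the value $q_{\ast_0}$ must be the delta distribution on $(0,\ast_0)$, and the degeneracy compatibility propagates this to force $q_{\ast_n} = \delta_{(0,\ast_n)}$ in every dimension (here one uses that $\eta$ vanishes on degenerate simplices). On every non-basepoint $y = j(x)$, the fiber of $\pi_\eta$ over $y$ is canonically identified with the fiber of $\pi_{\eta|_X}$ over $x$, so $q_y$ is determined by $j^\ast(q)_x$. Thus $j^\ast(q) = j^\ast(q')$ forces $q = q'$.

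For the inclusion $\mathrm{im}(j^\ast) \subseteq \ker(i^\ast)$, functoriality gives $i^\ast j^\ast(q) = (ji)^\ast(q)$, and $ji: Z \to X/Z$ factors through the basepoint. The pullback of $q_{\ast_n} = \delta_{(0,\ast_n)}$ along the constant map at $\ast$ is precisely $\delta^{\varphi_0}_z = \delta_{(0,z)}$ for each $z \in Z_n$. For the reverse inclusion, given $p \in \sDist_{\eta|_X}(X)$ with $i^\ast(p) = \delta^{\varphi_0}$, I define $q \in \sDist_\eta(X/Z)$ by
\[
q_y := \begin{cases} \text{pushforward of } p_x \text{ along } (h,x)\mapsto (h,y) & \text{if } y = j(x),\ x \notin Z_n, \\ \delta_{(0,\ast_n)} & \text{if } y = \ast_n. \end{cases}
\]
This is well-defined since $j$ is a bijection from $X_n \setminus Z_n$ onto the non-basepoint simplices of $(X/Z)_n$, and the pushforward bijections on fibers intertwine the $d_i$ for $i>0$ and the degeneracies strictly, while the $d_0$ involves the twisting $\eta(y) = \eta(j(x)) = \eta|_X(x)$, which matches on the nose.

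The main obstacle is checking simpliciality of $q$ at the interface between $Z$ and its complement: when $x \in X_n \setminus Z_n$ has a face $d_i(x) \in Z_{n-1}$, the corresponding face of $y = j(x)$ lands at $\ast_{n-1}$, and one must verify that the face of $q_y$ equals $\delta_{(0,\ast_{n-1})}$. By simpliciality of $p$, this face equals the pushforward of $p_{d_i(x)}$, which by the hypothesis $i^\ast(p) = \delta^{\varphi_0}$ is $\delta_{(0, d_i(x))}$, mapping correctly to $\delta_{(0,\ast_{n-1})}$. For $d_0$ the twisting contribution $\eta(y)$ matches $\eta|_X(x)$, so the computation goes through unchanged. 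Compatibility with degeneracies on the basepoint follows from normalization of $\eta$. This establishes that $q$ is a well-defined twisted distribution with $j^\ast(q) = p$, completing the proof.
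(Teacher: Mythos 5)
Your proof is correct and follows essentially the same approach as the paper: both establish $i^\ast j^\ast = \delta^{\varphi_0}$ via the factorization of $j\circ i$ through the basepoint, and both construct the preimage of a $p$ with $i^\ast(p) = \delta^{\varphi_0}$ by pushing $p$ forward along the bijection $j: X_n\setminus Z_n \to (X/Z)_n\setminus\{\ast_n\}$, then verifying simpliciality at the interface. Your injectivity argument (propagating $q_{\ast_0} = \delta$ through degeneracies) is more explicit than the paper's, which dismisses injectivity as immediate from surjectivity of $j$, but the ideas are the same.
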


\begin{proof}
	We first note that $j\circ i:Z\to X/Z$ factors through the one-point space $\Delta^0$. Since there is a unique $0$-twisted distribution on $\Delta^0$ (the delta on the zero section), and the map $Z\to \Delta^0$ induces a pullback square sending this distribution to $\delta^{\varphi_0}$, we see that any $p\in \sDist_\eta(X/Z)$ satisfies 
	\[
	i^\ast(j^\ast(p))=\delta^{\varphi_0}. 
	\] 
	
	On the other hand, suppose that $p\in \sDist_{\eta|_X}(X)$ is a distribution such that $i^\ast(p)=\delta^{\varphi_0}$. This means the composite map 
	\[
	\begin{tikzcd}
		q:&[-3em] Z\arrow[r,"i"] & X \arrow[r,"p"] & D_R(NH\times_{\eta_X} X) 
	\end{tikzcd}
	\]
	satisfies 
	\[
	q_z(h,x) =\delta_{h,0}\delta_{z,x}
	\]
	for any $z\in Z$ and $h\in N(H)$. We then define a putative twisted distribution $\tilde{p}\in \sDist_\eta(X/Z)$ by setting 
	\[
	\tilde{p}_x(h,x)=p_{\tilde{x}}(h,\tilde{x})
	\]
	for some chosen $\tilde{x}$ such that $j(\tilde{x})=x$. Note that it does not matter which $\tilde{x}$ we choose, so long as the choices are consistent. It is immediate that 
	\[
	\sum_{h\in N(H)} \tilde{p}_x(h,x)=\sum_{h\in N(H)} {p}_{\tilde{x}}(h,\tilde{x})=1 
	\]
	so that this is, indeed a distribution. It remains for us to show it is simplicial. 
	
	By definition of the quotient, there is a unique 0-simplex $z\in X/Z$ such that $j\circ i$ factors through $z$. It is immediate from the definitions that $\tilde{p}_x(h,x)$ is compatible with degeneracy maps when $x$ is not degenerate on $z$, and compatible with all face maps $d_i$ such that $d_i(x)$ is not degenerate on $z$.  We thus have three cases to consider.
	\begin{itemize}
		\item First, let $\phi$ be a surjective map in $\Delta$. Then 
		\[
		\tilde{p}_{d_i(\phi^\ast(z))}(h,x)= \delta_{h,0}\delta_{\phi^\ast(z),x} 
		\]
		and similarly, 
		\[
		\begin{aligned}
			d_i(\tilde{p}_{\phi^\ast(z)})(d,x)& = \sum_{(k,y)\in d_i^{-1}(h,x)} \tilde{p}_{\phi^\ast(z)}(k,y)\\
			&= \sum_{(k,y)\in d_i^{-1}(h,x)} p_{\widetilde{\phi^\ast(z)}}(k,\tilde{y})\\
			& = \sum_{(k,y)\in d_i^{-1}(h,x), \tilde{y}\in Z} p_{\widetilde{\phi^\ast(z)}}(k,\tilde{y})+ \sum_{(k,y)\in d_i^{-1}(h,x), \tilde{y}\notin Z} p_{\widetilde{\phi^\ast(z)}}(k,\tilde{y})\\
			&= \sum_{(k,y)\in d_i^{-1}(h,x), \tilde{y}\in Z} p_{\widetilde{\phi^\ast(z)}}(k,\tilde{y})\\
			&= \sum_{(k,y)\in d_i^{-1}(h,x), \tilde{y}\in Z} \delta_{k,0} \delta_{\tilde{y},\widetilde{\phi^\ast(z)}} \\
			&= \delta_{h,0}\delta_{\phi^\ast(z),x}.
		\end{aligned}
		\] 
		\item A similar computation to the above holds for the $\phi^\ast(z)$ and a degeneracy map $s_i$. 
		\item The final case is that $x\notin Z$, but $d_i(x)=\phi^\ast(z)$ for a surjective map $\phi$ in $\Delta$. In this case, we can compute 
		\[
		\begin{aligned}
			d_i(\tilde{p}_{x}) (h,y)& = \sum_{(k,w)\in d_i^{-1}(h,y)} \tilde{p}_x(k,w)\\
			& = \sum_{(k,w)\in d_i^{-1}(h,y)} p_{\tilde{x}}(k,\tilde{w})\\
			& = \sum_{(k,u)\in d_i^{-1}(h,\tilde{y})} p_{\tilde{x}}(k,u)\\
			&= p_{d_i(\tilde{x})}(h,\tilde{y}) \\
			&= \tilde{p}_{d_i(x)}(h,y).
		\end{aligned}
		\]
	\end{itemize}
	
	By construction $j^\ast(\tilde{p})=p$, thus proving exactness at the middle term. It is immediate that $j^\ast$ is injective, completing the proof. 
\end{proof}

We denote by $\sDist_\eta(X,0)$ the (convex) set of $\eta$-twisted distributions on $X$ which satisfy $i^\ast(p)=\delta^{\varphi_0}$. 

\begin{thm}\label{thm:twist collapse iso for twistings}
	There is a convex isomorphism 
	\[
	\sDist_\eta(X/Z)\cong \sDist_\eta(X,0). 
	\]
\end{thm}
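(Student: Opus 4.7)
The plan is to observe that the previous lemma has essentially already done all the work, and the theorem amounts to repackaging its conclusion. The left-exact sequence
\[
\sDist_\eta(X/Z) \xrightarrow{j^\ast} \sDist_{\eta|_X}(X) \xrightarrow{i^\ast} \sDist_0(Z)
\]
from the preceding lemma tells us two things: first, that $j^\ast$ is injective, and second, that the image of $j^\ast$ lands inside the subset $\sDist_\eta(X,0) = (i^\ast)^{-1}(\delta^{\varphi_0})$. Thus $j^\ast$ restricts to an injection
\[
j^\ast : \sDist_\eta(X/Z) \hookrightarrow \sDist_\eta(X,0).
\]
So the content of the theorem reduces to showing that this restricted map is surjective and convex-linear.

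For surjectivity, I would simply quote the construction of $\tilde{p}$ carried out in the proof of the lemma: for any $p \in \sDist_\eta(X,0)$ one defines $\tilde{p}_x(h,x) = p_{\tilde{x}}(h,\tilde{x})$ for a chosen lift $\tilde{x}$ of $x \in (X/Z)_n$, with the well-definedness on the image of $Z$ (where $x$ is degenerate on the basepoint $z \in X/Z$) following from the hypothesis $i^\ast(p) = \delta^{\varphi_0}$. The lemma already verifies that $\tilde{p}$ is a genuine simplicial $\eta$-twisted distribution and that $j^\ast(\tilde{p}) = p$, which is precisely surjectivity onto $\sDist_\eta(X,0)$.

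For convexity, I would note that both $j^\ast$ and the inverse construction $p \mapsto \tilde{p}$ are defined pointwise by evaluation (on simplices and on fibers), and the convex structure on $\sDist_\eta(-)$ is likewise defined pointwise on fibers of the bundle projection. Hence a convex combination $\lambda p + (1-\lambda) q$ in $\sDist_\eta(X,0)$ pulls back, via the lift $x \mapsto \tilde{x}$, to $\lambda \tilde{p} + (1-\lambda)\tilde{q}$, and similarly $j^\ast$ evidently preserves convex combinations.

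I do not expect a genuine obstacle here, since the hard technical step --- checking that $\tilde{p}$ respects all face and degeneracy maps, particularly when a face lands on a degeneracy of the basepoint $z \in X/Z$ --- has already been carried out in the three-case analysis of the preceding lemma. The proof is therefore a short paragraph assembling these two facts and noting that convexity is automatic from the fiberwise definition of the convex structure.
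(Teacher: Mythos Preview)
Your proposal is correct and follows exactly the paper's approach: the theorem is stated immediately after the lemma with no separate proof, because left-exactness already gives that $j^\ast$ is injective with image precisely $(i^\ast)^{-1}(\delta^{\varphi_0})=\sDist_\eta(X,0)$, and convexity of $j^\ast$ is automatic since the restriction maps in $\catbScen_K$ are convex. Your write-up makes explicit what the paper leaves implicit, but there is no difference in substance.
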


Note that this isomorphism respects contextuality. Let $\Sec_\eta(X,0)$ denote the set of sections $\varphi$ which satisfy $\varphi\circ i=\varphi_0$. We have a commutative diagram
\[
\begin{tikzcd}
D_R( \Sec_\eta(X,0) ) \arrow[r,"\Theta"] \arrow[d,"\cong"]  &  \sDist_\eta(X,0) \arrow[d,"\cong"] \\
D_R( \Sec_\eta(X/Z) ) \arrow[r,"\Theta"]  &  \sDist_\eta(X/Z)  
\end{tikzcd}
\]

We now modify our setup slightly, assuming not that $i^\ast(\alpha)=0$, but rather that $[i^\ast(\alpha)]=0$. We choose a normalized 1-cochain $\nu:Z_1\to H$ such that $\partial \nu=\alpha$, and define a normalized 1-cochain on $X$ by  
\[
\tilde{\nu}(x)=\begin{cases}
	\nu(x) & x\in Z_1 \\
	0 & \text{else}.
\end{cases}
\]
Then $i^\ast(\alpha-\partial\tilde{\nu})=0$, so that we can again find a normalized 2-cocycle $\beta:(X/Z)_2\to H$ such that $j^\ast(\beta)=\alpha-\partial\tilde{\nu}$.

We can then apply Theorem \ref{thm:twist collapse iso for twistings}
to $\alpha-\partial\tilde{\nu}$, obtaining a convex isomorphism 
\begin{equation}\label{eq:previous work}
\sDist_\beta(X/Z)\cong \sDist_{\alpha-\partial\tilde{\nu}}(X,0), 
\end{equation}
where we emphasize the $2$-cocycle in the notation.
We will use the monoid homomorphism of Corollary \ref{cor:Theta_monoidal_functor}
\[
\begin{tikzcd}
	\Theta: &[-3em] {\displaystyle\coprod_{\eta\in\Twist_K(X)}} D_R(\Sec(\pi_\eta)) \arrow[r] & {\displaystyle\coprod_{\eta\in\Twist_K(X)}} \sDist_\eta(X)   .
\end{tikzcd}
\]
In particular, the delta distributions on sections of trivial bundles are invertible in the convex monoid of twisted distributions. Specializing to twistings coming from cocycles and denoting by $\sDist_\alpha(X,\nu)$ the $\alpha$-twisted distributions on $X$ which restrict to $\delta^{\varphi_\nu}$ on $Z$, we then obtain the following.

\begin{cor}\label{cor:twist and collapse cocycle}
	There is a convex isomorphism  
	\[
	\sDist_\beta(X/Z)\cong \sDist_\alpha(X,\nu). 
	\]
\end{cor}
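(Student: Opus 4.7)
The plan is to leverage the convex monoid structure on twisted distributions established in Corollary \ref{cor:Theta_monoidal_functor} (and Corollary \ref{cor:convmon_twistconv}), together with the convex isomorphism already available from Theorem \ref{thm:twist collapse iso for twistings}, to build a two-step chain
\[
\sDist_\beta(X/Z) \xrightarrow{\;\cong\;} \sDist_{\alpha - \partial\tilde{\nu}}(X,0) \xrightarrow{\;\cong\;} \sDist_\alpha(X,\nu).
\]
The first isomorphism is exactly Equation (\ref{eq:previous work}), so the entire proof reduces to producing the second one.

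To build the second isomorphism, I would first note that $\tilde{\nu}$ is a normalized $1$-cochain on $X$ with $\partial\tilde{\nu}$ equal to the $2$-cocycle $\partial\tilde{\nu}$ itself; by Proposition \ref{prop:sections_and_trivializing_cochains} this cochain determines a section $\varphi_{\tilde{\nu}}$ of the (trivial) bundle $\pi_{\partial\tilde{\nu}}:N(H)\times_{\partial\tilde{\nu}} X \to X$, and hence a deterministic distribution $\delta^{\varphi_{\tilde{\nu}}} \in \sDist_{\partial\tilde{\nu}}(X)$. Similarly, $-\tilde{\nu}$ gives $\delta^{\varphi_{-\tilde{\nu}}} \in \sDist_{-\partial\tilde{\nu}}(X)$, and since $\Theta$ is a monoid homomorphism, these two deterministic distributions are mutually inverse in the convex monoid $\coprod_\eta \sDist_\eta(X)$. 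Convolution with $\delta^{\varphi_{\tilde{\nu}}}$ therefore gives a convex bijection
\[
(-) \ast \delta^{\varphi_{\tilde{\nu}}} \,:\, \sDist_{\alpha-\partial\tilde{\nu}}(X) \;\xrightarrow{\;\cong\;}\; \sDist_{(\alpha-\partial\tilde{\nu}) + \partial\tilde{\nu}}(X) = \sDist_\alpha(X),
\]
with inverse given by convolution with $\delta^{\varphi_{-\tilde{\nu}}}$.

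The substantive step is then to check that this convex bijection restricts correctly to the prescribed boundary behaviour on $Z$, i.e.\ that it sends $\sDist_{\alpha-\partial\tilde{\nu}}(X,0)$ bijectively onto $\sDist_\alpha(X,\nu)$. Since restriction along $i:Z\to X$ is induced by a morphism of $N(H)$-bundle scenarios and the convolution is natural in the bundle scenario (being constructed from the monoidal structure on $\sf{Bun}_{N(H)}(X)$ and the lax monoidality of $\sDist$ in Proposition \ref{pro:sDist is lax SMF}), the pullback $i^\ast$ intertwines convolution on $X$ with convolution on $Z$. Hence for $p \in \sDist_{\alpha-\partial\tilde{\nu}}(X,0)$ we have
\[
i^\ast(p \ast \delta^{\varphi_{\tilde{\nu}}}) \;=\; i^\ast(p) \ast i^\ast(\delta^{\varphi_{\tilde{\nu}}}) \;=\; \delta^{\varphi_0} \ast \delta^{\varphi_{i^\ast\tilde{\nu}}} \;=\; \delta^{\varphi_\nu},
\]
using that $i^\ast\tilde{\nu} = \nu$ by construction of $\tilde{\nu}$ and that $\delta^{\varphi_0}$ is the monoidal unit on $Z$. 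The same calculation with $-\tilde{\nu}$ shows that convolution with $\delta^{\varphi_{-\tilde{\nu}}}$ sends $\sDist_\alpha(X,\nu)$ into $\sDist_{\alpha-\partial\tilde{\nu}}(X,0)$, so the two restricted maps are mutually inverse convex isomorphisms.

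The main obstacle is bookkeeping rather than conceptual: one must verify that the section $\varphi_{\tilde{\nu}}$ associated to $\tilde{\nu}$ by Proposition \ref{prop:sections_and_trivializing_cochains} restricts along $i$ to the section $\varphi_\nu$ associated to $\nu$, which amounts to checking that the explicit formulas in the proof of that proposition are compatible with pullback along the inclusion $Z\hookrightarrow X$. Granting this, composing with the isomorphism of Equation (\ref{eq:previous work}) yields the desired convex isomorphism $\sDist_\beta(X/Z)\cong \sDist_\alpha(X,\nu)$.
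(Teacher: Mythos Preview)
Your proposal is correct and follows essentially the same approach as the paper: reduce to Equation~(\ref{eq:previous work}) and then pass between $\sDist_{\alpha-\partial\tilde{\nu}}(X,0)$ and $\sDist_\alpha(X,\nu)$ by convolving with the invertible deterministic distribution associated to the cochain $\tilde{\nu}$. Your argument is in fact slightly more detailed than the paper's, which simply asserts that multiplication by $\delta^{\varphi_{-\tilde{\nu}}}$ lands in $\sDist_{\alpha-\partial\tilde{\nu}}(X,0)$ without spelling out the naturality of convolution under $i^\ast$ that you invoke.
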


\begin{proof}
	By isomorphism (\ref{eq:previous work}),
	it is sufficient to show that there is a convex isomorphism
	\[
	\sDist_\alpha(X,\nu)\cong \sDist_{\alpha-\partial\widetilde{\nu}}(X,0).
	\]
	However, $\delta^{\varphi_{-\partial\widetilde{\nu}}}$ is an invertible element of the monoid of twisted distributions, with inverse $\delta^{\varphi_{\partial\widetilde{\nu}}}$. Multiplying $p\in \sDist_\alpha(X,\nu)$ by $\delta^{\varphi_{-\partial\widetilde{\nu}}}$ yields 
	\[
	p\cdot \delta^{\varphi_{-\partial\widetilde{\nu}}}\in \sDist_{\alpha-\partial\widetilde{\nu}}(X,0)
	\]
	and since $\delta^{\varphi_{-\partial\widetilde{\nu}}}$ is invertible, this map is an isomorphism, completing the proof.
\end{proof}

The isomorphism in Corollary \ref{cor:twist and collapse cocycle} restricts to a bijection of deterministic distributions. Writing $\Sec_\alpha(X,\nu)$ for sections that restrict to $\varphi_\nu$ on $Z$, we have a commutative diagram
$$
\begin{tikzcd}
D_R( \Sec_\alpha(X,\nu) ) \arrow[r,"\Theta"] \arrow[d,"\cong"]  &  \sDist_\alpha(X,\nu) \arrow[d,"\cong"] \\
D_R( \Sec_\beta(X/Z) ) \arrow[r,"\Theta"]  &  \sDist_\beta(X/Z)  
\end{tikzcd}
$$ 
This diagram can be used to analyze contextuality by collapsing a subspace at the cost of twisting the distributions. Some key examples in the case of $X$ is a $2$-dimensional simplicial complex are worked out in \cite{okay2023rank}.

\subsection{The Mermin {polytope}}
\label{sec:mermin polytope}

In this section we study a small example in full detail. This example is the paradigmatic example of contextual quantum distributions first introduced by Mermin \cite{mermin1993hidden}. Later in \cite{Coho} this construction is turned into a topological one, and finally in \cite{okay2022mermin} the distributions on this scenario are introduced and their contextual behavior are studied.

We let $M$ denote the measurement space of the Mermin scenario, that is, the simplicial set 
\begin{center}
	\begin{tikzpicture}[scale=1.5, decoration={
			markings,
			mark=at position 0.5 with {\arrow{>}}}]
		\path[fill=blue,opacity=0.3] (0,0) rectangle (4,4);
		\draw[postaction={decorate}] (0,0) to node[label=below:$f$]{} (4,0);
		\draw[postaction={decorate}] (4,4) to node[label=right:$g$]{} (4,0);
		\draw[postaction={decorate}] (0,4) to node[label=above:$f$]{} (4,4);
		\draw[postaction={decorate}] (0,4) to node[label=left:$g$]{} (0,0);
		\draw[postaction={decorate}]  (1.5,2.5) to node[label=above:$u$]{} (0,4);
		\draw[postaction={decorate}] (1.5,2.5) to node[label=left:$b$]{} (0,0);
		\draw[postaction={decorate}] (0,0) to node[label=below:$c$]{} (2.5,1.5);
		\draw[postaction={decorate}] (4,0) to node[label=right:$w$]{} (2.5,1.5);
		\draw[postaction={decorate}] (4,4) to node[label=below:$a$]{} (2.5,1.5);
		\draw[postaction={decorate}]  (1.5,2.5) to node[label=above:$h$]{} (4,4);
		\draw[postaction={decorate}]  (1.5,2.5) to node[label=right:$v$]{} (2.5,1.5);
		\foreach \x/\y in {0/0,4/0,4/4,0/4,1.5/2.5,2.5/1.5}{
			\path[fill=black] (\x,\y) circle (0.03);
		};
		\path (1.5,3.5) node {$\sigma$};
		\path (1.5,1.5) node {$\eta$};
		\path (0.5,2.5) node {$\gamma$};
		\path (2.5,0.5) node {$\tau$};
		\path (3.5,1.5) node {$\mu$};
		\path (2.5,2.5) node {$\xi$};
	\end{tikzpicture}
\end{center}
whose realization is a torus. Let $\beta:M_2\to \ZZ_2$ be the 2-cocycle which assigns $1$ to $\sigma$, $\eta$, and $\mu$, and $0$ otherwise. 
Let $\pi_\beta:E_\beta \to M$ be the bundle represented by $\beta$. 

\begin{rem}
Consider the $2\times 2$ Pauli matrices
$$
X =\left( \begin{matrix}
0 & 1 \\
1 & 0
\end{matrix} \right)\;\;\;
Y =\left( \begin{matrix}
0 & -i \\
i & 0
\end{matrix} \right)\;\;\;
Z =\left( \begin{matrix}
1 & 0 \\
0 & -1
\end{matrix} \right).
$$
We can define a simplicial set map
$$
 \begin{tikzcd}
	  \phi:&[-3em] M  \arrow[r] & \bar N(\ZZ_2,U((\CC^2)^{\otimes 2}))  
\end{tikzcd}
$$
by sending the $1$-simplices to the equivalence class (under $\pm\one$) of the following matrices 
$$
f\mapsto Z\otimes X,\;\;\; u\mapsto X\otimes Y,\;\;\; g\mapsto Y\otimes X,\;\;\; w\mapsto X\otimes Z.
$$
Then $\pi_\beta$ is precisely the pull-back of $N(\ZZ_2,U((\CC^2)^{\otimes 2}))\to \bar N(\ZZ_2,U((\CC^2)^{\otimes 2}))$ along $\phi$.
\end{rem}

We borrow a definition from \cite{okay2022mermin}, presented using the language of simplicial sets.

\begin{defn}
	A simplicial subset $U\subset M$ is called \emph{closed} if, whenever 
	$U$ contains two edges of a 2-simplex $\sigma$, then $U$ contains the 2-simplex $\sigma$. 	A simplicial subset $U\subset M$ is called \emph{non-contextual} if there is a 1-cochain $s:U_1\to \ZZ_{2}$ such that $\partial(s)=\beta|_{U}$.
\end{defn} 



\begin{lem}
	A simplicial subset $U\subset M$ is non-contextual if and only if it is $\pi_\beta$-trivializing.  
\end{lem}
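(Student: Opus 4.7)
The plan is to string together two results established earlier in the paper: Proposition \ref{prop:triv_bundle=has_section}, which says that a twisted bundle is trivial if and only if it admits a section, and Proposition \ref{prop:sections_and_trivializing_cochains}, which identifies sections of an $\eta$-twisted bundle (with $\eta$ corresponding to a normalized $2$-cocycle $\gamma$) with normalized $1$-cochains $\alpha$ satisfying $\partial\alpha=\gamma$. So the strategy is to unfold the trivialization condition into a statement about the existence of a $1$-cochain, and then reconcile this with the definition of non-contextuality.

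First I would observe that by the naturality of the cocycle-to-bundle correspondence (Lemma \ref{lem:twist_fxn_norm_cocycle}), the pullback of $\pi_\beta\colon E_\beta\to M$ along the inclusion $i\colon U\hookrightarrow M$ is the principal $N\ZZ_2$-bundle over $U$ classified by the restricted normalized $2$-cocycle $\beta|_U\colon U_2\to\ZZ_2$. Thus $U$ is $\pi_\beta$-trivializing precisely when this restricted bundle is trivial. Applying Proposition \ref{prop:triv_bundle=has_section} to $K=N\ZZ_2$ and the twisting function on $U$ corresponding to $\beta|_U$, triviality is equivalent to the existence of a section, and by Proposition \ref{prop:sections_and_trivializing_cochains} such sections are in bijection with normalized $1$-cochains $s\colon U_1\to\ZZ_2$ satisfying $\partial s=\beta|_U$.

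The definition of non-contextuality imposes no normalization hypothesis on $s$, so it remains to check that the two notions agree. A normalized cochain is certainly a cochain. Conversely, given any $s\colon U_1\to\ZZ_2$ with $\partial s=\beta|_U$, I would evaluate the cocycle identity on the doubly degenerate $2$-simplex $s_0 s_0 v$ for $v\in U_0$; its three faces all equal $s_0 v$, so the alternating-sum formula reduces to $s(s_0 v)=\beta|_U(s_0 s_0 v)=0$, the right-hand side vanishing since $\beta$ is normalized. Hence $s$ vanishes on every degenerate $1$-simplex, and so is automatically normalized. Combining the chain of equivalences, $U$ is $\pi_\beta$-trivializing if and only if $U$ is non-contextual.

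There is no genuine obstacle here: all the conceptual work is done by the cited propositions, and the only thing to watch is the compatibility between the paper's unadorned $1$-cochain in the definition of non-contextual and the normalized $1$-cochains appearing in Proposition \ref{prop:sections_and_trivializing_cochains}, which a short simplicial-identity computation shows to be automatic.
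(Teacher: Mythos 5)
Your proof is correct and reaches the same conclusion, but it unwinds the cohomological argument more explicitly than the paper does. The paper's proof is a two-sentence appeal to the bijection $H^2(U,\ZZ_2)\cong\on{Prin}_{N\ZZ_2}(U)$: non-contextual means $[\beta|_U]=0$, which means the restricted bundle is the trivial class, which means it is trivial. You instead descend to the level of sections via Proposition \ref{prop:triv_bundle=has_section} (trivial $\Leftrightarrow$ section exists) and Proposition \ref{prop:sections_and_trivializing_cochains} (sections $\leftrightarrow$ normalized trivializing $1$-cochains). The trade-off is that your route forces you to confront a genuine subtlety the paper silently elides: the definition of non-contextuality quantifies over arbitrary $1$-cochains, whereas Proposition \ref{prop:sections_and_trivializing_cochains} requires normalized ones. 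Your computation on $s_0s_0v$ showing that any $1$-cochain $s$ with $\partial s=\beta|_U$ automatically vanishes on degenerate $1$-simplices (because $\beta$ is normalized, and the three faces of $s_0s_0v$ all equal $s_0v$) cleanly closes this gap, and is the kind of check that makes the equivalence between "coboundary of a cochain" and "coboundary of a normalized cochain" explicit rather than inherited from the general acyclicity of the degenerate subcomplex. Both proofs are valid; yours is longer but more self-contained, and it surfaces a compatibility issue that a careful reader of the paper's proof would otherwise have to resolve on their own.
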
 

\begin{proof}
	This is simply the bijection between second cohomology classes and isomorphism classes of bundles. That is, a non-contextual subset $U\subset N$ is precisely one on which the cocycle $\beta|_U$ vanishes in cohomology. This is equivalent to the restricted bundle $U\times_M E_\beta$ being represented by the zero class in cohomology, and thus being trivial. 
\end{proof}



Consider the simplicial set $\widetilde{M}$ whose $n$-simplices consist of an $n$-simplex $\sigma$ in $M$ and a map 
\[
f:\on{sk}_1(\sigma)\to N(\ZZ_{2}) 
\]
where $\on{sk}_1(\sigma)$ denotes the 1-skeleton of $\sigma$.

\begin{lem}
	The simplicial set $\widetilde{M}$ is relatively 1-coskeletal over $M$. 
\end{lem}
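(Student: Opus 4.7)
The plan is to verify the universal property defining relative $1$-coskeletality directly from the construction of $\widetilde M$. Concretely, invoking Lemma \ref{lem:rel_cosk_unique_ext}, I need to show that for each $n \geq 2$ and each $\sigma \in M_n$, the restriction map from $n$-simplices of $\widetilde M$ over $\sigma$ to lifts of $\sigma|_{\on{sk}_1(\Delta^n)}$ along $\widetilde M \to M$ is a bijection.

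First I would unpack the left-hand side: by definition of $\widetilde M$, an $n$-simplex over $\sigma$ is a simplicial map $f: \on{sk}_1(\Delta^n) \to N(\ZZ_2)$. Since $N(\ZZ_2)_0$ is a singleton and $\on{sk}_1(\Delta^n)$ has no non-degenerate simplices of dimension $\geq 2$, the data of $f$ is equivalent to an arbitrary assignment of an element of $\ZZ_2$ to each non-degenerate edge of $\Delta^n$, with no further relations (no cocycle condition is imposed). Second, I would unpack the right-hand side: a lift of $\sigma|_{\on{sk}_1(\Delta^n)}$ to $\widetilde M$ consists of a $0$-simplex of $\widetilde M$ over each vertex of $\sigma$ (automatic and unique, since $N(\ZZ_2)_0 = \{\ast\}$) together with a $1$-simplex of $\widetilde M$ over each edge of $\sigma$ (an element of $\ZZ_2$); the required compatibility with face and degeneracy maps collapses to a triviality because the vertex lifts are all unique.

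Both sides are thus canonically in bijection with $\ZZ_2$-valued labelings of the edges of $\Delta^n$, and the restriction map sends each $f$ to its tuple of edge-values, hence is a bijection. This is precisely the statement that $\widetilde M$ is relatively $1$-coskeletal over $M$. The only real subtlety is correctly interpreting a ``lift of the $1$-skeleton'' as a compatible family of low-dimensional simplices of $\widetilde M$ and checking that the compatibility conditions become vacuous in this setting, which is routine once one observes that vertex lifts are forced to be unique. No genuine obstacle arises.
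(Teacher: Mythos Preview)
Your argument is essentially the same as the paper's: both reduce to the observation that an $n$-simplex of $\widetilde M$ over $\sigma\in M_n$ is, by construction, exactly the datum of a $\ZZ_2$-labeling of the edges of $\Delta^n$, so it is determined by (and determines) its $1$-skeleton together with the underlying simplex in $M$.

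Two small points. First, your citation of Lemma~\ref{lem:rel_cosk_unique_ext} is misplaced: that lemma says maps \emph{into} a relatively $k$-coskeletal object extend uniquely from the $k$-truncation; it does not give the characterization you use. What you are actually invoking is the adjunction description in the appendix (the proposition showing that $\pi$ is relatively $k$-coskeletal iff the unit $E\to \on{cosk}_k^X(\on{tr}_k^X(E))$ is an isomorphism), which identifies $n$-simplices of the relative coskeleton with lifts of the $1$-skeleton. Second, note that the literal definition asks for unique lifts against $\partial\Delta^n\hookrightarrow\Delta^n$, not against $\on{sk}_1(\Delta^n)\hookrightarrow\Delta^n$; these coincide only for $n=2$. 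The two formulations are equivalent (and the equivalence is exactly the content of the adjunction characterization), but you should either cite that equivalence or argue directly: for $n\geq 3$, a map $\partial\Delta^n\to\widetilde M$ over $\sigma$ restricts to edge-labels that agree on overlaps and cover every edge of $\Delta^n$, hence glue to a unique element of $\widetilde M_n$ over $\sigma$, whose boundary is the given one by uniqueness applied to each face. With that adjustment the proof is complete.
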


\begin{proof}
	By construction, a simplex in $\widetilde{M}$ is uniquely determined by its 1-skeleton and the underlying simplex in $M$. The 1-coskeletal lifting problem follows immediately. 
\end{proof}

We define a map on 2-truncations 
\[
\begin{tikzcd}
	\on{tr}_2E_\beta\arrow[dr,"\pi_\beta"'] \arrow[rr,"\xi"] && \on{tr}_2\widetilde{M}\arrow[dl]\\
	& \on{tr}_2(M) & 
\end{tikzcd}
\] 
as follows. 

On both sides, the $0$-simplices are precisely those of $M$, and so we define the map on $0$-simplices to be the identity. On $1$-simplices, we send a pair $(\sigma,g)$ to the pair $(\sigma,g)$. On 2-simplicies, given $(\sigma,(g_{0,1},g_{1,2}))$, we send it to the simplex $\sigma$ with the map on the $1$-skeleton given by
\[
\begin{tikzcd}
	& \ast\arrow[dr,"g_{1,2}+\beta(\sigma)"] & \\
	\ast\arrow[ur,"g_{0,1}"]\arrow[rr,"g_{0,1}+g_{1,2}"'] & & \ast.
\end{tikzcd}
\]

\begin{lem}
	The map $\xi$ is a morphism of $2$-truncated simplicial sets. 
\end{lem}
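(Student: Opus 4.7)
The proof proceeds by direct verification of the simplicial identities on the $2$-truncation. Since both source and target are $2$-truncated, I must check compatibility of $\xi$ with the three face maps $d_0,d_1,d_2$ on $2$-simplices, the two face maps $d_0,d_1$ on $1$-simplices, and the degeneracies $s_0$ on $0$-simplices and $s_0,s_1$ on $1$-simplices.

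The plan is first to record explicit formulas on both sides. Using Lemma \ref{lem:twist_fxn_norm_cocycle}, an $n$-simplex of $E_\beta=N(\ZZ_2)\times_\eta M$ is a pair $(g,\sigma)$ with $g\in N(\ZZ_2)_n=\ZZ_2^n$; the only face map twisted by $\beta$ is $d_0$, which in dimension $2$ adds $\eta_2(\sigma)=\beta(\sigma)$. On the other side, $\widetilde{M}$ is relatively $1$-coskeletal over $M$, so a $2$-simplex $(\sigma,f)$ is uniquely determined by its underlying simplex $\sigma\in M_2$ and the three edge labels $f(d_0\sigma),f(d_1\sigma),f(d_2\sigma)\in\ZZ_2$, with $d_i(\sigma,f)=(d_i\sigma,f(d_i\sigma))$ on $2$-simplices. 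Degeneracies in $\widetilde{M}$ extend an edge-label by the simplicial identities together with the convention that degenerate edges carry the label $0$.

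Next, I would verify the face-map compatibility. The definition of $\xi$ sends $((g_{0,1},g_{1,2}),\sigma)$ to the $2$-simplex of $\widetilde{M}$ whose labels on $(d_2\sigma,d_0\sigma,d_1\sigma)$ are $(g_{0,1},\,g_{1,2}+\beta(\sigma),\,g_{0,1}+g_{1,2})$. The three face maps on the source are $d_0=(g_{1,2}+\beta(\sigma),d_0\sigma)$, $d_1=(g_{0,1}+g_{1,2},d_1\sigma)$, $d_2=(g_{0,1},d_2\sigma)$. Since $\xi$ on $1$-simplices sends $(g,\tau)\mapsto(\tau,g)$, each of the three equalities $\xi\circ d_i=d_i\circ\xi$ holds on the nose. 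Face-map compatibility in dimension $1$ is immediate because both sides reduce to $M_0$ and $\xi_0$ is the identity.

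Finally, I would check the degeneracies. For $s_0,s_1$ on a $1$-simplex $(g,\sigma)\in(E_\beta)_1$, one computes $s_0(g,\sigma)=((0,g),s_0\sigma)$ and $s_1(g,\sigma)=((g,0),s_1\sigma)$; since $\beta$ is normalized, $\beta(s_i\sigma)=0$, so $\xi$ produces the $2$-simplices of $\widetilde{M}$ with edge labels $(0,g,g)$ and $(g,0,g)$ respectively. These agree with $s_0(\sigma,g)$ and $s_1(\sigma,g)$ in $\widetilde{M}$, whose edge labels are forced by the identities $d_0s_0=d_1s_0=\mathrm{id}$, $d_2s_0=s_0d_1$ (and analogously for $s_1$) together with normalization. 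The check for $s_0:\widetilde{M}_0\to\widetilde{M}_1$ is trivial, as both sides send $v$ to $(s_0v,0)$.

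The main obstacle is purely bookkeeping: one must keep careful track of the twist $\beta(\sigma)$ in the $d_0$ face in dimension $2$, and verify that the normalization $\beta\circ s_i=0$ is exactly what is required to match the degeneracy conventions on $\widetilde{M}$. No conceptual difficulty arises, but the verification must enumerate all structure maps systematically.
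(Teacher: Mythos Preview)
Your proposal is correct and follows essentially the same approach as the paper: direct verification of the simplicial identities, with the key observations that the twist $\beta(\sigma)$ enters only via $d_0$ in dimension $2$ and that normalization of $\beta$ handles the degeneracies. The paper's proof is terser---it writes out only the $d_0$ computation on $2$-simplices and declares the remaining checks immediate---whereas you spell out all face and degeneracy compatibilities; but the substance is identical.
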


\begin{proof}
	We check the simplicial identities. For the face and degeneracy maps between $0$-simplices and $1$-simplices, it is immediate that $\xi$ commutes with the maps. 
	
	Let $(\sigma,\{g_{0,1},g_{1,2}\})$ be a $2$-simplex of $E_\beta$. Then by definition  
	\[
	d_0(\xi(\sigma,\{g_{0,1},g_{1,2}\}))=(d_0(\sigma),g_{1,2}+\beta(\sigma))=\xi(d_0(\sigma),g_{1,2}+\beta(\sigma))=\xi(d_0(\sigma,\{g_{0,1},g_{1,2}\})).
	\] 
	The relations with $d_1$, $d_2$, $s_0$, and $s_1$ are immediate (since $\beta$ is normalized, the {$\beta(\sigma)$} in the definition of the map $\xi$ vanishes for degenerate $\sigma$).
\end{proof}

\begin{lem}
	The map $\xi$ is injective, and its image consists of all $0$- and $1$-simplices, together with those 2-simplices $(\sigma, g_{0,1},g_{1,2},g_{0,2})$ satisfying 
	\[
	g_{0,1}+g_{1,2}+g_{0,2}=\beta(\sigma).
	\]
\end{lem}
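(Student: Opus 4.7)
The plan is as follows. I would first make explicit the description of the $2$-simplices on both sides. Since $E_\beta = N(\ZZ_2)\times_\eta M$ with $\eta$ corresponding to $\beta$ via Lemma \ref{lem:twist_fxn_norm_cocycle}, a $2$-simplex of $E_\beta$ is a pair $(\sigma,(g_{0,1},g_{1,2}))$ with $\sigma\in M_2$ and $g_{0,1},g_{1,2}\in \ZZ_2$. A $2$-simplex of $\widetilde{M}$ is a pair $(\sigma,\{g_{0,1},g_{1,2},g_{0,2}\})$ where the $g_{i,j}$ are the values of the $1$-skeleton map on $d_2(\sigma)$, $d_0(\sigma)$, and $d_1(\sigma)$ respectively. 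By definition of $\xi$, the image of $(\sigma,(g_{0,1},g_{1,2}))$ is the simplex with $1$-skeleton data
\[
(d_2\sigma,d_0\sigma,d_1\sigma) \longmapsto (g_{0,1},\; g_{1,2}+\beta(\sigma),\; g_{0,1}+g_{1,2}).
\]

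For injectivity, I would dispose of degrees $0$ and $1$ immediately, since $\xi$ is the identity on $0$-simplices and is the identity $(\sigma,g)\mapsto (\sigma,g)$ on $1$-simplices. In degree $2$, from the image data $(\sigma,\{h_{0,1},h_{1,2},h_{0,2}\})$ one recovers the preimage uniquely by reading off $g_{0,1}=h_{0,1}$ (the value on $d_2\sigma$) and then $g_{1,2}=h_{1,2}+\beta(\sigma)$ (the value on $d_0\sigma$, using that $\beta(\sigma)$ is determined by $\sigma$). This gives injectivity.

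For the image characterization, I would simply match the three edge-values. A $2$-simplex $(\sigma,\{g_{0,1},g_{1,2},g_{0,2}\})$ of $\widetilde{M}$ lies in the image of $\xi$ if and only if there exist $a,b\in \ZZ_2$ with
\[
a=g_{0,1},\qquad b+\beta(\sigma)=g_{1,2},\qquad a+b=g_{0,2}.
\]
The first two equations uniquely determine $a$ and $b$, and substituting into the third yields the consistency condition $g_{0,1}+g_{1,2}+\beta(\sigma)=g_{0,2}$, which in $\ZZ_2$ is equivalent to
\[
g_{0,1}+g_{1,2}+g_{0,2}=\beta(\sigma),
\]
as claimed. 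Since there is no subtlety beyond this algebraic matching, there is no real obstacle; the only care needed is in unwinding the definition of $\xi$ against the precise face-labeling convention so that the $\beta(\sigma)$ correction is applied to the correct edge, namely the $d_0$-face coming from the twisted product structure.
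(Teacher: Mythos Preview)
Your proof is correct and follows essentially the same approach as the paper's: the paper dismisses injectivity as ``immediate from the definitions'' and for the image characterization exhibits the explicit preimage $(\sigma,\{g_{0,1},g_{1,2}-\beta(\sigma)\})$, which is exactly your $(a,b)$ after noting that $-\beta(\sigma)=+\beta(\sigma)$ in $\ZZ_2$. Your version is simply a more detailed unwinding of the same argument.
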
 

\begin{proof}
	Injectivity is immediate from the definitions. Since every element of $\ZZ_{2}$ is 2-torsion, it is similarly immediate that the 2-simplices in the image of $\xi$ satisfy the given condition. 
	
	On the other hand, let $(\sigma, g_{0,1},g_{1,2},g_{0,2})$ be a 2-simplex of $\widetilde{M}$ satisfying 
	\[
	g_{0,1}+g_{1,2}+g_{0,2}=\beta(\sigma).
	\]
	Then $(\sigma,\{g_{0,1},g_{1,2}-\beta(\sigma)\})$ is a 2-simplex of $E_\beta$ which maps to the specified 2-simplex of $\widetilde{M}$. 
\end{proof}

We now denote by $\overline{M}$ the unique relative 2-coskeletal extension over $M$ of the image of $\xi$. 

\begin{cor}
	The map $\xi$ induces an isomorphism over $M$ between $E_\beta$ and $\overline{M}$. 
\end{cor}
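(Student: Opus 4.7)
The plan is to reduce the corollary to a uniqueness principle for relative $2$-coskeletal extensions, applied to the bundle $\pi_\beta$ on the one hand and to the definition of $\overline{M}$ on the other. The previous lemma already establishes that $\on{tr}_2(\xi)$ is injective with image precisely the subcomplex of $\on{tr}_2(\widetilde{M})$ on which the cocycle condition $g_{0,1}+g_{1,2}+g_{0,2}=\beta(\sigma)$ holds. By construction, $\overline{M}$ is the unique relative $2$-coskeletal extension of this image over $M$, so $\on{tr}_2(\overline{M})$ is canonically identified with $\on{im}(\xi)$, giving $\on{tr}_2(\xi):\on{tr}_2 E_\beta \xrightarrow{\cong} \on{tr}_2\overline{M}$ as a map of $2$-truncated simplicial sets over $\on{tr}_2 M$.

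To upgrade this isomorphism of $2$-truncations to an isomorphism of simplicial sets over $M$, I would invoke Lemma \ref{lem:N(H)-bundle_rel_2-cosk} in the appendix, which asserts that the total space of any $N(H)$-bundle is relatively $2$-coskeletal over its base. Applying this lemma with $H=\ZZ_2$ and bundle $\pi_\beta$ shows $E_\beta$ is relatively $2$-coskeletal over $M$. By the construction of $\overline{M}$ as a relative $2$-coskeletal extension, $\overline{M}$ is also relatively $2$-coskeletal over $M$.

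Now I would appeal to Lemma \ref{lem:rel_cosk_unique_ext}, which says that a map between relative $2$-coskeletal simplicial sets over $M$ is uniquely determined by its restriction to $2$-truncations, and that any such truncated map extends uniquely. Applied to our situation, the isomorphism $\on{tr}_2(\xi)$ lifts uniquely to a simplicial map $E_\beta \to \overline{M}$ over $M$, and the inverse isomorphism $\on{tr}_2(\overline{M})\xrightarrow{\cong}\on{tr}_2 E_\beta$ likewise lifts to a simplicial map in the opposite direction; by uniqueness of lifts, these lifts are mutually inverse.

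No genuine obstacle presents itself here; the essential content is bookkeeping, and the only substantive inputs are the two appendix lemmas on relative coskeleta. Concretely, the entire argument can be presented in one or two lines: \emph{By Lemma \ref{lem:N(H)-bundle_rel_2-cosk}, $E_\beta$ is relatively $2$-coskeletal over $M$; by construction, so is $\overline{M}$; and by the previous lemma, $\xi$ induces an isomorphism on $2$-truncations over $\on{tr}_2 M$; the conclusion then follows from Lemma \ref{lem:rel_cosk_unique_ext}.}
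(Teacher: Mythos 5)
Your argument is correct and follows exactly the route the paper's scaffolding is set up for: the preceding lemma identifies $\on{tr}_2(\xi)$ as an isomorphism onto $\on{tr}_2(\overline{M})$, Lemma~\ref{lem:N(H)-bundle_rel_2-cosk} shows $E_\beta$ is relatively $2$-coskeletal over $M$, $\overline{M}$ is relatively $2$-coskeletal by construction, and the unique-extension result (Lemma~\ref{lem:rel_cosk_unique_ext}, or equivalently Corollary~\ref{cor:relcosk_iso_tr_iso}) promotes the truncated isomorphism to an isomorphism over $M$. The only minor quibble is that you describe Lemma~\ref{lem:rel_cosk_unique_ext} as requiring both sides to be relatively coskeletal, when it only needs the target; but since both hold here this does not affect the argument.
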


The original construction in \cite{mermin1993hidden} uses a different representation. Let us write $\mM$ for the set of non-degenerate $1$-simplices and $\cC$ for the set of non-degenerate $2$-simplices of $M$. Here the elements of the first set represent measurements and those of the second set are referred to as {\it contexts}. The same information contained in the torus picture above can be depicted as a hypergraph:
\begin{center}
	\begin{tikzpicture}
		\path  (0,0) node {$f$}; 
		\path  (2,0) node {$c$};
		\path  (4,0) node {$w$};  
		
		\path  (0,2) node {$h$}; 
		\path  (2,2) node {$v$};
		\path  (4,2) node {$a$}; 
		
		\path  (0,4) node {$u$}; 
		\path  (2,4) node {$b$};
		\path  (4,4) node {$g$}; 
		
		\draw[blue,rounded corners=15pt] (-0.5,3.5)  rectangle (4.5,4.5);   
		\draw[blue,rounded corners=15pt] (-0.5,1.5)  rectangle (4.5,2.5); 
		\draw[blue,rounded corners=15pt] (-0.5,-0.5)  rectangle (4.5,0.5);
		\path[blue] (4.5,0) node[label=right:$\tau$] {};
		\path[blue] (4.5,2) node[label=right:$\xi$] {};
		\path[blue] (4.5,4) node[label=right:$\gamma$] {};
		
		\draw[red,rounded corners=15pt] (3.5,-0.5)  rectangle (4.5,4.5);   
		\draw[red,rounded corners=15pt] (1.5,-0.5)  rectangle (2.5,4.5); 
		\draw[red,rounded corners=15pt] (-0.5,-0.5)  rectangle (0.5,4.5);
		\path[red] (0,4.5) node[label=above:$\sigma$] {};
		\path[red] (2,4.5) node[label=above:$\eta$] {};
		\path[red] (4,4.5) node[label=above:$\mu$] {};
	\end{tikzpicture}
\end{center} 

\noindent{}
Here, the red circles correspond precisely to those triangles on which the cocycle $\beta$ does not vanish.

\begin{defn}
	For $C\in \mathcal{C}$ a context in the classical Mermin square scenario $(\mathcal{M},\mathcal{C})$, we define
	\[
	O_\beta(C):=\left\lbrace s:C\to \ZZ_{2}\; \middle| \;\sum_{m\in C}s(m)=\beta(C) \right\rbrace\subset {\ZZ_{2}^C}
	\] 
	where we write $\ZZ_{2}^C$ for the set of $C$-indexed tuples in $\ZZ_2$.	
	Define the \emph{non-signaling polytope} to be 
	\[
	\on{NS}:=\left\lbrace \{p_C:\ZZ_{2}^C \to [0,1]\}_{C\in \mathcal{C}} \; \middle| \; \sum_{m\in C} p(m)=1, \; p_{C}|_{C\cap C^\prime}=p_{C^\prime}|_{C\cap C^\prime}  \right\rbrace .
	\]
	Define the $\beta$-nonsignaling polytope to be the subset $\on{NS}_\beta\subset \on{NS}$, on the $\{p_C\}_{C\in \mathcal{C}}$ such that 
	\[
	\on{supp}(p_C)\subset O_\beta(C)
	\]
	for all $C\in \mathcal{C}$. 
\end{defn}

\begin{lem}
	There is an isomorphism 
	\[
	\sDist(\overline{M}\to M)\cong \on{NS}_\beta. 
	\]	
\end{lem}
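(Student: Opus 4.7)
The plan is to deploy the polytope description from Corollary \ref{cor:twiste dist is a polytope} together with the relative $2$-coskeletality of $\overline{M}$ over $M$ to reduce the computation of $\sDist(\overline{M}\to M)$ to a limit indexed only over the non-degenerate $0$-, $1$-, and $2$-simplices of $M$. Concretely, I would first observe that by Lemma \ref{lem:fin_sset_cat_simps_cofinal} the limit $\lim_{(\Delta_{/M})^\op} D_R\circ F_{\pi}$ computing $\sDist(\overline{M}\to M)$ can be taken over $((\Delta_{/M})^{\leq 2})^\op$ since $\overline{M}$ is relatively $2$-coskeletal; any consistent family of fiber distributions up to dimension $2$ extends uniquely.

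Next, I would identify the fibers. Over a $0$-simplex the fiber is a point, so $D_R$ of it is forced to be the unique distribution. Over a non-degenerate $1$-simplex $m\in\mathcal{M}$ the fiber is $\ZZ_2$, giving an arbitrary distribution $q_m$ on $\ZZ_2$. Over a non-degenerate $2$-simplex $C\in\mathcal{C}$, the previous lemma identified the fiber as the set of tuples $(g_{0,1},g_{1,2},g_{0,2})$ with $g_{0,1}+g_{1,2}+g_{0,2}=\beta(C)$, which under the tautological bijection with $\ZZ_2^C$ is exactly $O_\beta(C)$; so a fiber distribution is a $p_C\in D_R(\ZZ_2^C)$ with $\on{supp}(p_C)\subseteq O_\beta(C)$. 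I would then define the putative isomorphism $\Phi:\sDist(\overline{M}\to M)\to \on{NS}_\beta$ by $p\mapsto\{p_C\}_{C\in\mathcal{C}}$.

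The content of the bijection is then the compatibility analysis in the limit. Each face map $d_i:C\to m$ in $(\Delta_{/M})^\op$ imposes that the $D_R$-pushforward of $p_C$ along the projection $\ZZ_2^C\to \ZZ_2$ onto the coordinate $m$ equals $q_m$. Given two contexts $C,C'\in\mathcal{C}$ sharing a measurement $m$, both push forward to the same $q_m$, which is exactly the non-signaling condition $p_C|_{C\cap C'}=p_{C'}|_{C\cap C'}$; conversely, any non-signaling family determines the $q_m$ unambiguously as a common marginal. Degenerate simplices contribute no further constraints because $\beta$ is normalized, so $s_i$-pullbacks of $p_C$ automatically land in degenerate fibers. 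This establishes that $\Phi$ is a bijection, and it is manifestly affine (convex).

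The main obstacle, though largely bookkeeping, is keeping track of the identification of the fiber of $\overline{M}$ over a $2$-simplex $\sigma\in M_2$ with $O_\beta(\sigma)\subseteq \ZZ_2^\sigma$ under the projection away from the ``$g_{0,2}$'' coordinate, and checking that this identification intertwines the simplicial face maps of $\overline{M}$ with the coordinate-projection of $\ZZ_2^\sigma$ onto $\ZZ_2^{d_i\sigma}$. Once this compatibility is in hand — essentially a direct check using the explicit formula for $\xi$ and the definition of $O_\beta$ — the rest of the proof is a direct translation between the limit description of $\sDist$ and the defining conditions of $\on{NS}_\beta$.
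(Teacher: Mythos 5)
Your approach is correct in substance and runs parallel to the paper's: identify the fiber over a $2$-simplex $C$ with $O_\beta(C)$, observe that fiber distributions are exactly distributions supported on $O_\beta(C)$, match the limit/compatibility conditions to the non-signaling conditions, and reduce everything to the $2$-truncation. The paper does the same thing more tersely, working directly with simplicial set maps $\on{tr}_2(M)\to \on{tr}_2(D(\overline{M}))$ over $\on{tr}_2(D(M))$ rather than routing through the explicit limit-over-$(\Delta_{/M})^{\leq 2}$ machinery, but the content is the same.

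One slip in the justification: you invoke Lemma \ref{lem:fin_sset_cat_simps_cofinal} to reduce the limit to $((\Delta_{/M})^{\leq 2})^\op$, but you attribute this reduction to the relative $2$-coskeletality of $\overline{M}$ over $M$. The hypothesis of that lemma is that the \emph{base} $M$ has dimension $2$, i.e., that $M$ is $2$-skeletal --- and this is exactly what the paper's proof cites. Relative $2$-coskeletality of $\overline{M}$ says that lifting problems from $\partial\Delta^n$ to $\Delta^n$ over $M$ have unique solutions for $n>2$; it controls the \emph{fibers} in higher dimension but does not by itself give cofinality of the inclusion of the truncated category of simplices of $M$, which is what the limit reduction actually needs. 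Since $M$ is indeed $2$-dimensional here, your conclusion stands; just change the cited reason from ``$\overline{M}$ is relatively $2$-coskeletal'' to ``$M$ is $2$-dimensional.'' (A smaller point: the reason degenerate simplices impose no extra constraints is functoriality of $D_R$ --- the value at $s_i\sigma$ and the relations $d_js_i=\ldots$ are determined automatically --- not normalization of $\beta$; but this is harmless.)
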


\begin{proof}
	A distribution $p_C$ supported on $O_\beta$ is precisely a distribution on the set of 2-simplices of $\overline{M}$ lying over the simplex of $M$ corresponding to $C$. The compatibility relations in $\on{NS}$ are then precisely those required to make the collection a simplicial set map 
	\[
	\on{tr}_2(M)\to \on{tr}_2(D(\overline{M}))
	\]
	over $\on{tr}_2(D(M))$. Since $M$ is 2-skeletal, this is equivalent to a map 
	\[
	M\to D(\overline{M})
	\]
	over $D(M)$, i.e., a simplicial distribution on $\overline{M}\to M$. 
\end{proof}

\begin{cor}
	There is an isomorphism 
	\[
	\sDist(E_\beta)\cong \on{NS}_\beta. 
	\]
\end{cor}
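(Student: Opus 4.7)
The plan is to chain together the two immediately preceding results. By the corollary just proved, the map $\xi$ is an isomorphism $E_\beta \xrightarrow{\cong} \overline{M}$ of simplicial sets over $M$, i.e., an isomorphism of the bundles $\pi_\beta$ and $\overline{M}\to M$ in the category of simplicial sets over $M$. The preceding lemma already gives a convex bijection $\sDist(\overline{M}\to M)\cong \on{NS}_\beta$, so it suffices to produce a natural bijection $\sDist(E_\beta)\cong \sDist(\overline{M}\to M)$.

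First I would invoke the functoriality of simplicial distributions in bundle scenarios established in Proposition \ref{prop:functoriality_bundle_scenarios}: any isomorphism of bundles over a common base induces a bijection on the associated sets of simplicial distributions, by pushforward along the top map of the corresponding pullback square (which in this case is literally $\xi$, since $\xi$ covers the identity of $M$). Applying this to $\xi:E_\beta\to \overline{M}$ gives the desired bijection $\xi_\ast:\sDist(E_\beta)\xrightarrow{\cong}\sDist(\overline{M}\to M)$.

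Composing $\xi_\ast$ with the bijection of the preceding lemma then yields the corollary. No new computation is required; one need only note that the resulting bijection sends a twisted distribution $p$ on $\pi_\beta$ to the family $\{p_C\}_{C\in\mathcal{C}}$ obtained by evaluating $p$ on each non-degenerate $2$-simplex of $M$ and transporting along $\xi$ on the fibers. The main (minor) obstacle is merely bookkeeping: checking that the fiberwise identification of the $2$-simplices of $E_\beta$ over a context $C$ with the support set $O_\beta(C)\subset \ZZ_2^C$ is precisely the one induced by $\xi$, which follows from the explicit description of the image of $\xi$ as those $2$-simplices $(\sigma,g_{0,1},g_{1,2},g_{0,2})$ with $g_{0,1}+g_{1,2}+g_{0,2}=\beta(\sigma)$.
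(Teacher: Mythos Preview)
Your proposal is correct and matches the paper's intended (implicit) argument: the corollary is simply the composite of the isomorphism $E_\beta\cong \overline{M}$ over $M$ with the preceding lemma $\sDist(\overline{M}\to M)\cong \on{NS}_\beta$. One small remark: invoking Proposition~\ref{prop:functoriality_bundle_scenarios} is slightly mismatched, since that result is phrased for the category of principal $K$-bundle scenarios and $\overline{M}$ is not equipped with a $K$-action; it is cleaner (and sufficient) to observe directly that an isomorphism $\xi$ in $(\catsSet)_{/M}$ induces an isomorphism $D_R(\xi)$ over $D_R(M)$, hence a bijection $\sDist(\pi_\beta)\cong \sDist(\overline{M}\to M)$ by post-composition.
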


The vertex enumeration problem for the polytope $\sDist(E_\beta)$ is solved in \cite{okay2022mermin}. Up to combinatorial automorphisms of the polytope there are two kinds of vertices. Both of them are 
{relatively}
deteministic in the sense of Definition \ref{def:local contextuality}. The trivializing morphisms are given as follows:
\begin{itemize}
\item Let $U$ denote the $1$-dimensional simplicial subset of $M$ consisting of the $1$-simplices $u,c,a$. The inclusion $\phi:U\to M$ is 
trivializing. The vertex corresponding to this type restricts to a deterministic distribution on $U$.

\item Let $V$ denote the $2$-dimensional simplicial set consisting of the two non-degenerate simplices $\eta$ and $\xi$. Then $\phi:V\to X$ is trivializing and restriction of the vertex to $V$ is deterministic.
\end{itemize}
These trivializing morphisms uniquely determine the vertices. Moreover, in \cite{okay2023rank} a $2$-dimensional version of Corollary \ref{cor:twist and collapse cocycle} is used to solve the vertex enumeration problem.


\addcontentsline{toc}{section}{References}

\bibliographystyle{alpha}
\bibliography{bib.bib}

\appendix

\section{Monoidal categories and functors}

In this section, we will show that various functors and categories used in the main body are (lax) (symmetric) monoidal. 

\begin{pro}\label{prop:D_lax_mon_set_set}
	The functor $D:\catSet\to \catSet$ is lax symmetric monoidal with respect to the cartesian monoidal structure. The structure maps are 
	\[
	\begin{tikzcd}[row sep=1em]
		D(X)\times D(Y) \arrow[r,"m_{X,Y}"] & D(X\times Y) \\
		\ast \arrow[r,"\varepsilon"] & D(\ast) 
	\end{tikzcd}
	\]
	where $\ast$ denotes the singleton, and $\epsilon$ is the unique isomorphism between singletons. 
\end{pro}

\begin{proof}
	We first note that $m$ is natural. Given maps of sets $f:X\to Z$ and $g: Y\to W$ and $(p,q)\in D(X)\times D(Y)$ we have 
	\[
	\begin{aligned}
		m(D(f)\times D(g)(p,q))(z,w)&= D(f)(p)(z)\cdot D(g)(q)(w)\\
		&= \left(\sum_{x\in f^{-1}(z)} p(x)\right)\cdot \left(\sum_{y\in g^{-1}(w)} q(y)\right)\\
		&=\sum_{\substack{x\in f^{-1}(z)\\ y\in g^{-1}(w)}} p(x)q(y)\\
		&= \sum_{(x,y)\in(f\times g)^{-1}(z,w)}p(x)q(y)\\
		&= D(f\times g)(m(p,q))(z,w). 
	\end{aligned}
	\] 
	We then check associativity. Given $(p,q,r)\in D(X)\times D(Y)\times D(Z)$, we have 
	\[
	m((m\times \on{id}_{D(Z)})(p,(q,r)))((x,y),z)=(p(x)q(y))r(z)
	\]
	and 
	\[
	m((\on{id}_{D(X)}\times m)((p,q),r))(x,(y,z))= p(x)(q(y)r(z))
	\]
	and so the associativity of multiplication completes the argument. Unitality is immediate, and symmetry follows from the symmetry of multiplication on $[0,1]$.
\end{proof}

The following, though inessential for the present paper, is an immediate corollary, closely related to the work in \cite{HOSconvex}. For this reason, we make free use of the notion of biconvexity and the concomitant tensor product of convex sets developed in op. cit. 

\begin{cor}
	The lax symmetric monoidal structure on the functor $D:\catSet\to \catSet$ induces a lax symmetric monoidal structure on the functor $D:\catSet\to (\sf{CSet},\otimes)$. 
\end{cor}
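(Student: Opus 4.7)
The plan is to promote each piece of the lax symmetric monoidal structure from Proposition \ref{prop:D_lax_mon_set_set} along the forgetful functor $U:\sf{CSet}\to \catSet$, using the universal property of the convex tensor product $\otimes$ developed in \cite{HOSconvex}. First I would observe that the functor $D:\catSet\to\catSet$ factors canonically through $\sf{CSet}$: the set $D(X)$ carries a convex structure given pointwise by $(\lambda p + (1-\lambda) q)(x) := \lambda p(x) + (1-\lambda) q(x)$, and for any map of sets $f:X\to Y$ the induced map $D(f)$ is convex-linear, since pushforward of distributions commutes with convex combinations. Likewise the unit map $\varepsilon:\ast\to D(\ast)$ is trivially a morphism of convex sets, since $\ast$ is both the terminal convex set and the monoidal unit for $\otimes$.

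The key step is to upgrade the natural transformation $m_{X,Y}:D(X)\times D(Y)\to D(X\times Y)$ to a morphism out of the convex tensor product. For this I would verify that $m_{X,Y}$ is biconvex, i.e., separately convex-linear in each variable. This is a direct computation: for fixed $q\in D(Y)$ and $(x,y)\in X\times Y$,
\[
m(\lambda p_1 + (1-\lambda)p_2,\, q)(x,y) = \bigl(\lambda p_1(x) + (1-\lambda)p_2(x)\bigr)\,q(y)
\]
which equals $\lambda m(p_1,q)(x,y) + (1-\lambda)m(p_2,q)(x,y)$, and symmetrically in the second variable. By the universal property of $\otimes$ in $\sf{CSet}$, $m_{X,Y}$ then factors uniquely through a morphism of convex sets $\widetilde{m}_{X,Y}: D(X)\otimes D(Y) \to D(X\times Y)$, which I take to be the new structure map.

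Finally I would verify that the associator, left and right unitor, and symmetry diagrams for the lifted structure commute. Here the key remark is that the forgetful functor $U:\sf{CSet}\to\catSet$ is faithful and monoidal (sending $\otimes$ to the underlying set of the tensor, with the universal biconvex map $D(X)\times D(Y)\to D(X)\otimes D(Y)$ jointly surjective in a suitable sense). Consequently each coherence diagram in $\sf{CSet}$ maps to the corresponding diagram in $\catSet$, which already commutes by Proposition \ref{prop:D_lax_mon_set_set}; combined with the uniqueness clause in the universal property of $\otimes$, this forces the lifted diagrams to commute in $\sf{CSet}$. The only step that requires genuine content is the biconvexity of $m_{X,Y}$, and I expect no obstacle there beyond the brief calculation above; the coherence verifications are then a formal consequence of the universal property, not an independent computation.
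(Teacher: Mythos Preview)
Your proposal is correct and follows essentially the same approach as the paper: the paper's proof is a single sentence noting that since the maps $m_{X,Y}$ are all biconvex, the result follows immediately from the universal property of $\otimes$. You have simply spelled out in more detail what the paper leaves implicit---the factorization of $D$ through $\sf{CSet}$, the biconvexity computation, and the coherence argument via uniqueness in the universal property.
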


\begin{proof}
	Since the maps $m_{X,Y}$ are all biconvex, this follows immediately from the universal property of $\otimes$. 
\end{proof}

\begin{thm}\label{thm:Bun_SMC}
	For a simplicial Abelian group $K$, the product $-\otimes_K-$ equips $\sf{Bun}_K(X)$ with the structure of a symmetric monoidal groupoid. The structure maps are given by unitors
	\[
	\begin{tikzcd}[row sep=0em]
		(K\times X)\otimes_K E \arrow[r, "\lambda_E"] & E \\
		{	[(k,x),e]} \arrow[r, mapsto] & k\cdot e
	\end{tikzcd}
	\]
	and 
	\[
	\begin{tikzcd}[row sep=0em]
		E \otimes_K (K\times X) \arrow[r,"\rho_E"] & E \\
		{[e,(k,x)]} \arrow[r,mapsto]& k\cdot e;
	\end{tikzcd}
	\]
	associators
	\[
	\begin{tikzcd}[row sep=0em]
		E\otimes_K(F\otimes_K G) \arrow[r,"\alpha_{E,F,G}"] & (E\otimes_K F)\otimes_K G \\
		{[e,[f,g]]}\arrow[r,mapsto] & {[[e,f],g]};
	\end{tikzcd}
	\]
	and the braiding 
	\[
	\begin{tikzcd}[row sep=0em]
		E\otimes_K F \arrow[r,"\beta_{E,F}"] & F\otimes_K E \\
		{[e,f]} \arrow[r,mapsto] & {[f,e]}.
	\end{tikzcd}
	\]
\end{thm}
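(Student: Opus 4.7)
The plan is to verify in turn (i) well-definedness of each candidate structure map on equivalence classes, (ii) that each such map is a simplicial map of bundles over $X$ (hence automatically an isomorphism by freeness of the $K$-action), (iii) naturality of the unitors, associators, and braiding, and (iv) the coherence axioms of a symmetric monoidal category (pentagon, triangle, hexagon, and $\beta \circ \beta = \on{id}$). All of the structure maps have explicit bracket-rearranging formulas, so each verification reduces to a short computation on representatives; the crucial input throughout is the commutativity of $K$.

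First I would handle well-definedness. For the left unitor, two representatives $[(k,x),e]$ and $[(hk,x),h^{-1}e]$ of the same class must have the same image, which requires $(hk)\cdot h^{-1}e = k\cdot e$; this is precisely commutativity in $K$. The right unitor and associator are analogous. For the braiding $\beta_{E,F}([e,f])=[f,e]$, one needs $[f,ke] = [kf,e]$ in $F\otimes_K E$, which is the defining relation of the tensor product. Given well-definedness, each map is clearly simplicial and commutes with the projections to $X$. Since all total spaces carry free $K$-actions with quotient $X$, any map of principal $K$-bundles over $X$ is automatically an isomorphism, so all four structure maps are invertible.

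Next I would check naturality of $\lambda$, $\rho$, $\alpha$, $\beta$ against bundle isomorphisms $E\to E'$, $F\to F'$, $G\to G'$; this is immediate from the definitions since the maps only act by permuting bracket components and multiplying by $K$-elements, and any bundle isomorphism is $K$-equivariant. Then I would tackle the coherence diagrams. The triangle axiom follows from $[e,[(k,x),g]] \mapsto [k\cdot e, g]$ along both paths, the pentagon from the fact that both compositions $\alpha \circ \alpha$ send $[e,[f,[g,h]]]$ to $[[[e,f],g],h]$, the hexagon from tracking $[e,[f,g]]$ through both sides, and $\beta \circ \beta = \on{id}$ from direct inspection. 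In each case commutativity of $K$ ensures that any $K$-factors picked up when identifying representatives along the two paths agree.

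The main obstacle is simply bookkeeping in the coherence axioms: one must keep careful track of how an element $k\in K$ that ``appears'' on one factor can be absorbed on a different factor via the equivalence relation, and confirm that every such rearrangement across the pentagon or hexagon returns the same representative. Because $K$ is Abelian this is entirely routine — a non-Abelian $K$ would already break well-definedness of the braiding — so no genuine obstruction arises, and the proof is essentially a careful sequence of elementary verifications.
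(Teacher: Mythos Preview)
Your proposal is correct and covers all the necessary ingredients. The paper's proof follows essentially the same strategy but with one economizing observation you do not make: since the associator and braiding are induced from those of the Cartesian monoidal structure on $\catSet$ via the quotient maps $E\times_X F\to E\otimes_K F$, the pentagon and hexagon identities hold automatically once well-definedness is established, so there is no need to chase those diagrams by hand. The paper therefore only checks well-definedness of $\alpha$ and $\beta$ explicitly (with a computation very close to yours) and then verifies the triangle identity directly, noting that the two paths give $[k\cdot e,f]$ and $[e,k\cdot f]$ respectively, which coincide by the defining equivalence relation. Your more exhaustive plan of verifying each coherence diagram directly certainly works, but the paper's shortcut explains \emph{why} those verifications are guaranteed to succeed: the coherence laws already hold on representatives in $(\catSet,\times)$, and the quotient maps are compatible with the structure maps on both sides.
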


\begin{proof}
	Since the associators and braiding are induced by those of the Cartesian monoidal structure on $\catSet$, the pentagon and hexagon identities will follow from the well-definedness of these associators and braiding. To show well-definedness, suppose $[e_1,[f_1,g_1]]=[e_2,[f_2,g_2]]$. Then there is $k\in K$ such that $ke_1=e_2$ and $k^{-1}\cdot[f_1,g_1]=[f_2,g_2]$. The latter relation implies that there is $h\in K$ such that $hk^{-1}f_1=f_2$ and $h^{-1}g_1=g_2$. Setting $\ell= h$ and $a=kh^{-1}$, we thus see that 
	\[
	a\ell e_1=e_2,\quad a^{-1}f_1=f_2,\quad \ell^{-1}g_1=g_2
	\]
	so that $[[e_1,f_1],g_1]=[[e_2,f_2],g_2]$ and the associator is well-defined. Similarly, for the braiding, if $[e_1,f_1]=[e_2,f_2]$, then there is $k\in K$ such that $ke_1=e_2$ and $k^{-1}f_1=f_2$. Setting $h=k^{-1}$ shows that $[f_1,e_1]=[f_2,e_2]$, and so the braiding is well-defined.
	
	It thus remains only for us to check the triangle identity. We show that the diagram 
	\[
	\begin{tikzcd}
		(E\otimes_K (K\times X))\otimes_K F\arrow[rr,"\alpha_{E,K\times X,F}"]\arrow[dr,"{\rho_E\otimes\on{id}}"'] & & E\otimes_K ((K\times X)\otimes_K F)\arrow[dl,"{\on{id}\otimes\lambda_F}"]\\
		& E\otimes_K F &
	\end{tikzcd}
	\]
	commutes. Starting with $[[e,(k,x)],f]$, applying the associator and $\on{id}\times \lambda_F$ yields $[e,k\cdot f]$. Applying, on the other hand, $\rho_E$, yields $[k\cdot e,f]$. By the definition of our equivalence relation $[k\cdot e,f]=[e,k\cdot f]$, and so the diagram commutes. 
\end{proof}

\begin{pro}\label{prop:Bun_sym_mon_funct_Set}
	The functor 
	\[
	\begin{tikzcd}
		\sDist: &[-3em] \sf{Bun}_K(X) \arrow[r] & \catSet
	\end{tikzcd}
	\]
	is lax symmetric monoidal with respect to the product $\otimes_K$ on $\sf{Bun}_K(X)$ and the cartesian monoidal structure on $\catSet$. The structure maps are 
	\[
	\begin{tikzcd}
		\mu_{E,F}: &[-3em] \sDist(\pi_E)\times \sDist(\pi_F) \arrow[r] & \sDist(\pi_{E\otimes_K F}) 
	\end{tikzcd}
	\]
	and the map
	\[
	\begin{tikzcd}
		\varepsilon : &[-3em] \ast \arrow[r] & \sDist(K\times X\to X)
	\end{tikzcd}
	\]
	which sends the unique element to the delta distribution $\delta^{\varphi_0}$ on the zero section. 
\end{pro}

\begin{proof}
	We check only the left unitality diagram. The right unitality diagram is similar. We compute the composite 
	\[
	\begin{tikzcd}
		\ast\times \sDist(\pi_E)\arrow[r,"\epsilon\times \on{id}"] & \sDist(X\times K)\times \sDist(\pi_E) \arrow[r,"\mu"] & \sDist((X\times K)\otimes_K E) \arrow[r,"\lambda_E"] & \sDist(\pi_E)
	\end{tikzcd}
	\]
	acting on $p\in \sDist(\pi_E)$. This is, effectively, the product of $(\delta^{\varphi_0},p)$. Evaluated on $e\in E$, we compute
	\[
	\begin{aligned}
		\lambda_E(\mu(\delta^{\varphi_0},p))_x(e)& = \sum_{k+y=e}\delta^{\varphi_0}_x(k,x)p_x(y) \\
		&= \sum_{0+y=e} p_x(y)=p_x(e)
	\end{aligned}
	\]
	so that the composite is the (Cartesian) unitor on $\sDist(\pi_E)$ as desired.
	
	Since $\mu$ is directly induced by the multiplication $m$ and the quotient map, the associativity follows easily from the associativity proven in Proposition \ref{prop:D_lax_mon_set_set}. Symmetry follows directly from the symmetry of multiplication on $[0,1]$.  
\end{proof}

\begin{pro}\label{pro:Cl_lax_mon}
	The maps $\ell$ and $u$ equip the functor
	\[
	\begin{tikzcd}[row sep=0em]
		\on{Cl}:&[-3em]\sf{Bun}_K^0(X) \arrow[r] & \catSet
	\end{tikzcd}
	\]
	with the structure of a lax symmetric monoidal functor.
\end{pro}

\begin{proof}
	It is sufficient to show that $\overline{\ell}$ and the zero section equip $\on{Sec}$ with a lax monoidal structure, since $\on{Cl}$ will then simply be $D_R\circ \on{Sec}$.
	
	For associativity, it is thus sufficient to check associativity for $\overline{\ell}$. Given sections $(\varphi,\psi,\zeta)\in \Sec(\pi_E)\times \Sec(\pi_F)\times \Sec(\pi_G)$, we see that both 
	\[
	\overline{\ell}((\overline{\ell}\times\on{id}_{\Sec(\pi_G)})(\varphi,\psi,\zeta))
	\] 
	and  
	\[
	\overline{\ell}((\on{id}_{\Sec(\pi_E)}\times\overline{\ell})(\varphi,\psi,\zeta))
	\] 
	can be equivalently identified with the composite of 
	\[
	\begin{tikzcd}
		(\varphi,\psi,\zeta): &[-3em] X \to E\times_X F\times_X G
	\end{tikzcd}
	\]
	with the quotient map to $E\otimes_KF\otimes_KG$. Symmetry follows similarly. 
	
	For unitality, we only write out the check of the left unitality diagram. The composite 
	\[
	\begin{tikzcd}
		\ast\times \Sec(\pi_E)\arrow[r,"\overline{u}\times \on{id}"] & \Sec(X\times K)\times \Sec(\pi_E) \arrow[r,"\overline{\ell}"] & \Sec((X\times K)\otimes_K E) \arrow[r,"\lambda_E"] & \Sec(\pi_E)
	\end{tikzcd}
	\]
	We see that a section $\varphi$ is sent to the section $\psi$ defined by 
	\[
	\psi(x)=\lambda_E([1,\varphi(x)])=\varphi(x)
	\]
	precisely as desired.
\end{proof}

\begin{pro}\label{prop:Twist_sym_mon_funct_bun}
	Viewing the commutative monoid $\on{Twist}_K(X)$  of $K$-valued twisting functions on $X$ as a discrete symmetric monoidal category, the functor  
	\[
	\begin{tikzcd}[row sep=0em]
		\Twist_K(X) \arrow[r] & \sf{Bun}_K(X) \\
		\eta \arrow[r,mapsto] & K\times_\eta X
	\end{tikzcd}
	\]
	is symmetric monoidal with structure maps given by the canonical isomorphism 
	\[
	K\times_0 X \cong K\times X 
	\]
	and the isomorphisms 
	\[
	(K\times_{\eta} X)\otimes_K(K\times_\xi X)\cong K\times_{\eta+\xi}X 
	\]
	constructed in the proof of Proposition \ref{pro:Addition_of_twists_tensor}.
\end{pro}

\section{Colimits and finite simplicial sets}

\begin{lem}[Eilenberg-Zilber Lemma]\label{lem:E-Z}
	Let $X$ be a simplicial set, and let $\sigma\in X_n$ be a simplex. Then there is a unique surjective map $\phi:[n]\to [k]$ in $\Delta$ and a unique non-degenerate simplex $\tau\in X_k$ such that $\sigma= \phi^\ast(\tau)$.
\end{lem}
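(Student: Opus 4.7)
I will prove existence and uniqueness separately, with existence being essentially immediate and uniqueness being the substantive part.

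For \emph{existence}, I will argue by minimality. Among all pairs $(\phi,\tau)$ with $\phi:[n]\twoheadrightarrow[k]$ surjective in $\Delta$ and $\tau\in X_k$ satisfying $\phi^\ast(\tau)=\sigma$, the identity pair $(\mathrm{id}_{[n]},\sigma)$ shows the set is nonempty, so one can choose a pair with $k$ minimal. If the chosen $\tau$ were degenerate, then $\tau=\psi^\ast(\mu)$ for some surjection $\psi:[k]\twoheadrightarrow[\ell]$ with $\ell<k$ and $\mu\in X_\ell$, and the composition $(\psi\circ\phi)^\ast(\mu)=\sigma$ would contradict minimality of $k$.

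For \emph{uniqueness}, suppose $\sigma=\phi^\ast(\tau)=\phi'^\ast(\tau')$ with $\phi:[n]\twoheadrightarrow[k]$, $\phi':[n]\twoheadrightarrow[k']$ surjections and $\tau\in X_k$, $\tau'\in X_{k'}$ both non-degenerate. The key fact about $\Delta$ that I will use is that every surjection $\phi$ in $\Delta$ admits a section $s$ (an order-preserving injection with $\phi\circ s=\mathrm{id}$), and that every map in $\Delta$ factors uniquely as an epimorphism followed by a monomorphism. Choose a section $s$ of $\phi$; then
\[
\tau=s^\ast\phi^\ast(\tau)=s^\ast\phi'^\ast(\tau')=(\phi' s)^\ast(\tau').
\]
Factoring $\phi' s=\iota\pi$ with $\pi$ epi and $\iota$ mono, we get $\tau=\pi^\ast\iota^\ast(\tau')$; since $\tau$ is non-degenerate, $\pi$ must be an identity, so $\phi' s$ is injective. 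Hence $k\leq k'$, and by the symmetric argument using a section of $\phi'$, $k'\leq k$. Thus $k=k'$, and $\phi' s:[k]\to[k]$ is an order-preserving injection, hence $\phi' s=\mathrm{id}_{[k]}$. Substituting back gives $\tau=\tau'$.

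It remains to show $\phi=\phi'$. The main tool is that for any $j\in[n]$ with $\phi(j)=i$, one can construct a section $s$ of $\phi$ with $s(i)=j$: since $\phi$ is order-preserving and surjective, each preimage $\phi^{-1}(\ell)$ is a nonempty interval, and these intervals are linearly ordered, so choosing the largest element of $\phi^{-1}(\ell)$ below $j$ for $\ell<i$ and the smallest element of $\phi^{-1}(\ell)$ above $j$ for $\ell>i$ (and $j$ itself for $\ell=i$) gives a valid section. By the previous paragraph, this $s$ is also a section of $\phi'$, so $\phi'(j)=\phi'(s(i))=i=\phi(j)$. This holds for every $j\in[n]$, so $\phi=\phi'$.

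The main obstacle is the final step: reducing $\phi=\phi'$ to the statement $\phi'\circ s=\mathrm{id}$ for every section $s$ of $\phi$, which requires the flexibility of choosing sections passing through prescribed points. Everything else is bookkeeping with the epi-mono factorization in $\Delta$.
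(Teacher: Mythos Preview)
Your proof is correct and complete. The paper itself does not prove this lemma at all: it simply states ``There are many proofs in the literature'' and cites a textbook reference (Fritsch--Piccinini, Thm.~4.2.3). So there is nothing to compare your argument against within the paper; you have supplied a full proof where the paper defers to the literature.

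For what it is worth, the argument you give is essentially the standard one. The existence step via minimality is routine. In the uniqueness step, the one point worth making explicit (and you do handle it correctly) is that the conclusion $\phi's=\mathrm{id}_{[k]}$ holds for \emph{every} section $s$ of $\phi$, not just the first one chosen: the computation $\tau=(\phi's)^\ast(\tau')$ and the non-degeneracy of $\tau$ go through unchanged for any section, and once $k=k'$ is known, injectivity of $\phi's:[k]\to[k]$ forces it to be the identity. Your final step then exploits the flexibility of sections in $\Delta$ to conclude $\phi=\phi'$, which is exactly the right idea.
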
 

\begin{proof}
	There are many proofs in the literature. A textbook reference is \cite[Thm. 4.2.3]{FritschPiccinini}.
\end{proof}

\begin{lem}\label{lem:fin_sset_cat_simps_cofinal} 
	Let $X$ be a simplicial set of dimension $n$. Denote by $(\Delta_{/X})^{\leq n}$ the full subcategory of the category of simplices on the simplices of dimension less than or equal to $n$. Then the inclusion 
	\[
	\begin{tikzcd}
		(\Delta_{/X})^{\leq n}\arrow[r] & \Delta_{/X}	
	\end{tikzcd}
	\]
	is cofinal. 
\end{lem}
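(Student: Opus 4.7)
The plan is to reduce the cofinality statement to the standard connectedness criterion: since the lemma is invoked to compute the limit of $D_R \circ F_\pi$ over $(\Delta_{/X})^\op$ as a limit over $\bigl((\Delta_{/X})^{\leq n}\bigr)^\op$, it suffices to show that for every object $\sigma \in \Delta_{/X}$, the comma category whose objects are pairs $(\tau, f \colon \sigma \to \tau)$ with $\dim \tau \leq n$ is non-empty and connected. (Connectedness is self-dual, so either direction of the comma category works.)

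For non-emptiness, I would immediately invoke the Eilenberg-Zilber Lemma (Lemma \ref{lem:E-Z}) applied to $\sigma \in X_m$. This yields unique data $\phi\colon [m] \twoheadrightarrow [k]$ surjective and $\tau \in X_k$ non-degenerate with $\sigma = \phi^\ast \tau$. Since $X$ has dimension $n$, every non-degenerate simplex has dimension at most $n$, so $k \leq n$ and the pair $(\tau, \phi)$ is a canonical object of the comma category.

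For connectedness, given any other object $(\tau', f')$ with $\tau' \in X_{k'}$, $k' \leq n$, and $f' \colon [m] \to [k']$ satisfying $f'^\ast \tau' = \sigma$, I would factor $f'$ in $\Delta$ as $f' = \iota \circ \phi'$ with $\phi' \colon [m] \twoheadrightarrow [\ell]$ surjective and $\iota \colon [\ell] \hookrightarrow [k']$ injective. The intermediate simplex $\iota^\ast \tau'$ has dimension $\ell \leq k' \leq n$, so $(\iota^\ast \tau', \phi')$ is itself an object of the comma category. There is a direct morphism to $(\tau', f')$ via $\iota$, since $\iota \circ \phi' = f'$ and $\iota^\ast \tau' = \iota^\ast \tau'$. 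To produce a morphism to $(\tau, \phi)$, I would apply Eilenberg-Zilber to $\iota^\ast \tau'$, obtaining $\iota^\ast \tau' = \rho^\ast \hat\tau$ with $\rho$ surjective and $\hat\tau$ non-degenerate; substituting gives $\sigma = \phi'^\ast \iota^\ast \tau' = (\rho\phi')^\ast \hat\tau$, and by the uniqueness clause of EZ applied to $\sigma$, I conclude $\hat\tau = \tau$ and $\rho \circ \phi' = \phi$. Then $\rho$ satisfies $\rho^\ast \tau = \iota^\ast \tau'$, providing the required morphism $(\iota^\ast \tau', \phi') \to (\tau, \phi)$ and completing the zigzag.

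The main delicate point will be the careful use of EZ uniqueness to simultaneously identify the non-degenerate part $\hat\tau$ with $\tau$ and the surjection $\rho\phi'$ with $\phi$; everything else reduces to bookkeeping of epi-mono factorizations in $\Delta$ and the fact that faces of simplices of dimension $\leq n$ again have dimension $\leq n$. I do not anticipate any genuine obstacle beyond this; the full proof should fit in a short paragraph once the EZ comparison is set up.
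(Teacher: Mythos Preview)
Your proposal is correct and matches the paper's own proof essentially line for line: both use the Eilenberg--Zilber representative $(\tau,\phi)$ of $\sigma$ as a basepoint, factor an arbitrary arrow $f'$ into a surjection followed by an injection to produce an intermediate object in the small subcategory, and then invoke the uniqueness clause of Eilenberg--Zilber on that intermediate simplex to build the zigzag back to $(\tau,\phi)$. The only cosmetic difference is that the paper separates out the trivial case $\dim\sigma\leq n$ (where the comma category has an initial object), whereas your argument handles it uniformly.
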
 

\begin{proof}
	Let $\sigma\in X_k$ be a $k$-simplex of $X$. It will suffice for us to show that $(\Delta_{/X})^{\leq n}\times_{\Delta_{/X}}(\Delta_{/X})_{\sigma/}$ is non-empty and connected. If $k\leq n$, then this is immediate, since $(\Delta_{/X})^{\leq n}\times_{\Delta_{/X}}(\Delta_{/X})_{\sigma/}$ is then canonically isomorphic to $\left((\Delta_{/X})^{\leq n}\right)_{\sigma/}$. 
	
	In the case where $k\geq n$, let $(\phi,\tau)$ be the unique decomposition of $\sigma$ guaranteed by the Eilenber-Zilber Lemma \ref{lem:E-Z}. Then $\phi:\sigma \to \tau$ is an element of $(\Delta_{/X})^{\leq n}\times_{\Delta_{/X}}(\Delta_{/X})_{\sigma/}$. Now suppose there is another element $\psi:\sigma\to \rho$.  
	
	We can factor $\psi=\psi_2\circ \psi_1$, where $\psi_2$ is injective, and $\psi_1$ is surjective. Then $\psi_1^\ast(\psi_2^\ast(\rho))=\sigma$. However, since $\psi_2$ is injective, this means that the dimension of $\psi_2^\ast(\rho)$ is less than or equal to that of $\rho$. Thus  $\psi_2^\ast(\rho)$ is an object in $(\Delta_{/X})^{\leq n}$, and so $\psi_1:\sigma\to \psi_2^\ast(\rho)$ defines an object in $(\Delta_{/X})^{\leq n}\times_{\Delta_{/X}}(\Delta_{/X})_{\sigma/}$
	
	Then we see that the commutative diagram 
	\[
	\begin{tikzcd}
		& \psi_2^\ast(\rho)\arrow[dd,"\psi_2"] \\
		\sigma\arrow[ur,"\psi_1"]\arrow[dr,"\psi"']	& \\
		& \rho 
	\end{tikzcd}
	\]
	defines a morphism in $(\Delta_{/X})^{\leq n}\times_{\Delta_{/X}}(\Delta_{/X})_{\sigma/}$. To show that $(\Delta_{/X})^{\leq n}\times_{\Delta_{/X}}(\Delta_{/X})_{\sigma/}$ is connected, it thus suffices to show that there is a morphism relating $\phi:\sigma\to \tau$ to $\psi_1:\sigma\to \psi_2^\ast(\rho)$. 
	
	If $\psi_2^\ast(\rho)$ is non-degenerate, then the uniqueness of the representation $(\phi,\tau)$ for $\sigma$ implies that $\psi_2^\ast(\rho)=\tau$ and $\psi_1=\phi$, so we are done. Otherwise, let $(\alpha,\gamma$ be the unique representation of $\psi_2^\ast(\rho)$ guaranteed by the Eilenberg-Zilber Lemma \ref{lem:E-Z}. Then note that $\alpha\circ \psi_1$ is surjective, $\gamma$ is non-degenerate, and 
	\[
	(\alpha\circ \psi_1)^\ast(\gamma)= \psi_1^\ast(\alpha^\ast(\gamma))=\psi_1^\ast(\psi_2^\ast(\rho))=\sigma. 
	\]
	Thus, the uniqueness guaranteed by the Eilenberg-Zilber Lemma \ref{lem:E-Z} implies that $\alpha\circ \psi_1=\phi$ and $\gamma=\tau$. We thus have a commutative diagram 
	\[
	\begin{tikzcd}
		& \psi_2^\ast(\rho)\arrow[dd,"\alpha"] \\
		\sigma\arrow[ur,"\psi_1"]\arrow[dr,"\phi"']	& \\
		& \gamma  
	\end{tikzcd}
	\]
	which defines a morphism in $(\Delta_{/X})^{\leq n}\times_{\Delta_{/X}}(\Delta_{/X})_{\sigma/}$. This completes the proof.
\end{proof}

\section{Relative coskeletalness}

We first recall classic coskeletalness. 

\begin{defn}
	For $k\geq 0$, the category $\Delta_{\leq k}$ is the full subcategory on the objects $[n]$  with $n\leq k$. The \emph{$k$\textsuperscript{th} truncation functor} is the functor 
	\[
	\begin{tikzcd}
		\on{tr}_k: &[-3em] \catsSet \arrow[r] & \on{Fun}(\Delta_{\leq k}^\op, \catSet)
	\end{tikzcd}
	\] 
	given by restriction along the inclusion $\Delta_{\leq k}\to \Delta$. The right adjoint to $\on{tr}_k$, given explicitly by right Kan extension, is called the \emph{$k$\textsuperscript{th} coskeleton functor}, and denoted by 
	\[
	\begin{tikzcd}
		\on{cosk}_k:&[-3em] \on{Fun}(\Delta_{\leq k}^\op, \catSet) \arrow[r] & \catsSet.
	\end{tikzcd}
	\]
	Since $\on{cosk}_k$ is the right Kan extension along a full subcategory inclusion, the adjunction counit $\on{tr}_k(\on{cosk}_k(X))\to X$ is an isomorphism. We will typically use the notation 
	\[
	\catsSet_{\leq k} =\on{Fun}(\Delta_{\leq k}^\op, \catSet).
	\]
\end{defn} 

\begin{defn}
	A simplicial set $X$ is called \emph{$k$-coskeletal} if, for every $n>k$ every extension problem 
		\[
	\begin{tikzcd}
		\partial \Delta^n \arrow[r]\arrow[d] & X  \\
		\Delta^n & 
	\end{tikzcd}
	\]
	has a unique solution. 
\end{defn}

\begin{pro}
	The adjunction unit $X \to \on{cosk}_k(\on{tr}_k(X))$ is an isomorphism if and only if $X$ is $k$-coskeletal. Thus, $\on{cosk}_k$ is fully faithful, and its essential image consists of the $k$-coskeletal simplicial sets. 
\end{pro}

\begin{proof}
	This is a classical fact. A proof may be found in, e.g., \cite[Lemma 10.11]{cosk}. 
\end{proof}

We can then turn to the relative case.

\begin{defn}
	Let $\pi:X\to Y$ be a map of simplicial sets. We call $X$ \emph{$\pi$-relative $k$-coskeletal} (or we call $\pi$ \emph{relatively $k$-coskeletal}) if, for any $n>k$, every lifting problem 
	\[
	\begin{tikzcd}
		\partial \Delta^n \arrow[r]\arrow[d] & X\arrow[d,"\pi"]  \\
		\Delta^n \arrow[r] & Y
	\end{tikzcd}
	\]
	has a unique solution. 
\end{defn}

We now concentrate on principal bundles with structure group $N(H)$ for $H$ an Abelian group. 

\begin{lem}\label{lem:N(H)-bundle_rel_2-cosk}
	Let $X$ be a simplicial set, and let $\eta$ be a $N(H)$-valued twisting function on $X$. Then $\pi_\eta: E_\eta\to X$ is relatively 2-coskeletal. 
\end{lem}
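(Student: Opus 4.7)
The plan is to verify relative $2$-coskeletalness by direct computation with the explicit description of $E_\eta = N(H)\times_\eta X$. Fix $n\geq 3$ and a lifting problem
\[
\begin{tikzcd}
\partial \Delta^n \arrow[r,"\alpha"]\arrow[d] & E_\eta \arrow[d,"\pi_\eta"] \\
\Delta^n \arrow[r,"x"'] & X.
\end{tikzcd}
\]
The bottom map corresponds to an $n$-simplex $x\in X_n$, and the top map corresponds to a family $\{(h^{(i)},d_ix)\}_{i=0}^{n}$ of $(n-1)$-simplices of $E_\eta$ with $h^{(i)}=(h^{(i)}_1,\ldots,h^{(i)}_{n-1})\in H^{n-1}$, satisfying the simplicial identities $d_j\circ d_i=d_i\circ d_{j+1}$ inside $E_\eta$ (which for the twisted face $d_0$ involves $\eta$). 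A lift is an element $h=(h_1,\ldots,h_n)\in H^n$ such that $d_i(h,x)=(h^{(i)},d_ix)$ for all $i$.

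For uniqueness, I read off the last two faces. The identity $d_n(h,x)=\bigl((h_1,\ldots,h_{n-1}),\,d_nx\bigr)$ forces $h_j=h^{(n)}_j$ for $1\leq j\leq n-1$. Then $d_{n-1}(h,x)=\bigl((h_1,\ldots,h_{n-2},\,h_{n-1}h_n),\,d_{n-1}x\bigr)$ forces $h_n=h_{n-1}^{-1}\,h^{(n-1)}_{n-1}$. Thus the lift, if it exists, is unique.

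For existence, I take $h_1,\ldots,h_n$ to be defined by the above formulas and must verify the remaining equations $d_i(h,x)=(h^{(i)},d_ix)$ for $0\leq i\leq n-2$. For $1\leq i\leq n-2$ the face $d_i$ is untwisted, so the equation to verify reduces to comparing the coordinates of $(h_1,\ldots,h_{i-1},h_ih_{i+1},h_{i+2},\ldots,h_n)$ against $h^{(i)}$; the required equalities are a direct consequence of the simplicial identities $d_j d_n = d_{n-1} d_j$ and $d_j d_{n-1}=d_{n-2}d_j$ (applied to boundary data) together with the defining formulas for $h_j$. Here I use crucially that $n\geq 3$ so that $n-1$ and $n$ are both strictly larger than $i$.

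The main obstacle is the $i=0$ equation, which requires
\[
h^{(0)}=(h_2,\ldots,h_n)\cdot \eta_n(x),
\]
and is the only step where the twisting enters. To establish this, I will combine three ingredients: (a) the compatibility $d_0(h^{(n)},d_nx)=d_n(h^{(0)},d_0x)$ of the boundary data (and its analogue at $n-1$), which relate the components of $h^{(0)}$ to those of $h^{(n)}$, $h^{(n-1)}$, and to $\eta_{n-1}(d_nx)$, $\eta_{n-1}(d_{n-1}x)$; (b) the twisting-function identities
\[
d_i\eta_n(x)=\eta_{n-1}(d_{i+1}x)\quad (i>0),\qquad d_0\eta_n(x)=\eta_{n-1}(d_1x)\cdot\eta_{n-1}(d_0x)^{-1},
\]
which express the coordinates of $\eta_n(x)$ in terms of the $\eta_{n-1}(d_ix)$; and (c) the already-established formulas $h_j=h^{(n)}_j$ and $h_n=h_{n-1}^{-1}h^{(n-1)}_{n-1}$. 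Matching coordinates and using commutativity of $H$, each entry of the equation $h^{(0)}=(h_2,\ldots,h_n)\cdot\eta_n(x)$ reduces to one of the compatibility relations in (a), completing existence.
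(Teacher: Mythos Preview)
Your approach is correct, but it takes a genuinely different route from the paper's proof. You work directly with the explicit twisted product $N(H)\times_\eta X$: uniqueness of the lift is read off from the untwisted faces $d_n$ and $d_{n-1}$, and existence is checked coordinate-by-coordinate, with the only delicate step being the twisted face $d_0$, which you correctly reduce to the twisting-function identities $d_i\eta_n(x)=\eta_{n-1}(d_{i+1}x)$ combined with the boundary compatibilities. This works (and indeed the last-coordinate check at $i=0$ goes through precisely because $d_{n-2}\eta_n(x)_{n-2}=\eta_n(x)_{n-2}\eta_n(x)_{n-1}$ while $d_{n-1}\eta_n(x)_{n-2}=\eta_n(x)_{n-2}$).

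The paper instead observes that the lifting problem can be pulled back along $\Delta^n\to X$, so that one is solving the same problem for the pullback bundle $E_\eta\times_X\Delta^n\to\Delta^n$. Since $\Delta^n$ is contractible, this bundle is trivial, isomorphic to $N(H)\times\Delta^n$, and the lifting problem becomes the extension problem $\partial\Delta^n\to N(H)$, which has a unique solution because the nerve of any category is $2$-coskeletal. Your computation is essentially unwinding this argument by hand: the twisting-function identities you invoke are exactly what encode the bundle isomorphism with the trivial bundle over $\Delta^n$, and the core uniqueness/existence statement is the $2$-coskeletalness of $N(H)$. The paper's proof is shorter and avoids the coordinate bookkeeping, but yours has the virtue of being entirely elementary and not appealing to the classification of bundles over contractible bases.
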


\begin{proof}
	Solving a lifting problem 
	\[
	\begin{tikzcd}
		\partial \Delta^n \arrow[r]\arrow[d] & E_\eta\arrow[d,"\pi_\eta"]  \\
		\Delta^n \arrow[r] & X
	\end{tikzcd}
	\]
	is equivalent to solving the lifting problem 
	\[
	\begin{tikzcd}
		\partial \Delta^n \arrow[r]\arrow[d] & E_\eta \times_{X} \Delta^n\arrow[d] \\
		\Delta^n \arrow[r,"="'] & \Delta^n 
	\end{tikzcd}
	\]
	Since every principal bundle over $\Delta^n$ is trivial, this is equivalent to solving the lifting problem
	\[
	\begin{tikzcd}
		\partial \Delta^n \arrow[r]\arrow[d] & N(H)\times \Delta^n\arrow[d] \\
		\Delta^n \arrow[r,"="'] & \Delta^n 
	\end{tikzcd}
	\]
	and thus, equivalent to showing that any diagram
	\[
	\begin{tikzcd}
		\partial \Delta^n \arrow[r]\arrow[d] & N(H) \\
		\Delta^n &
	\end{tikzcd}
	\]
	has a unique extension. However, since nerves of categories are always 2-coskeletal, this is immediate. 
\end{proof}

\begin{lem}\label{lem:rel_cosk_unique_ext}
	Let $\pi_E:E\to X$ and $\pi_F:F\to X$ be maps, and suppose that $F$ is $\pi_F$-relatively $k$-coskeletal. Suppose given a simplicial map $f:\on{tr}_k(E)\to \on{tr}_k(F)$ over $\on{tr}_k(X)$. Then $f$ uniquely extends to a map of simplicial sets over $X$. 
\end{lem}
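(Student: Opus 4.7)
The plan is to extend $f$ one dimension at a time by induction on $n \geq k$, using the unique lifting property afforded by relative $k$-coskeletalness of $\pi_F$ at each step. Suppose inductively that we have constructed a compatible extension $f^{(n-1)}:\on{tr}_{n-1}(E)\to \on{tr}_{n-1}(F)$ over $\on{tr}_{n-1}(X)$ for some $n-1 \geq k$, and we wish to define $f_n:E_n\to F_n$.

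Given an $n$-simplex $e \in E_n$, classify it by its attaching data: the faces $d_0(e),\dots,d_n(e)$ lie in $E_{n-1}$, so $f^{(n-1)}$ sends them to a compatible collection of $(n-1)$-simplices in $F$, which together assemble into a map $\partial\Delta^n\to F$. Composing $e:\Delta^n\to E$ with $\pi_E$ produces an $n$-simplex $\pi_E(e):\Delta^n\to X$, and by inductive compatibility this agrees with $\pi_F\circ f^{(n-1)}$ on the boundary. Thus we obtain a lifting problem
\[
\begin{tikzcd}
\partial \Delta^n \arrow[r]\arrow[d,hookrightarrow] & F\arrow[d,"\pi_F"]  \\
\Delta^n \arrow[r,"\pi_E(e)"'] & X
\end{tikzcd}
\]
which by the $\pi_F$-relative $k$-coskeletalness of $F$ has a unique solution; we define $f_n(e)$ to be this lift. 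By construction $\pi_F\circ f_n = \pi_E$ on $E_n$, and $f_n$ commutes with face maps by definition.

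The main content is then to verify compatibility with degeneracy maps, that is, $f_n(s_i(e'))= s_i(f_{n-1}(e'))$ for $e' \in E_{n-1}$ and $0\le i \le n-1$. This will follow from the uniqueness clause: both sides solve the same lifting problem. Indeed, $s_i(f_{n-1}(e'))$ has faces determined by the simplicial identities applied to $f_{n-1}(e')$, and these agree with the images under $f^{(n-1)}$ of the faces of $s_i(e')$ by inductive compatibility; its projection to $X$ is $s_i(\pi_E(e'))=\pi_E(s_i(e'))$. Hence $s_i(f_{n-1}(e'))$ is a valid filler, and by uniqueness equals $f_n(s_i(e'))$. This promotes $f^{(n-1)}$ to a compatible $f^{(n)}:\on{tr}_n(E)\to \on{tr}_n(F)$ over $\on{tr}_n(X)$.

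Iterating yields a map of simplicial sets $f^\infty: E\to F$ over $X$ extending $f$. Uniqueness of the extension follows from the same argument: any two extensions agree on $\on{tr}_k$ by hypothesis, and inductively at dimension $n>k$ both values at an $n$-simplex $e$ solve the same lifting problem above, hence coincide by the uniqueness clause of relative $k$-coskeletalness.
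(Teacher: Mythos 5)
Your proof is correct and takes essentially the same approach as the paper: extend $f$ to higher-dimensional simplices by solving, for each simplex $e$, the unique lifting problem given by $f$ on the boundary $\partial\sigma$ and $\pi_E\circ\sigma$ on $\Delta^n$, using the relative $k$-coskeletalness of $\pi_F$ for both existence and uniqueness. Your version is somewhat more careful than the paper's terse proof, spelling out the dimension-by-dimension induction and explicitly checking compatibility with degeneracy maps via uniqueness, whereas the paper only treats non-degenerate $(k+1)$-simplices and leaves the verification that the result is a simplicial map implicit.
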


\begin{proof}
	It will suffice to show that $f$ extends uniquely to the $(k+1)$-truncation. For each non-degenerate $(k+1)$ simplex $\sigma:\Delta^{k+1}\to E$, we get an associated lifting problem
	\[
	\begin{tikzcd}
		\partial \Delta^n \arrow[r,"f\circ \partial \sigma"]\arrow[d] & F\arrow[d,"\pi_F"] \\
		\Delta^n \arrow[r,"\pi_E\circ \sigma"'] & X
	\end{tikzcd}
	\]
	any $\widetilde{f}$ extending $f$ to the $(k+1)$ truncation must send $\sigma$ to a solution to this lifting problem. Since $F$ is relatively $k$-coskeletal, such a solution is unique, and so there is at most one such $\widetilde{f}$. On the other hand, defining $\widetilde{f}(\sigma)$ to be the unique solution to this lifting problem yields the desired extension. 
\end{proof}

\begin{cor}\label{cor:relcosk_iso_tr_iso}
	Let 
	\[
	\begin{tikzcd}
		E \arrow[rr,"f"]\arrow[dr,"{\pi_E}"'] && F\arrow[dl,"\pi_F"] \\
		& X &
	\end{tikzcd}
	\]
	be a map between relatively $k$-coskeletal simplicial sets over $X$. If $f$ is an isomorphism on $k$-truncations, then $f$ is an isomorphism. 
\end{cor}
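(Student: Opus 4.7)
The strategy is to invoke Lemma \ref{lem:rel_cosk_unique_ext} twice: once to produce a candidate inverse $g: F \to E$, and once more (applied with uniqueness) to verify that $g$ is in fact a two-sided inverse of $f$. The point is that relative $k$-coskeletalness rigidifies maps out of $E$ or $F$ (over $X$) so completely that they are determined by what they do on $k$-truncations.

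More concretely, first I would set $\bar{f} = \on{tr}_k(f): \on{tr}_k(E) \to \on{tr}_k(F)$, which is an isomorphism of simplicial sets over $\on{tr}_k(X)$ by hypothesis. Let $\bar{g} = \bar{f}^{-1}: \on{tr}_k(F) \to \on{tr}_k(E)$, which is likewise a map over $\on{tr}_k(X)$. Since $E$ is $\pi_E$-relatively $k$-coskeletal, Lemma \ref{lem:rel_cosk_unique_ext} (applied to the map $\bar{g}$ and the bundle $\pi_E$) produces a unique extension $g: F \to E$ of $\bar{g}$ as a map of simplicial sets over $X$.

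Next I would check that $g \circ f = \on{id}_E$ and $f \circ g = \on{id}_F$. For the first, note that $g \circ f: E \to E$ is a map over $X$, and its $k$-truncation is $\bar{g} \circ \bar{f} = \on{id}_{\on{tr}_k(E)}$. But $\on{id}_E$ is another map $E \to E$ over $X$ whose $k$-truncation is $\on{id}_{\on{tr}_k(E)}$. Since $E$ is $\pi_E$-relatively $k$-coskeletal, Lemma \ref{lem:rel_cosk_unique_ext} says that any such extension is unique, forcing $g \circ f = \on{id}_E$. The identical argument, now using the relative $k$-coskeletalness of $F$, gives $f \circ g = \on{id}_F$. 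Hence $f$ is an isomorphism.

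I do not expect any serious obstacle here: the entire content is bookkeeping around the uniqueness clause of Lemma \ref{lem:rel_cosk_unique_ext}. The only mild subtlety worth spelling out is that one must invoke relative $k$-coskeletalness of \emph{both} $E$ and $F$ (not just $F$, which is what is needed to produce $g$) to conclude uniqueness of the composites as extensions of identity maps on truncations.
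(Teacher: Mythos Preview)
Your proposal is correct and follows essentially the same approach as the paper's proof: extend the inverse on $k$-truncations via Lemma \ref{lem:rel_cosk_unique_ext}, then use the uniqueness clause of that lemma to verify that the composites agree with the identities. Your write-up is in fact more explicit than the paper's, which condenses the argument to two sentences.
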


\begin{proof}
	Since $f$ is an isomorphism on $k$-truncations, the previous lemma implies its inverse on $k$-truncations has a unique extension to a map $g:F\to E$ over $X$. The uniqueness of extensions immediately shows that $g\circ f=\on{id}_E$ and $f\circ g=\on{id}_F$. 
\end{proof}

To complete our discussion of relative coskeletalness, we can simply duplicate the characterization of coskeletal simplicial sets in terms of an adjunction. Since slices of presheaf categories are equivalent to presheaf categories on over-categories, we can identify 
\[
\catsSet\simeq \on{Fun}(\Delta_{/X}^\op,\catSet) \quad \text{and} \quad \catsSet_{\leq k}\simeq \on{Fun}((\Delta_{\leq k})_{/\on{tr}_k(X)}^\op,\catSet).
\]

\begin{defn}
	The \emph{$k$\textsuperscript{th} relative truncation functor over $X$} is the functor 
	\[
	\begin{tikzcd}
		\on{tr}_k^X:&[-3em] \catsSet\simeq \on{Fun}(\Delta_{/X}^\op,\catSet) \arrow[r]&  \on{Fun}((\Delta_{\leq k})_{/\on{tr}_k(X)}^\op,\catSet)\simeq \catsSet_{\leq k}
	\end{tikzcd}
	\]
	given by composing with the inclusion $(\Delta_{\leq k})_{/\on{tr}_k(X)}\to \Delta_{/X}$. The \emph{$k$\textsuperscript{th} relative coskeleton functor over $X$} is the right adjoint $\on{cosk}_k^X$ to $\on{tr}_k^X$, given by right Kan extension. Note that the adjunction counit is an isomorphism. 
\end{defn}

\begin{pro}
	For $X\in \catsSet$, $\pi:E\to X$ is relative $k$-coskeletal if and only if the adjunction unit
	\[
	\begin{tikzcd}
		E \arrow[rr]\arrow[dr,"\pi"'] & & \on{cosk}_k^X(\on{tr}_k^X(E))\arrow[dl,"\on{tr}(\pi)"]\\
		 & X & 
	\end{tikzcd}
	\]
	is an isomorphism. 
\end{pro}

\begin{proof}
	Since the $k$\textsuperscript{th} truncation of $\partial \Delta^n \to \Delta^n$ is an isomorphism whenever $n>k$, it is immediate that, for any $k$-truncated simplicial set $Y$ over $\on{tr}_k(X)$, the map $\on{cosk}_k^X(Y)\to X$ is relative $k$-coskeletal. Thus, if the adjunction unit is an isomorphism, $E$ is relative $k$-coskeletal. 
	
	On the other hand, since the counit of the adjunction is an isomorphism, the induced map 
	\[
	\begin{tikzcd}
		\on{tr}_k^X(E) \arrow[r] & \on{tr}_k^X(\on{cosk}_k^X(\on{tr}_k^X(E)))
	\end{tikzcd}
	\]
	is an isomorphism. Thus, by Cor \ref{cor:relcosk_iso_tr_iso},
	 if $E$ is relative $k$-coskeletal over $X$, the adjunction unit is an isomorphism. 
\end{proof}

\begin{cor}
	Let $X$ be a simplicial set, and $\underline{\pi}:\underline{E}\to \on{tr}_k(X)$ be a $k$-truncated simplicial set over $E$. There is a relative $k$-coskeletal simplicial set $\pi:E\to X$ over $X$ whose $k$-truncation is $\underline{\pi}:\underline{E}\to \on{tr}_k(X)$. Moreover, $E$ is unique up to unique isomorphism over $X$. 
\end{cor}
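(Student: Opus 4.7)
The plan is to exhibit the desired extension as the relative coskeleton and then invoke the preceding proposition to check the coskeletalness condition, with uniqueness following from the characterization of morphisms between relative $k$-coskeletal objects established in Lemma \ref{lem:rel_cosk_unique_ext}.

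For existence, I would simply set $\pi := \on{cosk}_k^X(\underline{\pi}): E \to X$, where $E := \on{cosk}_k^X(\underline{E})$. The counit of the adjunction $\on{tr}_k^X \dashv \on{cosk}_k^X$ is noted in the definition to be an isomorphism, which directly gives $\on{tr}_k^X(E) \cong \underline{E}$ over $\on{tr}_k(X)$. To see that $E$ is relative $k$-coskeletal over $X$, I would apply the preceding proposition: its criterion is that the adjunction unit $E \to \on{cosk}_k^X(\on{tr}_k^X(E))$ is an isomorphism, and this follows from the triangle identities together with the fact that the counit is an isomorphism (a standard consequence of $\on{tr}_k^X$ being restriction along a full subcategory inclusion, so that $\on{cosk}_k^X$ is fully faithful).

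For uniqueness, suppose $\pi':E'\to X$ is another relative $k$-coskeletal extension with a specified isomorphism $\varphi:\on{tr}_k^X(E')\cong \underline{E}$ over $\on{tr}_k(X)$. Applying $\on{cosk}_k^X$ yields an isomorphism $\on{cosk}_k^X(\on{tr}_k^X(E')) \cong \on{cosk}_k^X(\underline{E}) = E$ over $X$. By the preceding proposition, the adjunction unit $E' \to \on{cosk}_k^X(\on{tr}_k^X(E'))$ is an isomorphism since $E'$ is relative $k$-coskeletal. Composing these two isomorphisms produces an isomorphism $E'\cong E$ over $X$ whose $k$-truncation agrees with $\varphi$.

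Uniqueness of this isomorphism is the only mildly subtle point, and it follows from Lemma \ref{lem:rel_cosk_unique_ext}: any two isomorphisms $E' \to E$ over $X$ extending $\varphi$ on $k$-truncations must coincide, because that lemma says any morphism of simplicial sets over $X$ into a relative $k$-coskeletal object is uniquely determined by its restriction to the $k$-truncation. I do not expect a real obstacle here — the whole argument is formal once one has the adjunction proposition and Lemma \ref{lem:rel_cosk_unique_ext} in hand; the main thing to be careful about is keeping straight that $\on{cosk}_k^X$ is fully faithful (equivalently, its counit is an iso), which is what packages the existence half into a single line.
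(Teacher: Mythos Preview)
Your proof is correct and follows essentially the same approach as the paper: existence via $\on{cosk}_k^X$, and uniqueness from the characterization of relative $k$-coskeletal objects. The only cosmetic difference is that the paper cites Corollary \ref{cor:relcosk_iso_tr_iso} for uniqueness, whereas you unwind this by invoking the preceding proposition together with Lemma \ref{lem:rel_cosk_unique_ext} directly; since Corollary \ref{cor:relcosk_iso_tr_iso} is itself proved from Lemma \ref{lem:rel_cosk_unique_ext}, the arguments are the same at a finer level of detail.
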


\begin{proof}
	The uniqueness up to unique isomorphism is immediate from corollary \ref{cor:relcosk_iso_tr_iso}, and existence follows by applying $\on{cosk}_k^X$.  
\end{proof}

\section{Proofs by simplicial computations}\label{sec:simp_comp}

\begin{proof}[Proof (of Lemma \ref{lem:classifying_map_of_twisting_function})]
	Equivariance is immediate from the definition. We check the simplicial compatibilities explicitly. Firstly, for $i<n$, we compute 
	\[
	\begin{aligned}
		d_i\theta^\eta_n(x,g) &= d_i \left(g,\eta_n(x),\eta_{n-1}(d_0x),\ldots,\eta_1(d_0^{n-1}(x))\right)\\
		&= \left(d_ig_n,d_{i-1}\eta_n(x),\ldots,d_0(\eta_{n-i+1}(d_0^{i-1}x))\eta_{n-i}(d_0^{i}(x)),\ldots,\eta_1(d_0^{n-1}(x))\right)\\
		&=\left(d_ig_n,\eta_{n-1}(d_ix),\ldots,\eta_{n-i}(d_1d_0^{i-1}(x))\eta_{n-i}(d_0^{i}(x))^{-1}\eta_{n-i}(d_0^{i}(x)),\ldots,\eta_1(d_0^{n-1}(x))\right)\\
		&=\left(d_ig_n,\eta_{n-1}(d_ix),\ldots,\eta_{n-i}(d_0^{i}(x)),\ldots,\eta_1(d_0^{n-1}(x))\right)\\
		&=\left(d_ig_n,\eta_{n-1}(d_ix),\ldots,\eta_{n-i}(d_0^{i}(x)),\ldots,\eta_1(d_0^{n-2}(d_ix))\right)= \theta^\eta_{n-1}(d_ix,d_ig)
	\end{aligned}
	\] 
	If $i=n$, we compute
	\[
	\begin{aligned}
		d_n\theta^\eta_n(x,g) &= d_n \left(g,\eta_n(x),\eta_{n-1}(d_0x),\ldots,\eta_1(d_0^{n-1}(x))\right)\\
		&=\left(d_ng_n,d_{n-1}\eta_{n}(x),\ldots, d_1\eta_2(d_0^{n-2}x)\right)\\
		&= \left(d_ng_n,\eta_{n-1}(d_nx),\ldots, \eta_1(d_2d_0^{n-2}x)\right)\\
		&= \left(d_ng_n,\eta_{n-1}(d_nx),\ldots, \eta_1(d_0^{n-2}d_nx)\right)\\
		&= \theta_{n-1}^{\eta}(d_nx,d_ng)
	\end{aligned}
	\]
	Finally, we compute 
	\[
	\begin{aligned}
		s_i \theta_n^{\eta}(x,g) & = s_i\left(g,\eta_n(x),\eta_{n-1}(d_0x),\ldots,\eta_1(d_0^{n-1}(x))\right)\\
		&= \left(s_ig,s_{i-1}\eta_n(x),s_{i-2}\eta_{n-1}(d_0x),\ldots,s_0\eta_{n-i+1}(d_0^{i-1}(x)),1,\eta_{n-i}(d_0^i(x)),\ldots,\eta_1(d_0^{n-1}(x))\right)\\
		&= \left(s_ig,\eta_{n+1}(s_ix),\eta_{n}(s_{i-1}d_0x),\ldots,\eta_{n-i+2}(s_1d_0^{i-1}(x)),1,\eta_{n-i}(d_0^i(x)),\ldots,\eta_1(d_0^{n-1}(x))\right)\\
		&= \left(s_ig,\eta_{n+1}(s_ix),\eta_{n}(d_0s_ix),\ldots,\eta_{n-i+2}(d_0^{i-1}s_i(x)),1,\eta_{n-i}(d_0^i(x)),\ldots,\eta_1(d_0^{n-1}(x))\right)\\
		&= \left(s_ig,\eta_{n+1}(s_ix),\eta_{n}(d_0s_ix),\ldots,\eta_{n-i+2}(d_0^{i-1}s_i(x)),1,\eta_{n-i}(d_0^{i+1}(s_ix)),\ldots,\eta_1(d_0^{n}(s_ix))\right)\\
		&= \left(s_ig,\eta_{n+1}(s_ix),\eta_{n}(d_0s_ix),\ldots,\eta_{n-i+2}(d_0^{i-1}s_i(x)),\eta_{n-1+1}(s_0(d_0^i(x))),\eta_{n-i}(d_0^{i+1}(s_ix)),\ldots,\eta_1(d_0^{n}(s_ix))\right)\\
		&= \left(s_ig,\eta_{n+1}(s_ix),\eta_{n}(d_0s_ix),\ldots,\eta_{n-i+2}(d_0^{i-1}s_i(x)),\eta_{n-1+1}((d_0^i(s_ix))),\eta_{n-i}(d_0^{i+1}(s_ix)),\ldots,\eta_1(d_0^{n}(s_ix))\right)\\
		&= \theta^\eta_{n+1}(s_ix,s_ig).
	\end{aligned}
	\]
	so the map $\theta^\eta$ is simplicial. Moreover, since $\theta$ is levelwise an equivariant map of free $G$-sets, the diagram is pullback, completing the proof. 
\end{proof}

\begin{proof}[Proof (of Lemma \ref{lem:twist_fxn_norm_cocycle})]
		We need only check the defining relations. The inductive definition is made precisely so that 
	\[
	d_0(\eta(x))=\eta(d_1(x))-\eta(d_0(x)).
	\]
	For the other face map relations, note that they trivially hold for $\eta_2$. Then inductively suppose they hold for $k\leq n-1$. Then for $i>1$ we have  
	\[
	\begin{aligned}
		d_i(\eta_n(x)) & =\left(\gamma(d_3\cdots d_n(x)),d_{i-1}(\eta_{n-1}(d_1x)-\eta_{n-1}(d_0x))\right)\\
		& = \left(\gamma(d_3\cdots d_n(x)),(\eta_{n-2}(d_id_1x)-\eta_{n-2}(d_id_0x))\right)\\
		&= \left(\gamma(d_3\cdots d_n(x)),(\eta_{n-2}(d_1d_{i+1}x)-\eta_{n-2}(d_0d_{i+1}x))\right)\\
		&= \eta_{n-1}(d_{i+1}(x)) 
	\end{aligned}
	\]
	as desired. Finally, note that 
	\[
	\begin{aligned}
		d_1(\eta_n(x)) &= \left(\gamma(d_3\cdots d_n(x))+\gamma(d_3\cdots d_{n-1}(d_1x))-\gamma(d_3\cdots d_{n-1}(d_0x)), d_0(\eta_{n-1}(d_1x)-\eta_{n-1}(d_0x))\right)\\
		&= \left(\gamma(d_2d_4\cdots d_{n}x), d_0(\eta_{n-1}(d_1x)-\eta_{n-1}(d_0x))\right)\\
		&= \left(\gamma(d_3\cdots d_{n-1}(d_2x)),d_0(\eta_{n-1}(d_1x)-\eta_{n-1}(d_0x))\right)\\
		&= \left(\gamma(d_3\cdots d_{n-1}(d_2x)),\eta_{n-2}(d_1d_1x)-\eta_{n-2}(d_1d_0x)-\eta_{n-2}(d_0d_1x)+\eta_{n-2}(d_0d_0x)\right)\\
		&=  \left(\gamma(d_3\cdots d_{n-1}(d_2x)),\eta_{n-2}(d_1d_2x)-\eta_{n-2}(d_0d_2x)-\eta_{n-2}(d_0d_0x)+\eta_{n-2}(d_0d_0x)\right)\\
		&=\left(\gamma(d_3\cdots d_{n-1}(d_2x)),\eta_{n-2}(d_1d_2x)-\eta_{n-2}(d_0d_2x)\right)\\
		&= \eta_{n-1}(d_2(x)).
	\end{aligned}
	\]
	To see the relations with the $s_i$, we similarly argue inductively. First, in the case $i=0$
	\[
	\begin{aligned}
		\eta_n(s_0(x)) &= \left(\gamma(d_3\cdots d_n(s_0(x))),\eta_{n-1}(d_1s_0(x))-\eta_{n-1}(d_0s_0x) \right)\\
		& = \left(\gamma(d_3\cdots d_n(s_0(x))),0 \right)\\
		&= \left(\gamma(s_0d_2\cdots d_{n-1}(x)),0 \right)\\
		&= \left(0,0 \right)
	\end{aligned}
	\]
	where the final equality follows because $\gamma$ is normalized. For $i=1$, we compute 
	\[
	\begin{aligned}
		\eta_n(s_1(x)) &= \left(\gamma(d_3\cdots d_n(s_1(x))),\eta_{n-1}(d_1s_1(x))-\eta_{n-1}(d_0s_1x) \right)\\
		&= \left(\gamma(s_1d_2\cdots d_{n-1}((x))),\eta_{n-1}(x)-\eta_{n-1}(s_0d_0x) \right)\\
		&= \left(0,\eta_{n-1}(x)-0 \right)\\
		&=s_0(\eta_{n-1}(x)).
	\end{aligned}
	\]
	Finally, when $i>1$, we compute 
	\[
	\begin{aligned}
		\eta_n(s_i(x)) &= \left(\gamma(d_3\cdots d_n(s_i(x))),\eta_{n-1}(d_1s_i(x))-\eta_{n-1}(d_0s_ix) \right)\\
		& =\left(\gamma(d_3\cdots d_n(s_i(x))),\eta_{n-1}(s_{i-1}d_1(x))-\eta_{n-1}(s_{i-1}d_0x) \right)\\
		&=\left(\gamma(d_3\cdots d_{n-1}(x)),\eta_{n-1}(s_{i-1}d_1(x))-\eta_{n-1}(s_{i-1}d_0x) \right)\\
		&= \left(\gamma(d_3\cdots d_{n-1}(x)),s_{i-2}\eta_{n-1}(d_1(x))-s_{i-2}\eta_{n-1}(d_0x) \right)\\
		&= s_{i-1} \left(\gamma(d_3\cdots d_{n-1}(x)),\eta_{n-1}(d_1(x))-\eta_{n-1}(d_0x) \right)\\
	\end{aligned}
	\]
	completing the proof.
\end{proof}

\end{document}